\documentclass{article}
\usepackage[margin=1in]{geometry}

\usepackage{amsmath}
\usepackage{amsthm}
\usepackage{amsfonts}
\usepackage{booktabs} % For formal tables

\usepackage[ruled]{algorithm2e}
\usepackage{thmtools}
\usepackage{thm-restate}
\usepackage{subcaption}
\usepackage{bm}
\usepackage{hyperref}

\usepackage{cleveref}
\usepackage{mathtools}
\usepackage{nomencl}
\usepackage{xcolor,pifont}% http://ctan.org/pkg/pifont

\usepackage{mathrsfs}
\usepackage{etoolbox}

\makenomenclature
\renewcommand\nomgroup[1]{%
	\item[\bfseries
	\ifstrequal{#1}{A}{Bidder Attributes}{%
		\ifstrequal{#1}{B}{Mechanisms}{%
			\ifstrequal{#1}{C}{Price of Anarchy Analysis Symbols}{%
			\ifstrequal{#1}{D}{Bayes-Nash Equlibrium Symbols}{%
					\ifstrequal{#1}{E}{Other Symbols}{%
				}}}}}%
	]}

\newcommand{\opt}{\text{opt}}
\newcommand{\greedy}{\text{greedy}}
\newcommand{\gsp}{\text{gsp}}
\newcommand{\vcg}{\text{vcg}}

\newcommand{\A}{\mathcal{A}}
\renewcommand{\P}{\mathcal{P}}
\newcommand{\D}{\mathcal{D}}
\newtheorem{clm}{Claim}
\newtheorem{cor}{Corollary}
\newtheorem{proposition}{Proposition}

\newtheorem{definition}{Definition}
\newtheorem{lemma}{Lemma}

\newtheorem{theorem}{Theorem}

\DeclareMathOperator{\argmax}{argmax}
\DeclareMathOperator{\argmin}{argmin}
\DeclareMathOperator{\E}{\mathbb{E}}

\newcommand{\cmark}{\ding{51}}%
\newcommand{\xmark}{\ding{55}}%
\newcommand{\greencheck}{\color{green}\cmark}%
\newcommand{\redx}{\color{red}\xmark}%

\newenvironment{proofsketch}{%
	\proof}{\endproof}

%% Uncomment for short version only
\newcommand\shortversion[1]{}
\newcommand\longversion[1]{#1}
\newcommand{\coloneq}{\coloneqq}

\title{Equilibria in Auctions with Ad Types}
\author{Hadi Elzayn  \and Riccardo Colini-Baldeschi \and Brian Lan \and Okke Schrijvers}

\begin{document}
	\maketitle
	\begin{abstract}
		This paper studies equilibrium quality of semi-separable position auctions (known as the \emph{Ad Types} setting \cite{adtypes}) with \emph{greedy} or \emph{optimal} allocation combined with \emph{generalized second-price} (GSP) or \emph{Vickrey-Clarke-Groves} (VCG) pricing. We make three contributions: first, we give upper and lower bounds on the Price of Anarchy (PoA) for auctions which use greedy allocation with GSP pricing, greedy allocations with VCG pricing, and optimal allocation with GSP pricing. Second, we give Bayes-Nash equilibrium characterizations for two-player, two-slot instances (for all auction formats) and show that there exists both a revenue hierarchy and revenue equivalence across some formats. Finally, we use no-regret learning algorithms and bidding data from a large online advertising platform and no-regret learning algorithms to evaluate the performance of the mechanisms under semi-realistic conditions. For welfare, we find that the optimal-to-realized welfare ratio (an empirical PoA analogue) is broadly better than our upper bounds on PoA; For revenue, we find that the hierarchy in practice may sometimes agree with simple theory, but generally appears sensitive to the underlying distribution of bidder valuations. 	
	\end{abstract}

\section{Introduction}
%!TEX root = main.tex

This paper characterizes equilibrium welfare and revenue properties of various auction formats in the \emph{Ad Types setting}. The Ad Types setting \cite{adtypes} is a generalization of the standard position auction \cite{edelman2007internet, varian2007position}, which has been a workhorse in online advertising for years. In the standard position auction setting, there are multiple positions where the auctioneer can place ads. Advertisers care about receiving clicks on their ads, and the classical model posits a \emph{separable} click-through-rate (CTR) model, where ad slots have an associated discount $1\geq \delta^1 \geq \delta^2 \geq .. \geq 0$ that represents the advertiser-agnostic CTR of the slot.

The Ad Types setting \cite{adtypes} is a \emph{semi-separable} generalization of position auctions where each ad has a publicly known type\footnote{Type in the economics literature often refers to private information. That is not the case here: ad type refers to the conversion event that the advertiser cares about.}---such as `video ad', `link-click ad' or `impression ad'---and each ad type $\tau$ has its own associated position discount curve $1\geq \delta^1_\tau \geq \delta^2_\tau \geq .. \geq 0$. All ads from the same type share the same discount curve; as such, the model generalizes the position auction while maintaining more structure than a general max-weight bipartite matching problem.

Colini-Baldeschi et al. \cite{adtypes} show that in the Ad Types setting, one can compute the optimal allocation (with respect to reported bids) and associated Vickrey-Clarke-Groves (VCG) prices using an adapted version of the Kuhn-Munkres algorithm in $O(n^2(k + \log n))$ (where $n$ is the number of slots, and $k$ the number of ad types). However, there are two practical considerations that need to be taken into account: First, despite the auction-theoretical benefits of VCG, in practice online advertising platforms often use a Generalized Second-Price (GSP) payment rule \cite{ausubel2006lonely}, so it is desirable to understand the impact of using GSP pricing instead of VCG. Second, in content feeds there is often a large number of ads that are allocated, making the $O(n^2(k + \log n))$ running time prohibitive, necessitating simpler non-optimal allocation algorithms.

In this paper we investigate what happens in the Ad Types setting when we perform the allocation using either the \emph{greedy} or \emph{optimal} algorithm, and run pricing using either \emph{GSP} or \emph{VCG} semantics. In three of the four possible combinations the resulting auction is not incentive compatible, so we investigate the revenue and welfare in equilibrium.

\subsection{Contributions}
This paper makes three main contributions:
\begin{itemize}
	\item {\bf Price of Anarchy Bounds.} In Section~\ref{s:poa}, we provide Price of Anarchy upper and lower bounds in the Ad Types setting for all combinations of greedy or optimal allocation paired with GSP and VCG pricing. In particular, greedy allocation has an upper bound for Price of Anarchy of 4, regardless of the choice of pricing; for optimal allocation and GSP pricing, we give an upper bound that depends on the bidder types and number of bidders, but not valuations. We give lower bounds on the Price of Anarchy of 2 for greedy allocation with GSP pricing, 3/2 for greedy allocation with VCG pricing, and 4/3 for optimal allocation with GSP pricing. 
	\item  {\bf Small Equilibrium Characterization.} In Section~\ref{s:eq}, we analytically characterize the existence of Bayes-Nash equilibrium in the simple case of two bidders, two slots, and valuations distributed uniformly over the unit interval.\footnote{While this may appear a very special case, explicit equilibrium characterization in auctions is notoriously complex. Most famously, in Vickrey's original paper \cite{vickrey1961counterspeculation} he posed an open problem to characterize the equilibrium of a two-player first-price auction with uniform valuations in $[a_1, b_1]$ and $[a_2, b_2]$. The problem remained unsolved until nearly 50 years later \cite{kaplan2012asymmetric}!} In equilibrium, the greedy allocation with GSP pricing produces and equivalent amount of revenue to the optimal allocation with VCG pricing, and that this revenue is larger than the revenue produced by either of the other possible mechanism (which are also equivalent to each other).
	\item {\bf Evaluation on Realistic Data.} The small-equilibrium characterizations are interesting, but in order to understand if the results are representative of larger instances, we learn equilibria for bidding data from a large online advertiser in Section~\ref{s:exp}. We draw (normalized and anonymized) advertiser bids in various settings and equip advertisers with no-regret learning algorithms; when players use such algorithms, the empirical average of play is known to converge to \emph{coarse correlated equilibria}.  We find that for the most part equilibria on real data do not behave identically to the two bidder two slot uniform valuations case, but rather show a steeper hierarchy of revenue and welfare that conforms with intuition.   
\end{itemize}

\subsection{Related Literature}
% position auctions
\paragraph{Position Auctions.} Position auctions have long been the workhorse in online advertising. The seminal works of Edelman et al. \cite{edelman2007internet} and Varian \cite{varian2007position} first proposed the separable model of the position auction---and described the generalized second-price (GSP) auction in this model---and showed that for GSP there exists an ex-post Nash equilibrium that is equivalent to the VCG outcome. Gomes and Sweeney \cite{gomes2014bayes} showed that GSP does not always admit a Bayes-Nash equilibrium. There is also a history of exploring alternative pricing rules for position auctions; for example Chawla and Hartline \cite{chawla2013auctions} study generalized first-price (GFP) semantics for position auction and show that for independent and identically distributed (IID) valuations the equilibrium is unique and symmetric.

% price of anarchy/smoothness results
\paragraph{Price of Anarchy and Smoothness.} Since explicit equilibrium computation in auction is challenging, people have focused on Price of Anarchy bounds, i.e. using the equilibrium conditions to give bounds on the welfare in \emph{any} equilibrium. Paes Leme and Tardos \cite{leme2010pure} were the first to give Price of Anarchy bounds for GSP. A common approach to proving Price of Anarchy bounds is to use the smoothness framework proposed by Roughgarden \cite{roughgarden2015intrinsic,roughgarden2017price}, though GSP is not smooth in this sense. Lucier and Paes Leme \cite{lucier2011gsp} and Caragiannis et al. \cite{caragiannis2015bounding} instead show that one can use a \emph{semi-smoothness} condition and they give almost tight Price of Anarchy bounds for GSP. Smoothness has also been applied to other payment rules, such as GFP by Syrgkanis and Tardos \cite{syrgkanis2013composable}.

% more complex auction formats
\paragraph{Complex Ad Auctions.} There is a body of work that explores relaxing the separability assumption in position auctions. Our work is based on the Ad Types setting formalized by Colini-Baldeschi et al. \cite{adtypes}. When each ad is its own type, this model is identical to the one with arbitrary action rates that are still independent between ads, which has been studied before by Abrams et al. \cite{AGV07}, Carvallo and Wilkens \cite{CW14} and Wilkens et al. \cite{cavallo2018matching}. To our knowledge, no equilibrium characterizations or Price of Anarchy bounds are known in these settings. The closest is a paper by Colini-Baldeschi et al. \cite{colini2020envy} that studies the relationship between envy, regret and social welfare loss in the Ad Types setting for an alternative version of GSP called ``extended GSP'' using the same semi-smoothness framework as proposed by Caragiannis et al. \cite{caragiannis2015bounding}.

\section{Model and Preliminaries}
%!TEX root = main.tex

\paragraph{Advertisers.} There are $n$ advertisers (each associated with a single ad) competing for $m$ (ordered) slots. Each ad has a publicly known type $\tau_i$, such as `video ad', `link-click ad' or `impression ad'. Ad $i$ of type $\tau_i$ has value-per-conversion $v_i$. Ads of different types have different conversion events, e.g. for a link-click ad the conversion event is a link click and for a video ad the conversion event is the user watching a video ad.
\paragraph{Slots.} Slots are indexed by integers which increase moving down the feed. (So ``lower'' slots have higher indices.) Ads in lower slots see fewer conversions, and we consider a semi-separable model\footnote{The model is semi-separable since ads of the same type share the same discount curve, but ads of different types do not.} to capture this effect: for ads of type $\tau_i$, we can write $\Pr[\text{conversion on ad }i\text{ (of type }\tau_i\text{) in slot }s] = \delta^s_{\tau_i} \cdot \beta_i$ where $\delta^s_{\tau_i} $ is the slot effect for a particular ad type $\tau_i$ (e.g., the probability that a user will watch a video ad if it is shown in the $s$th slot) and $\beta_i$ is the advertiser effect. We assume without loss of generality that the advertiser effect has been included in the advertiser's value, i.e., if the value-per-conversion of the advertiser is $v_i'$, then $v_i = \beta_i\cdot v'_i$. Since the advertisers effectively \emph{discount} their value for the slot by $\delta_{\tau_i}^s$, we call $\left(\delta_{\tau_i}^{s}\right)$ for all slots $s$ the \emph{discount curve}. 

\paragraph{Bidding and Payoffs.} 
Advertisers submit a single bid $b_i$ for a conversion, which may or may not be their true valuation $v_i$. They are charged price $p_i$ (calculated by the auction) if a conversion happens, so in expectation they are charged $\delta_{\tau(i)}^s p_i$. Thus, the expected payoff of an advertiser for a given slot at a given price is
$u_i \left(s, p_i\right) = \delta_{\tau(i)}^s \left(v_i - p_i\right).$

\paragraph{Discount Curves.} We assume that discount curves \emph{monotonically decrease with the slot index}: that is, $1\geq \delta^1_{\tau_i} \geq \delta^2_{\tau_i} \geq ... \geq 0$. We will say that $s\succeq_i s'$ (read as $i$ prefers $s$ to $s'$) if $\delta_{\tau_i}^s \geq \delta_{\tau_i}^{s'}$. Since the conversion probability decreases moving down the feed for all types, advertisers agree on their preference between any pair of slots, so we can drop the subscript and simply use $\succeq$. Notice that since slots lower down the feed are indexed by higher numbers, $s \succeq s' \iff s \leq s'$; we will often speak in terms of preference in order to avoid confusion. In some restricted settings, we consider \emph{geometric} discount curves that can be written as $\delta_{\tau}^s = c\cdot \delta^s$ for some fixed $c,\delta$, where $s$ is an exponent on the right hand side. 

\paragraph{Auction Algorithms.}
Any auction must answer two questions: who gets what (allocation), and much how do they pay (pricing).  We  use $\A:\mathbf{b} \to \mathbf{s}$ to designate allocation algorithms, and $\P: \A, \mathbf{b} \to \mathbf{p}$ to designate pricing algorithms. Here, $\mathbf{b}$ is a vector of bids and $\mathbf{s}$ is a vector of slot assignments. In other words, an allocation algorithm $\A$ maps bid vectors to slot vectors. A pricing algorithm $\P$, however, takes both a vector of bids and an allocation \emph{algorithm} $\A$. 
Thus the pricing algorithm is a meta-algorithm, rather than a particular algorithm.

We refer to a pair $(\A, \P_\A)$ as an \emph{auction mechanism}.  In this paper we consider all combinations of two allocation algorithms and two pricing meta-algorithms:

\begin{itemize}
	\item {\bf Greedy (Allocation)} The \emph{greedy} allocation begins with the highest slot, and among non-allocated bidders allocates the bidder whose \emph{discounted} bid is highest (that is, $\argmax_{i \in \mathcal{U}_s} \delta_{\tau(i)}^{s}b_i$, where $\mathcal{U}_s$ is the set of unallocated bidders as of the time slot $s$ is reached). For the Ad Types setting, the greedy algorithm generally does not yield the optimal allocation (see e.g. Example 1.1 in \cite{adtypes}).
	
	\item {\bf Optimal (Allocation)} The \emph{optimal} allocation computes the max-weight bipartite matching between ads and slot (where edge weights are discounted bids $\delta_{\tau(i)}^{s}b_i$), e.g. using the Kuhn-Munkres algorithm \cite{kuhn1955hungarian, munkres1957algorithms}. 
	\item {\bf GSP (Pricing)} The \emph{Generalized Second Price} pricing rule executes the principle that a bidder pays the minimum bid under which they retain the slot they were assigned to, i.e. for allocation algorithm $\A$ and bids $\mathbf{b}$: $\left[\P_\A(\mathbf{b})\right]_i := {\argmin}_{b: \A(b, \mathbf{b}_{-i})_i = \A(\mathbf{b})_i} b$. Computing this bid is straightforward for the greedy allocation algorithm, while for the optimal algorithm we use the method of Carvallo et al \cite{cavallo2018matching}.
	\item {\bf VCG (Pricing)} The \emph{Vickrey-Clarke-Groves} pricing rule \cite{vickrey1961counterspeculation,C71,G73} executes the principle that a bidder should pay their \emph{externality}, i.e. for an allocation algorithm $\A$ and bids $\mathbf{b}$:
	$
	\left[\P(\mathbf{b})\right]_i = \sum_{j \neq i} \delta_{\tau(j)}^{\A(\mathbf{b}_{-i})_j} b_j - \sum_{j \neq i} \delta_{\tau(j)}^{\A(\mathbf{b})_j } b_j.
	$
	When $\A$ is the optimal allocation algorithm this yields the standard VCG algorithm. When $\A$ is the greedy allocation algorithm, the resulting mechanism is not incentive compatible.
\end{itemize}

Given  an auction $(\A, \P_\A)$, bids $\bf b$, and valuations $\bf v$, the \emph{social welfare} is $\text{W}(\A, \P_\A, {\bf b}, {\bf v}) = \sum_i \delta^{\A(\bf b)_i}_{\tau(i)} \cdot v_i$ and the \emph{revenue} is $\text{Rev}(\A, \P_\A, {\bf b}, {\bf v}) = \sum_i \delta^{\A({\bf b})_i}_{\tau(i)} \cdot \P_\A({\bf b})_i$. At the risk of restating the obvious, notice that the auctioneer can only observe reported bids, not true valuations; hence, to the extent that each mechanism computes an ``optimal'' allocation, it is optimal with respect to the \emph{bids}, not values.  For non-incentive compatible mechanisms, these will not in general coincide, and we must take care in the analysis not to conflate the two; we will emphasize ``apparent'' with the ``hat'' symbol, e.g. we denote the apparent social welfare with respect to the bids as $\widehat{W}$. 

\paragraph{Additional Notation.} To indicate vectors, we will use bold font: e.g. we denote the vector of bids as $\mathbf{b}$. We will use subscripts to denote a particular component, e.g. $\mathbf{b}_i$ is the $i$th component of $\mathbf{b}$. At the risk of overloading notation, we will also use $i$ as a subscript to track scalar functions for particular player. So, for instance, we can write $b_i$ for player $i$'s bid, or $\mathbf{b}_i$, depending on whether we are arguing about the auctioneer's or player's perspective. (The  meaning of the subscript should be clear from context.) Also, we use the standard $-i$ subscript to indicate ``all but the $i$th component'' of a vector. We will also use an analogous $-i$ superscript for scalar functions, e.g. we write $W^{-i}$ for the scalar welfare of all players but $i$. \footnote{To streamline notation, we will omit extra parentheses when writing the modified component and the ``all-but-ith'' component together where a vector would be required. For instance, $\A(b_i',\mathbf{b}_{-i})$ instead of $\A((b_i',\mathbf{b}_{-i}))$.}. 

Since we consider multiple allocation and pricing formats, we write $\pi_{\A}(s, \mathbf{b})$ to indicate the \emph{player} in \emph{slot} $s$ when $\mathbf{b}$ is the bid profile and $\A$ is the allocation algorithm. We will suppress the ${\A}$ when it is clear from context. 
We use $\sigma_{\A}(i, \mathbf{b})$ to indicate the \emph{slot} that player $i$ receives when the bid profile is $\mathbf{b}$ and the allocation algorithm is $\A$. We will sometimes overload notation to write $\tau(i)$ as function returning player $i$'s ad type $\tau_i$; this will be useful when referring not to a specific player but rather to an arbitrary occupant of a given slot. We can also compose some or all of these together. For example, $\tau(\pi_{\A}(\sigma_{\A}(i,\mathbf{b}),\mathbf{b}')$ is the type of the player assigned to the slot that $i$ receives under the allocation algorithm $\A$ and bid vector $\mathbf{b}$ given that the bid vector is \emph{instead} changed to $\mathbf{b}'$.  Finally, we will denote the \emph{optimal} allocation vector given a bid profile $\mathbf{v}$ with the Greek letter $\boldsymbol{\nu}$. That is:
\begin{align*}
	\boldsymbol{\nu} :=\underset{\boldsymbol{\sigma} \in S_{n}}{ \argmax} \sum_{i =1}^{n} \delta_{\tau(i)}^{\boldsymbol{\sigma}_i} v_i,
\end{align*} 
where $S_n$ is the set of all permutations of bidders. Maintaining our font conventions, we use $\nu(i)$ for the mapping of $i$ to the slot index he is assigned under $\boldsymbol{\nu}$.

\subsection{Solution Concepts and Learning} 
Each mechanism induces a game between agents that act strategically, so the \emph{equilibrium concept} is an important modeling choice. In this paper we present equilibrium results for both full-information and Bayes-Nash equilibria:
\begin{definition}[Nash Equilibrium]\label{def:eq-nash}
A bid profile $\mathbf{b}$ is pure strategy \emph{Nash equilibrium} if for each player $i$:
$
u_i(\mathbf{b}) \geq u_i(b', \mathbf{b}_{-i})
$ 
for all pure strategies $b'$.	
\end{definition}

\begin{definition}[Bayes-Nash Equilibrium]\label{def:eq-bayes}
	For a known value distribution $\mathcal{V}$, the vector of mappings $\mathbf{b}(\mathbf{v})$ is a Bayes-Nash equilibrium if for every player $i$:
\begin{align*}
	\E_{\mathbf{v}\sim \mathcal{V}}[u_i(\mathbf{b}(\mathbf{v}))] \geq \E_{\mathbf{v}\sim \mathcal{V}}[u_i(b_i'(v_i), \mathbf{b}_{-i}(\mathbf{v}_{-i}))]
\end{align*}
	for any other mapping ${b'_i}(v_i)$.
\end{definition}
For each of these equilibrium notions, an $\epsilon$-\emph{approximate} version is obtained by allowing the definitional inequality to be violated by no more than $\epsilon$. A bid profile where no bidder can improve their payoff by more than $\epsilon$ is an $\epsilon$-approximate Nash equilibrium.

A Bayes-Nash equilibrium is $\emph{linear}$ if $b_i(v_i) = \beta_i v_i$ for some $\beta\geq 0$. We say a bidder is \emph{conservative} if he does not bid above his value, and an equilibrium is conservative if it does not prescribe bidding above one's value. For some results in Section \ref{s:poa}, we will assume that bidders are conservative.

In general, it may be difficult or impossible to analytically characterize equilibria in more complicated settings. Thus, in  Section \ref{s:exp}, we turn to \emph{learning} equilibria using no-regret learning algorithms on data drawn from realistic valuation distributions. This approach, while powerful, is not guaranteed to recover either Nash or Bayes-Nash equilibria, but instead the more general notions of \textit{Coarse Correlated Equilibrium} (CCE) and \textit{Bayesian Coarse Correlated Equilibrium} (BCCE). As we do not rely on these notions for our analytical results, we defer the definitions of these concept to Section \ref{sec:appexperiments}. 

For each equilibrium concept, there may be multiple equilibria with different welfare. The Price of Anarchy (PoA) captures the worst-case\footnote{We also use the term ``empirical  PoA'' to describe the ratio of average realized welfare to optimal welfare when speaking about specific or empirical cases. Strictly speaking the PoA is only the worst-case value, but the meaning should be clear.} welfare compared to the optimal welfare knowing the valuations. Here, we write its definition adapted to our setting:
\begin{definition}[Price of Anarchy] \label{def:poa}
	The Price of Anarchy is
	 $$  \text{PoA(Nash)} \coloneqq \max_{{\bf b} \in E} \frac{\sum_i \delta^{\nu(i)}_{\tau(i)}\cdot \mathbf{v}_i}{\E[\sum_i \delta^{\A({\bf b})_i}_{\tau(i)} \cdot \mathbf{v}_i]} 
	 $$ 
	 where $E$ is the set of Nash equilibria for $(\A, \P_\A)$ and the randomness is over the strategy distributions. A similar definition can be made for a Bayesian PoA with randomness over the valuations. 
\end{definition}

\section{Price of Anarchy}\label{s:poa}
%!TEX root = main.tex
In this section, we provide characterizations of upper and lower bounds on the Price of Anarchy for (Greedy, GSP), (Greedy, VCG), and (Opt,GSP) with conservative bidders\footnote{We omit (Opt, VCG); that bidding truthfully is a dominant strategy suggests alternative equilibria are unlikely in practice.}. For upper bounds on the Price of Anarchy, we leverage the \emph{semi-smoothness} framework of \cite{caragiannis2015bounding}, itself a generalization of the \emph{smoothness} framework of \cite{roughgarden2015intrinsic}.  For lower bounds, we construct examples of equilibria\footnote{Conservative bidders likely match reality when agents are not sophisticated or uncertain. This restriction is common in the literature, e.g. \cite{caragiannis2015bounding}, and can be interpreted as strengthening our PoA lower bounds and weakening our upper bounds.} that achieve less welfare than the optimal. For results that are primarily ancillary or require involved proofs, we provide proof sketches, and defer full proofs to an expanded online version of the paper. 

For (Greedy, GSP) and (Greedy, VCG), we give a universal result - that is, under no requirements besides being in the Ad Types setting, and this result matches known upper and lower bounds for the position auction (though our bounds are not yet as tight). 
For (Opt, GSP), we 
provide \emph{instance-optimal} bounds; here, instance-optimal means allowing for dependence on the discount curves and number of slots but  \emph{not} over bidder valuations. It is very likely that our upper bounds on the Price of Anarchy in this setting are too pessimistic; we leave improvement of these bounds to future work.

\begin{table}[h]

	\begin{subtable}[t]{0.45\textwidth}
		\centering
	\begin{tabular}{ccc}
		& GSP & VCG\\
		\toprule
		Greedy & $2$& 3/2 \\
		Opt & 4/3& NA
	\end{tabular}
	\caption{Lower bounds on PoA. \label{tab:poalower}\\ }
\end{subtable}%
\hfill
\begin{subtable}[t]{0.45\textwidth}
	\centering
	\begin{tabular}{ccc}
		& GSP & VCG\\
		\toprule
		Greedy & $4$& $4$ \\
		Opt & $2 + 2(n-1){\frac{\delta^{\max}}{\delta_{\min}}}^{*}$& NA
	\end{tabular}
	\caption{Upper bounds on PoA. $^{*}$ denotes instance-optimal bounds.\label{tab:poauper}}
	\end{subtable}
\caption{Price of Anarchy Bounds}
\end{table}

Our technique in each case will be to show that the game induced by the auction format and any valuation profile is \emph{semismooth}, in the following sense:
\begin{definition}[Semismooth game \cite{caragiannis2015bounding}]\label{def:smooth}
	We say that a game is $(\lambda,\mu)$-semismooth if there exists a (possibly randomized) strategy $b'$ which depends only on a player's valuation such that:
	\begin{align*}
	u_i(b'(v_i), \mathbf{b}_{-i}) \geq \lambda \sum_{i} \delta_{\tau_i}^{\nu(i)}\mathbf{v}_i - \mu \sum_{i} \delta_{\tau_i}^{\sigma(i,\mathbf{b})} \mathbf{v}_i. 
	\end{align*}
	for all bid profiles $\mathbf{b}$ and all valuation vectors $\mathbf{v}$.
	\end{definition}
A game can be shown to be semismooth by showing that the
the following inequality holds:
\begin{align*}
u_i(b_i', \mathbf{b}_{-i}) \geq \lambda \delta_{\tau(i)}^{\nu(i)} v_i - \mu \delta_{\tau(\pi(\nu(i),\mathbf{b}))}^{\nu(i)} v_{\pi(\nu(i),\mathbf{b})},
\end{align*}
since if it holds, summing over players gives exactly the defining condition of semismoothness. And semismoothness directly yields Price of Anarchy bounds using the following theorem, from \cite{caragiannis2015bounding}:
\begin{theorem}[\cite{caragiannis2015bounding}]
	Suppose a game is $(\lambda, \mu)$-semismooth, and social welfare is at least the sum of player utilities. Then its Price of Anarchy is upper bounded by $\frac{\mu+1}{\lambda}$. 
	\end{theorem}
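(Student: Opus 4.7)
The plan is to run the standard smoothness-style argument applied to the semismoothness definition. The key observation is that the semismoothness inequality involves a deviation $b'(v_i)$ that depends only on the deviator's own valuation, which makes it directly comparable to the Nash equilibrium deviation condition. First I would fix an arbitrary Nash equilibrium $\mathbf{b}$ and, for each player $i$, invoke the equilibrium inequality $u_i(\mathbf{b}) \geq u_i(b'(v_i), \mathbf{b}_{-i})$, where $b'$ is the semismoothness witness strategy. Summing over all $i$ and then applying the semismoothness bound to the right-hand side yields
\begin{align*}
\sum_i u_i(\mathbf{b}) \;\geq\; \sum_i u_i(b'(v_i), \mathbf{b}_{-i}) \;\geq\; \lambda \sum_i \delta_{\tau_i}^{\nu(i)} v_i \;-\; \mu \sum_i \delta_{\tau_i}^{\sigma(i,\mathbf{b})} v_i.
\end{align*}
The first sum on the right is exactly $\lambda \cdot \mathrm{OPT}$ and the second is $\mu \cdot \mathrm{W}(\mathbf{b})$, so this reads $\sum_i u_i(\mathbf{b}) \geq \lambda\, \mathrm{OPT} - \mu\, \mathrm{W}(\mathbf{b})$.

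Next I would invoke the hypothesis that social welfare dominates the sum of player utilities, i.e.\ $\mathrm{W}(\mathbf{b}) \geq \sum_i u_i(\mathbf{b})$. Chaining this with the previous display gives $\mathrm{W}(\mathbf{b}) \geq \lambda\, \mathrm{OPT} - \mu\, \mathrm{W}(\mathbf{b})$, and rearranging yields $(1+\mu)\, \mathrm{W}(\mathbf{b}) \geq \lambda\, \mathrm{OPT}$, hence $\mathrm{OPT}/\mathrm{W}(\mathbf{b}) \leq (\mu+1)/\lambda$. Since $\mathbf{b}$ was an arbitrary equilibrium, taking the worst case over $E$ gives the stated Price of Anarchy bound.

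There is essentially no obstacle in the pure Nash case beyond being careful that the witness deviation $b'(v_i)$ is randomized only over data the player actually has access to (its own valuation), so that the Nash inequality can be applied in expectation when $b'$ is randomized. If the theorem is intended to apply to the Bayes-Nash setting as well, the only extra step is to take expectations over $\mathbf{v} \sim \mathcal{V}$ throughout; because the witness strategy depends only on $v_i$, the deviation remains admissible in the Bayes-Nash inequality, and the chain of inequalities goes through with expectations on both sides. The mildly subtle point worth flagging in the writeup is that the semismoothness condition has been phrased as a deterministic inequality holding for \emph{all} bid profiles $\mathbf{b}$, which is strictly stronger than the smoothness framework of Roughgarden and is precisely what allows the argument to avoid the usual product-strategy obstacle that makes plain smoothness fail for GSP.
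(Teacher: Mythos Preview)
The paper does not supply its own proof of this theorem; it is stated as a result of \cite{caragiannis2015bounding} and used as a black box. Your argument is the standard smoothness-style derivation and is correct: the Nash condition bounds each $u_i(\mathbf{b})$ below by $u_i(b'(v_i),\mathbf{b}_{-i})$, summing and applying the semismoothness inequality (with the sum over $i$ on the left-hand side, as the paper's subsequent remark about ``summing over players'' makes clear is intended, despite the missing $\sum_i$ in Definition~\ref{def:smooth}) gives $\sum_i u_i(\mathbf{b}) \geq \lambda\,\mathrm{OPT} - \mu\,\mathrm{W}(\mathbf{b})$, and the welfare-dominates-utilities hypothesis plus rearrangement finishes. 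Your remarks about the randomized witness and the Bayes-Nash extension are also on point and match the way \cite{caragiannis2015bounding} handles those cases.
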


\subsection{Greedy Allocation Proof Recipe}
A common proof structure applies to both (Greedy, GSP) and (Greedy, VCG), because of their shared allocation algorithm and the fact that both pricing algorithms, when coupled with greedy allocation, guarantee that bidders are never overcharged. It is similar to the proof found in \cite{caragiannis2015bounding}, but with additional subtlety due to the differing discount factors. 

To handle this subtlety, we will use the following Lemma:
\begin{lemma}[Partial Monotonicity] \label{lem:partialmon}
	Suppose that $\bf b, \bf b'$ are two bid profiles that only differ in element $i$, and $b_i'>b_i$. Let $\sigma$,$\sigma'$ be the slots which $i$ was assigned under $\bf b$, $\bf b'$ respectively. Then under greedy allocation, we have that for each slot $s$ strictly above $\sigma$:
	\begin{align*}
	\delta_{\tau(\pi(s, \mathbf{b'}))}^{s} \mathbf{b}_{\pi(s,\mathbf{ b'})} \geq \delta_{\tau(\pi(s,\mathbf{b}))}^{s} \mathbf{b}_{\pi(s, \mathbf{b})}^{s}
	\end{align*}
\end{lemma}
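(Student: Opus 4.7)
The plan is to proceed by strong induction on the slot index $s$, from $s=1$ up to $s=\sigma-1$, coupling the two greedy runs slot by slot. Let $U_s, U'_s$ denote the unallocated bidder sets just before slot $s$ under $\mathbf{b}$ and $\mathbf{b}'$, and write $j_s = \pi(s,\mathbf{b})$ and $j'_s = \pi(s,\mathbf{b}')$. Since $\mathbf{b}$ and $\mathbf{b}'$ differ only in coordinate $i$, and since $j_s \neq i$ for every $s < \sigma$, any divergence between the two runs at a slot $s<\sigma$ must trace back to $i$ having been selected under $\mathbf{b}'$ at this or some earlier slot. The target inequality I read is the natural one with the new bid profile on the left-hand side (this is the only reading consistent with $i$ possibly being the new winner), and I would establish it simultaneously with a structural invariant on $(U_s,U'_s)$.

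The invariant I would carry through the induction is: at every $s\leq \sigma$, either (i) $U_s = U'_s$ and both contain $i$; or (ii) there is a unique $k \neq i$ with $k\notin U_s$ and $U'_s = (U_s \setminus \{i\}) \cup \{k\}$, where $k$ was the $\mathbf{b}$-winner of some slot between $i$'s $\mathbf{b}'$-allocation slot and $s$. Under (i), greedy on $U'_s$ picks either $j_s$ (inequality trivial, invariant maintained) or $i$; if $i$ wins, greedy's max-selection gives $\delta^s_{\tau(i)} b'_i \geq \delta^s_{\tau(j_s)} b_{j_s}$, which is the desired inequality, and the invariant transitions to (ii) with $k = j_s$. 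Under (ii) we have $j_s \in U_s \setminus \{i\} \subseteq U'_s$, so greedy on $U'_s$ (which sees only original bids, as $i \notin U'_s$) selects some $j'_s$ with $\delta^s_{\tau(j'_s)} b'_{j'_s} = \delta^s_{\tau(j'_s)} b_{j'_s} \geq \delta^s_{\tau(j_s)} b_{j_s}$, again the claim. In this sub-case, $j'_s$ must be either $j_s$ itself (invariant continues with the same $k$) or the extra bidder $k$ (invariant continues with $k$ replaced by $j_s$); any other common bidder would have to be tied with $j_s$ and, under a fixed deterministic tie-breaking rule shared by both runs, would have already been picked over $j_s$ in the $\mathbf{b}$ run too.

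The main obstacle is verifying that the single-swap form of invariant (ii) truly survives the chain reaction $i$'s raised bid can trigger: one displacement under $\mathbf{b}'$ may cause another at a later slot, threatening to blow up $|U_s \triangle U'_s|$. The clean way to discharge this is to observe that $U'_s \setminus (U_s \setminus \{i\}) = \{k\}$ under (ii), so greedy on $U'_s$ can pick only a common bidder or $k$; a short set-theoretic calculation in each of the two surviving sub-cases ($j'_s = j_s$ or $j'_s = k$) then exhibits $U'_{s+1}$ in the required single-swap form relative to $U_{s+1}$. The one subtlety I would flag at the outset is the deterministic tie-breaking assumption, shared across the two coupled runs; without it, ties at slot $s$ could in principle let a third common bidder win in one run but not the other, producing a symmetric difference of four elements and breaking the invariant.
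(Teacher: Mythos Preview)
Your proposal is correct and follows essentially the same coupling argument as the paper: track the unallocated sets slot by slot and observe that the $\mathbf{b}$-winner $\pi(s,\mathbf{b})$ remains available under $\mathbf{b}'$ for every $s<\sigma$, so greedy under $\mathbf{b}'$ can only select someone with at least as high an effective bid. Your explicit single-swap invariant on $(U_s,U'_s)$, together with the tie-breaking caveat and the corrected reading of the left-hand side as using $\mathbf{b}'$, makes rigorous what the paper's proof leaves as an informal appeal to ``cascading effects.''
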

Informally, this lemma merely states that if bidder $i$ deviates upwards from his bid under $\mathbf{b}$, the value obtained by players in the slots above his placement under $\mathbf{b}$ can only increase. To see why this is true, recall that the greedy algorithm allocates from top to bottom. So for every slot $s$ between $\sigma'$ and $\sigma$  (not including $\sigma$), the bidders considered when $s$ was assigned under $\mathbf{b}$ remain unallocated when considering $s$ under $\mathbf{b}'$; hence $\pi(s, \mathbf{b'})$ (i.e. whoever is assigned to $s$ under $\mathbf{b}$ must have at least as high of an effective value as $\pi(s, \mathbf{b}')$. See Appendix \ref{s:app-smoothproofs} for a more formal proof.

Now we are ready to state and prove our theorem.
\begin{theorem}[Semi-Smoothness for Greedy Algorithms]\label{thm:receip}
		Let $(\A, \P_\A)$ be an auction mechanism. Suppose that
		\begin{enumerate}
			\item $\A$ is the greedy algorithm, and
			\item For any bid profile $\mathbf{b}$, for every bidder we have:
			\begin{align*}
			\P_{\A}(\mathbf{b})_i \leq \mathbf{b}_i
			\end{align*}
		\end{enumerate}
	Then $(\A,\P_{\A})$ is (1/2,1)-Semi Smooth.
	\end{theorem}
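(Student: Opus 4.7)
The plan is to invoke the per-player reduction described before the theorem statement: it is enough to exhibit a deviation $b_i'(v_i)$ and prove that
\begin{align*}
u_i(b_i'(v_i), \mathbf{b}_{-i}) \geq \tfrac{1}{2}\delta_{\tau(i)}^{\nu(i)} v_i - \delta_{\tau(\pi(\nu(i),\mathbf{b}))}^{\nu(i)} v_{\pi(\nu(i),\mathbf{b})}
\end{align*}
holds for every bidder $i$ and every bid profile $\mathbf{b}$; summing over $i$ then reproduces the $(1/2,1)$-semismoothness definition. I will take the deterministic deviation $b_i'(v_i) := v_i/2$ and abbreviate $\mathbf{b}' := (b_i', \mathbf{b}_{-i})$, $\sigma^* := \nu(i)$, $\tilde{\sigma} := \sigma_\A(i, \mathbf{b}')$, $k := \pi_\A(\sigma^*, \mathbf{b})$, and $j := \pi_\A(\sigma^*, \mathbf{b}')$. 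I assume bidders are conservative ($b_\ell \leq v_\ell$ for every $\ell$), in line with the rest of Section~\ref{s:poa}.

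The first split is on whether $\tilde{\sigma} \succeq \sigma^*$. In that case the pricing hypothesis $\P_\A(\mathbf{b}')_i \leq b_i'$ directly gives $u_i(b_i', \mathbf{b}_{-i}) \geq \delta_{\tau(i)}^{\tilde{\sigma}}(v_i - v_i/2) \geq \tfrac{1}{2}\delta_{\tau(i)}^{\sigma^*} v_i$, and the per-player inequality follows since the subtracted term is non-negative. Otherwise $\tilde{\sigma} \prec \sigma^*$, so bidder $i$ is still unallocated when greedy processes $\sigma^*$ under $\mathbf{b}'$, and greedy's selection rule yields
\begin{align*}
\delta_{\tau(j)}^{\sigma^*} b_j \geq \delta_{\tau(i)}^{\sigma^*} b_i' = \delta_{\tau(i)}^{\sigma^*} v_i/2.
\end{align*}
In this regime it is enough to prove $\delta_{\tau(k)}^{\sigma^*} v_k \geq \delta_{\tau(i)}^{\sigma^*} v_i/2$, which combined with $u_i \geq 0$ delivers the per-player inequality; if $k = i$ the claim is immediate, so I assume $k \neq i$ below.

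I now split on the sign of $b_i' - b_i$. If $b_i' \geq b_i$ (upward deviation), then the assumption $\tilde{\sigma} \prec \sigma^*$ together with a top-down induction on slots --- using that only $i$'s bid differs between $\mathbf{b}$ and $\mathbf{b}'$ and that $i$ wins no slot weakly preferred to $\sigma^*$ under $\mathbf{b}'$ --- shows that the pool and the argmax at every slot weakly preferred to $\sigma^*$ are identical under the two profiles, so in particular $j = k$, and conservativeness yields $\delta_{\tau(k)}^{\sigma^*} v_k \geq \delta_{\tau(k)}^{\sigma^*} b_k = \delta_{\tau(j)}^{\sigma^*} b_j \geq \delta_{\tau(i)}^{\sigma^*} v_i/2$. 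If $b_i' < b_i$ (downward deviation), then applying Partial Monotonicity with the roles of $\mathbf{b}$ and $\mathbf{b}'$ reversed at $s = \sigma^* \succ \sigma(i,\mathbf{b}') = \tilde{\sigma}$ gives $\delta_{\tau(k)}^{\sigma^*} b_k \geq \delta_{\tau(j)}^{\sigma^*} b_j$ (using $j \neq i$, which holds automatically because $i$ sits at $\tilde{\sigma} \neq \sigma^*$), and conservativeness again upgrades the bid to the desired value.

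The principal obstacle is the sub-case $j \neq k$ inside the regime $\tilde{\sigma} \prec \sigma^*$: in the separable position auction the winner at $\sigma^*$ is forced to remain fixed under this deviation, but in the Ad Types setting the type-specific discount curves permit greedy cascades above $\sigma^*$ that must be controlled. Partial Monotonicity is the structural tool designed for exactly this, and tracking its direction carefully depending on whether $v_i/2$ lies above or below $b_i$ is the subtle bookkeeping step.
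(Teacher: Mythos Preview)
Your proof is correct and follows essentially the same route as the paper's: the deviation $b_i'=v_i/2$, the first case split on $\tilde\sigma\succeq\sigma^*$ versus $\tilde\sigma\prec\sigma^*$, and the second split on whether the deviation is upward or downward, with Partial Monotonicity invoked in the downward case and a direct ``the allocations above $\tilde\sigma$ are unchanged'' induction in the upward case. You are in fact slightly more explicit than the paper in two places---you state the conservativeness assumption (which the paper uses implicitly when passing from $b_{\pi(\nu(i),\mathbf b)}$ to $v_{\pi(\nu(i),\mathbf b)}$) and you separately dispatch the $k=i$ edge case---both of which are improvements in rigor rather than differences in approach.
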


\begin{proof}
		Recall that if we can show that for any bid profile:
	\begin{align*}
	u_i(b_i', \mathbf{b}_{-i}) \geq \delta_{\tau(i)}^{\nu(i)} \frac{v_i}{2} - \delta_{\tau(\pi(\nu(i),\mathbf{b}))}^{\nu(i)} v_{\pi(\nu(i),\mathbf{b})}
	\end{align*}
	then we will be done. 
	So suppose $\mathbf{b}$ is a bid profile, and consider a deviation strategy of bidding half one's value. (Notice first off that such a deviation guarantees a deviating bidder non-negative utility by Property 2.) Fix bidder $i$. Under this unilateral deviation, $i$ receives $\sigma' := \A(b_i',\mathbf{b}_{-i})_i$. There are two casese to consider: either $\sigma' \succeq \nu(i)$ (i.e. $\sigma'$ is $\nu(i)$ or better) or $\sigma' \prec \nu(i)$ ($\sigma'$ is strictly worse than $\nu(i)$.)  If the first case holds, we achieve the desired inequality since:
		\begin{align*}
	u_i(b_i', \mathbf{b}_{-i}) &= \delta_{\tau(i)}^{\sigma'} v_i - \delta_{\tau(i)}^{\sigma'} \mathcal{P}_i(b_{i}',\mathbf{b}_{-i'}, \A) 
	\\
	&\geq \delta_{\tau(i)}^{\sigma'} v_i -\delta_{\tau(i)}^{\sigma'} \frac{v_i}{2} = \delta_{\tau(i)}^{\sigma'} \frac{v_i}{2}\\
	&\geq \delta_{\tau(i)}^{\nu(i)}\frac{v_i}{2}\\
	&\geq \delta_{\tau(i)}^{\nu(i)} \frac{v_i}{2} - \delta_{\tau(\pi(\nu(i), \mathbf{b}))}^{\nu(i)} v_{\pi(\nu(i), \mathbf{b})},
	\end{align*}
	where the first inequality follows by no-overcharging and the others by assumption or trivially.
	
	Now suppose that instead, $\A(b_i', \mathbf{b}_{-i})_i = \sigma' \prec \nu(i)$.  We split this into two subcases. 
	In the first subcase, $\frac{v_i}{2} \geq b_i$, i.e. $b_i'$ is an upward deviation that results in $i$ receiving $\sigma'$ below $\nu(i)$. 
	Combining $(b_i',\mathbf{b}_{-i})$ into $\mathbf{b}'$, we can write:
	\begin{align*}
		\delta_{\tau(\pi(\nu(i), \mathbf{b}'))}^{\nu(i)} \mathbf{b}_{\pi(\nu(i),\mathbf{b}')}'  \geq \delta_{\tau(i)}^{\nu(i)} \frac{v_i}{2}.
	\end{align*}
	This follows because $i$ was unallocated when $\nu(i)$ was considered, so if this did not hold, the greedy allocation would have allocated $i$ to $\nu(i)$ instead of $\pi(\nu(i),\mathbf{b}')$.
	
	Now, notice that we can view $b_i$ as a downward deviation from $b_i'$, and a downward deviation cannot affect the allocation choices of any of the slots above its place before the deviation, including $\nu(i)$. But that means that the allocated bidder to $\nu(i)$ is the same under $\mathbf{b}$, so the inequality above also implies that:
	\begin{align*}
			\delta_{\tau(\pi(\nu(i),\mathbf{b}))}^{\nu(i)} \mathbf{b}_{\pi(\nu(i),\mathbf{b})} \geq \delta_{\tau(i)}^{\nu(i)} \frac{v_i}{2}.
	\end{align*}
		Then using no-overcharging, we again have that
		\begin{align*}
		\delta_{\tau(i)}^{\nu(i)} \frac{v_i}{2} - \delta_{\tau(\pi(\nu(i),\mathbf{b}))}^{\nu(i)} \mathbf{b}_{\pi(\nu(i),\mathbf{b})}\leq 0 \leq u_i(b_i', \mathbf{b}_{-i})
		\end{align*}
		
	Finally, suppose $\frac{v_i}{2} < b_i$. As before, we must have that $\pi(\nu(i) \mathbf{b}')$ must have at least as high effective value as $i$. To see that $\pi(\nu(i), \mathbf{b})$ \emph{also} has at least as high effective valuation as $i$, 
	 notice that we can view $b_i$ as an upward deviation from $b_i'$. By assumption, $\sigma'\prec \nu(i)$, so Lemma \ref{lem:partialmon} implies that in moving to $\mathbf{b}$, the values of bidders in slots above $\sigma'$, which include $\nu(i)$, must increase. But then we have again that:
	\begin{align*}
				\delta_{\tau(\pi(\nu(i),\mathbf{b}))}^{\nu(i)} \mathbf{b}_{\pi(\nu(i),\mathbf{b})} \geq \delta_{\tau(i)}^{\nu(i)} \frac{v_i}{2}.
	\end{align*} 
	and the desired inequality follows as before.
\end{proof}

\subsection{Greedy Allocation and GSP Pricing}

\begin{theorem} 
Let $(\A, \P_{\A})=$ (Greedy, GSP). Then the Price of Anarchy is at most 4. 
\end{theorem}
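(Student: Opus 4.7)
The plan is to appeal directly to Theorem~\ref{thm:receip} (the Greedy Allocation Proof Recipe) together with the semismoothness-to-PoA theorem of Caragiannis et al. Since the allocation component of $(\A, \P_\A) = $ (Greedy, GSP) is the greedy algorithm by construction, the only real content left to verify is the no-overcharging hypothesis, namely that $\P_\A(\mathbf{b})_i \leq \mathbf{b}_i$ for every bidder $i$ and every bid profile $\mathbf{b}$. Once this is done, we conclude $(1/2, 1)$-semismoothness from Theorem~\ref{thm:receip}, and then PoA $\leq (\mu + 1)/\lambda = (1+1)/(1/2) = 4$ by the semismoothness framework.

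First I would verify no-overcharging. By definition of GSP, $\P_\A(\mathbf{b})_i = \arg\min_{b: \A(b, \mathbf{b}_{-i})_i = \A(\mathbf{b})_i} b$, i.e.\ the minimum bid under which $i$ still retains the slot $\sigma := \A(\mathbf{b})_i$. Since $i$'s actual bid $\mathbf{b}_i$ trivially retains the slot $\sigma$ (indeed, it produced the allocation), $\mathbf{b}_i$ is a feasible choice in the minimum, and therefore $\P_\A(\mathbf{b})_i \leq \mathbf{b}_i$. This is the standard "you pay at most what you bid" property of GSP, and it holds regardless of which allocation algorithm is paired with GSP pricing.

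Second I would assemble the conclusion: having checked both hypotheses of Theorem~\ref{thm:receip}, the auction game is $(1/2, 1)$-semismooth for every valuation profile. Social welfare is at least the sum of player utilities (each bidder's payment is non-negative and is collected as revenue, so welfare minus utilities equals revenue $\geq 0$), so the Caragiannis et al. theorem applies and yields $\text{PoA} \leq 4$.

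I do not expect any real obstacle here: the work has already been done in Theorem~\ref{thm:receip}, which was explicitly designed to be instantiated in exactly this way, and the no-overcharging property of GSP is immediate from its definition. The only thing worth being careful about is that the semismoothness recipe requires the bound to hold against \emph{any} bid profile $\mathbf{b}_{-i}$ and that the deviation strategy $b_i' = v_i/2$ is well-defined and conservative; both are automatic from Definition~\ref{def:smooth} and from the non-negativity of valuations.
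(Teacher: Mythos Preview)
Your proposal is correct and follows essentially the same route as the paper: invoke Theorem~\ref{thm:receip} after checking no-overcharging, then apply the semismoothness framework to get $(\mu+1)/\lambda = 4$. The only minor difference is in how no-overcharging is verified: the paper argues via the specific greedy mechanics (the winner's effective bid exceeds the runner-up's, which is what is charged), whereas you use the allocation-agnostic observation that the actual bid is feasible in the defining minimum for the GSP price, so the minimum cannot exceed it. Both are valid; your argument is in fact the same one the paper later uses as Lemma~\ref{lem:opt-gsp-upperboundprice} for (Opt, GSP).
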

\begin{proof}
	First, by assumption, $\A$ is Greedy. Second,  generalized second price will not charge a bidder more than their bid since under the greedy algorithm, the winner of a slot has a higher effective bid than the second bidder's bid, which is what they are charged. Hence, the conditions of Theorem \ref{thm:receip} are satisfied, so the induced game is $(\frac{1}{2}, 1)$-semismooth and the bound follows.
\end{proof}

On the other hand, we can show that the Price of Anarchy is \emph{at least} 2. 
\begin{theorem}
	Let $(\A, \P_{\A})=$(Greedy, GSP). Then the Price of Anarchy is at least 2. 
\end{theorem}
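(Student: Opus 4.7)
The plan is to construct a small Ad Types instance in which greedy allocation is grossly suboptimal \emph{and} GSP prices are low enough to sustain the misallocation as a conservative Nash equilibrium. The Ad-Types-specific feature I will exploit is that greedy sorts by discounted bid $\delta_{\tau(i)}^s b_i$, so it may route an ad to a slot where a different ad type would be strictly preferable; under GSP the ``displaced'' bidder may then be unable to dislodge the occupant while respecting the conservative constraint, locking in the bad allocation.

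Concretely, I will take two bidders and two slots. Bidder $1$ has type $A$ with value $v_1 = 1+\epsilon$ and discount curve $(\delta_A^1, \delta_A^2) = (1, 1-\epsilon')$; bidder $2$ has type $B$ with value $v_2 = 1$ and discount curve $(\delta_B^1, \delta_B^2) = (1, 0)$. I claim that the profile $b_1 = 1+\epsilon$, $b_2 = 0$ is a conservative Nash equilibrium whose welfare is at most half of optimal as $\epsilon, \epsilon' \to 0$. Under this profile, greedy places bidder $1$ in slot $1$ and bidder $2$ in slot $2$ (since $1+\epsilon > 0$), and GSP charges both bidders zero, yielding realized welfare $1+\epsilon$. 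By contrast, the swapped allocation achieves $v_2 + \delta_A^2 v_1 = 1 + (1-\epsilon')(1+\epsilon)$, whose ratio to the greedy welfare tends to $2$.

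It then remains to verify that the proposed profile is indeed an equilibrium. For bidder $1$, any deviation $b_1' > 0$ preserves slot $1$ at price $0$ and utility $v_1$; the only qualitatively different option is $b_1' = 0$, which, under an adversarial tie-break, moves bidder $1$ to slot $2$ for utility $\delta_A^2 v_1 = (1-\epsilon')(1+\epsilon) < v_1$, so no strict improvement is possible. For bidder $2$, the conservative constraint $b_2' \leq 1 < 1+\epsilon = b_1$ precludes winning slot $1$, so bidder $2$ is assigned slot $2$ for any admissible bid and obtains utility $\delta_B^2 v_2 = 0$; hence $b_2 = 0$ is a (weak) best response.

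The main delicate point is the tuning of the two perturbations. The parameter $\epsilon$ is there to make bidder $1$ strictly outbid bidder $2$ under conservative bidding, ruling out a tie at slot $1$; the parameter $\epsilon'$ ensures bidder $1$'s retreat to slot $2$ is \emph{strictly} unprofitable, which is essential because the entire reason greedy gets away with the bad allocation is that bidder $1$ has no incentive to move. Letting $\epsilon, \epsilon' \to 0$ then drives the welfare ratio to $2$, establishing the lower bound.
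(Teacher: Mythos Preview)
Your construction is correct and is essentially the paper's own: two bidders, two slots, one bidder with a (nearly) flat discount curve and slightly higher value bids his value and takes slot~1, while the bidder with the steep curve $(1,0)$ bids~$0$ and is stranded in slot~2, giving a welfare ratio approaching~$2$. The only cosmetic differences are that the paper swaps the labels (its slot-1 winner has curve exactly $(1,1)$ rather than $(1,1-\epsilon')$) and therefore needs only one limit parameter; your extra $\epsilon'$ is harmless but unnecessary, since indifference of the slot-1 winner already suffices for a Nash equilibrium.
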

\begin{proof}
	Consider the following example: there are 2 slots and 2 bidders, one of type A and one of Type B. Let $\boldsymbol{\delta_A}=(1,0)$, $\boldsymbol{\delta_B}=(1,1)$, and let $v_A = (1-\epsilon)v_B$, $\epsilon>0$. Then the allocation $(A,B)$ gets payoff $v_A+v_B=(2-\epsilon) v_B$, while the allocation $(B,A)$ gets welfare $v_B$. 
	
	We claim that the following is an equilibrium: A bids $0$ and B bids $v_B$, giving the allocation $(B,A)$. To see that this is an equilibrium, notice that if these are the bids, $b_B>b_A$, so B will be given the first slot at a price of $b_A=0$ for a total payoff of $v_B$. Since price is bounded below by $0$, B could not gain by deviating any lower. On the other hand, in the second slot, $A$ gets no value, but also is not charged, for a payoff of $0$. To change anything, A would have to change the allocation, and so bid above $b_B = v_A$ - but then she would get a payoff of $v_A-v_B = (1-\epsilon)v_B- v_B \leq 0$; hence she also would not like to switch. And note that since $0 \leq b_A$ and $v_B \leq v_A$, neither bidder is overbidding. But thus we see that 
	\begin{align*}
		\frac{OPT}{EQ} = \frac{v_A+v_B}{v_B} = \frac{(2-\epsilon)v_B}{v_B} = 2-\epsilon
		\end{align*}
	
	and so the Price of Anarchy can be made arbitrarily close to 2. 
	
\end{proof}
Note the equilibrium described is not unique - for instance, $b_A= (1-\epsilon)v_B$, $b_B=v_B$ would also be an equilibrium that achieves the same allocation. 

\longversion{
For some intuition as why such a simple example can get a bad price of anarchy, notice that two slot case can be mapped to a standard second price auction for the first slot, where one bidder has a good outside option and the other doesn't. By including the outside options, a socially-minded auctioneer could do significantly better than just considering the bid and valuations of the item in question. 
}
\subsection{Greedy Allocation and VCG Pricing}
In this section, we consider the Price of Anarchy when $(\A, \P_{\A})$ is (Greedy, VCG).  Again, using greedy allocation guarantees the first condition of Theorem \ref{thm:receip}. It is not obvious that bidders will not be overcharged. It is, however, true, as we show in the following Lemma:
\begin{lemma}\label{lem:greedyvcgnooc}
Let $(\A, \P_{\A})$ be the greedy algorithm with VCG pricing. We claim that for every bidder, their charge will not exceed their effective bid.
\end{lemma}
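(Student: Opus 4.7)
My plan is to express the VCG externality as a welfare difference and reduce the claim to the monotonicity statement that removing a bidder can only (weakly) decrease the apparent greedy welfare. By the externality formula, the VCG charge for bidder $i$ under greedy allocation is $\P_{\A}(\mathbf{b})_i = \widehat{W}^{-i}(\mathbf{b}_{-i}) - \widehat{W}^{-i}(\mathbf{b})$, where $\widehat{W}^{-i}(\cdot)$ denotes the apparent welfare of all bidders other than $i$ under the greedy allocation for the indicated bid vector. Since $i$ is absent from $\mathbf{b}_{-i}$, we have $\widehat{W}^{-i}(\mathbf{b}_{-i}) = \widehat{W}(\mathbf{b}_{-i})$, and by definition $\widehat{W}^{-i}(\mathbf{b}) = \widehat{W}(\mathbf{b}) - \delta_{\tau(i)}^{\sigma(i,\mathbf{b})} b_i$. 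Rearranging, the ``no-overcharging'' inequality $\P_{\A}(\mathbf{b})_i \le \delta_{\tau(i)}^{\sigma(i,\mathbf{b})} b_i$ is equivalent to $\widehat{W}(\mathbf{b}_{-i}) \le \widehat{W}(\mathbf{b})$, so the lemma reduces entirely to monotonicity of the apparent greedy welfare in the bidder set.

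To prove this monotonicity, I would run greedy on $\mathbf{b}$ and on $\mathbf{b}_{-i}$ in parallel, processing slots top-to-bottom, and track the unallocated pools $U_s$ and $U'_s$ just before each slot $s$. The key invariant is that throughout the process $U_s = U'_s \cup \{e_s\}$ for a single ``extra'' bidder $e_s$, starting with $e_1 = i$. Under this invariant, only two cases can arise at slot $s$. If greedy on $\mathbf{b}$ does not pick $e_s$, then $e_s$ is not the maximizer of $\delta_{\tau(j)}^{s} b_j$ over $U_s$, so both runs pick the same bidder with the same slot-$s$ contribution, and the invariant is preserved with $e_{s+1} = e_s$. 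If greedy on $\mathbf{b}$ does pick $e_s$, then by greedy's local optimality $e_s$ maximizes $\delta_{\tau(j)}^{s} b_j$ over $U_s$, so whichever $j' := \pi(s, \mathbf{b}_{-i}) \in U'_s$ the second run picks must satisfy $\delta_{\tau(e_s)}^{s} b_{e_s} \ge \delta_{\tau(j')}^{s} b_{j'}$; the slot-$s$ contribution difference is therefore nonnegative, and the invariant is restored with $e_{s+1} := j'$.

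Summing these per-slot differences (which are $0$ or nonnegative, and also handle the degenerate boundary case $U'_s = \emptyset$ in which the second run simply contributes $0$ at slot $s$) yields $\widehat{W}(\mathbf{b}) - \widehat{W}(\mathbf{b}_{-i}) \ge 0$, completing the reduction and hence the lemma. I expect the main obstacle to be careful bookkeeping of the single-extra-bidder invariant across the ``shift'' that occurs when greedy on $\mathbf{b}$ picks the extra bidder, because the identity of the extra passes from $e_s$ to whoever $\mathbf{b}_{-i}$ just selected; once that invariant is pinned down, each per-slot inequality is immediate from greedy's definition.
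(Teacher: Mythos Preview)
Your proof is correct and takes a genuinely different route from the paper's.

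The paper establishes a recursive price formula: relabeling so bidder $i$ occupies slot $i$, it shows $p_i = p_{j^*} + (\delta_{\tau(j^*)}^{i} - \delta_{\tau(j^*)}^{j^*}) b_{j^*}$, where $j^*$ is the bidder who takes slot $i$ when $i$ is removed. The argument for this recursion relies on observing that when $i$ is deleted, slots strictly between $i$ and $j^*$'s original position are filled by exactly the same bidders as before (since $j^*$ was losing at each of those slots anyway), so the only externality $i$ imposes is the direct one on $j^*$ plus whatever $j^*$'s own removal would cause below. Strong induction from the bottom slot then gives $p_i \le \delta_{\tau(j^*)}^{i} b_{j^*} \le \delta_{\tau(i)}^{i} b_i$.

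Your approach instead rewrites the no-overcharging inequality as the single statement $\widehat{W}(\mathbf{b}_{-i}) \le \widehat{W}(\mathbf{b})$ and proves greedy-welfare monotonicity directly via a slot-by-slot coupling with the ``single extra bidder'' invariant. This is more elementary: you never need to identify $j^*$ or argue about which slots are unaffected, and the per-slot inequality is immediate from greedy's local optimality. The paper's route, by contrast, yields the explicit recursive price structure as a byproduct, which may be of independent interest but is not needed for the lemma itself. Both arguments are essentially exploiting the same cascading structure of greedy under deletion; yours packages it as a welfare comparison, the paper's as a price recursion.
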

\longversion{
\begin{proof}
	We will prove this by strong induction. First, we relabel the bidders so that Bidder $i$ is in Slot $i$ post-allocation. Now, consider the removal of bidder $i$. First notice that this will not affect the assignment to any $i'$ \emph{above} $i$. So any price that $i$ must pay will come from the externalities he imposes on $i'>i$. 

	Now, we claim that the following is true:
	\begin{align}\label{eq:payment}
	p_i = p_{j^*} + \left(\delta_{\tau(j^*)}^{i} - \delta_{\tau(j^*)}^{j^*}\right) b_{j^*}
	\end{align}
	where $j^*$ is the bidder that is assigned to Slot $i$ in the absence of Bidder $i$. (In keeping with our formal notation, $j^*\coloneqq \pi(i,(\mathbf{b}_{-i}))$.) 
	
	To see that this is true, imagine re-running the auction without $i$ included. Slots $1...i-1$ will be allocated the same way, and then at Slot $i$ some bidder $j^*$ will be allocated that would have been allocated further down had $i$ been included. Now, as $j^*$ moves up to $i$, he has not affected the winning bid calculations of all slots \emph{between} $i$ and $j^*$ relative to what they were when $i$ was included.
	
	 But that means that the only externalities that $i$ imposes are those on $j^*$ and below. Note that when we consider $j^*$ taking the slot of $i$, the arrangement of the bidders below $j^*$ will be exactly the same as if $j^*$ were the initially removed bidder instead of $i$ - but this is exactly the re-arrangement that generates the price $j^*$ pays. Hence, $i$'s total payment is the payment of $j^*$ plus the externality he imposes on $j^*$, which is $b_{j^*}(\delta_{\tau(j^*)}^{i} - \delta_{\tau(j^*)}^{j^*})$. But this is exactly what is claimed in Equality \ref{eq:payment}. Then we can write:
	
	\begin{align*}
	p_i = &p_{j^*} + b_{j^*} (\delta_{j^*}^i-\delta_{j^*}^{j^*})\\&=
	p_{j^*} - \delta_{j^*}^{j*} b_{j^*} + \delta_{j^*}^i b_{j^*} 
	\end{align*}
	
Now we invoke strong induction. Suppose that all bidders below $i$ are not overcharged, i.e. $\forall j$ assigned to a slot below $i$'s, $p_j \leq \delta_{j}^{j} b_j$. Then in particular, $p_j^* - \delta_{j^*}^{j^*} b_{j^*} \leq 0$, so that we conclude:
	\begin{align*}
	p_i = p_{j^*} - \delta_{j^*}^{j*} b_{j^*} + \delta_{j^*}^i b_{j^*} \leq \delta_{j^*}^ib_{j^*} \leq \delta_i^i b_i
	\end{align*}
	where the last inequality follows by the fact that $i$ was chosen over $j$ for Slot $i$. Finally, note that Bidder $n$ pays 0, since there are no bidders below him to exert an externality on; thus, applying strong induction starting from the bottom yields the claim.
	
\end{proof}
}
\shortversion{
	\begin{proofsketch}
To prove this claim, we characterize the price paid by every bidder. In particular, it turns out to have a simple form: the price $i$ pays is exactly sum of the externality exerted (measured in terms of effective bid, not value, since again that is all the algorithm has access to) on the bidder, say $j$, who would have taken $i$'s spot in his absence \emph{plus} the price that $j$ is charged. This turns out to be equivalent to $j$'s value for $i$'s slot minus $j$'s surplus for being in his own slot. Then, we can use strong induction: we suppose that all bidders below $i$ (including j) are not overcharged (which is trivially true for the last bidder), which means that their surplus is positive, and hence, the price $i$ pays must be less than $j$'s value for the slot. And since $i$ won the slot, $i$ has higher value for the slot than $j$, so $i$ is not being overcharged. 
\end{proofsketch}
}
Lemma \ref{lem:greedyvcgnooc} allows us to conclude that (Greedy, VCG) satisfies the conditions of Theorem \ref{thm:receip}, yielding the following Theorem:
\begin{theorem} \label{thm:ubgreedyvcg}
	Let $(\A, \P_{\A})=$ (Greedy, VCG). Then the Price of Anarchy is at most 2. 
\end{theorem}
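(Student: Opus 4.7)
The plan is to leverage Lemma \ref{lem:greedyvcgnooc} together with the specific structure of VCG pricing to strengthen the generic $(1/2,1)$-semismoothness of Theorem \ref{thm:receip} into $(1,1)$-semismoothness, which yields the sharper bound PoA $\leq (1+1)/1 = 2$. First, I would verify the two hypotheses of Theorem \ref{thm:receip}: greedy allocation is used by assumption, and Lemma \ref{lem:greedyvcgnooc} supplies no-overcharging, $[\P_{\A}(\mathbf{b})]_i \leq \mathbf{b}_i$. Applied directly with the deviation $b'_i = v_i/2$ this only recovers PoA $\leq 4$, so the substance of the argument is in improving the deviation.

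The key refinement is to use the truthful deviation $b'_i = v_i$ in place of $v_i/2$, which is safe precisely because of no-overcharging. Under VCG, this deviation admits the clean decomposition
\begin{align*}
u_i(v_i,\mathbf{b}_{-i}) \;=\; \widehat{W}_{\text{greedy}}(v_i,\mathbf{b}_{-i}) \;-\; \widehat{W}^{-i}_{\text{greedy}}(\mathbf{b}_{-i}),
\end{align*}
i.e.\ $i$'s utility equals the marginal apparent-welfare contribution of inserting $i$ at bid $v_i$ into the greedy run. I would then mirror the two-case structure of Theorem \ref{thm:receip}, targeting the per-player inequality $u_i(v_i,\mathbf{b}_{-i}) \geq \delta^{\nu(i)}_{\tau(i)} v_i - \delta^{\nu(i)}_{\tau(\pi(\nu(i),\mathbf{b}))} v_{\pi(\nu(i),\mathbf{b})}$.

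For the easy case, suppose $i$ is assigned a slot $\sigma' \prec \nu(i)$ under greedy$(v_i,\mathbf{b}_{-i})$. Then some other bidder occupies $\nu(i)$ with effective bid at least $\delta^{\nu(i)}_{\tau(i)} v_i$, since otherwise greedy would have preferred $i$. Viewing $\mathbf{b}$ as a downward deviation from $(v_i,\mathbf{b}_{-i})$ in coordinate $i$ and invoking Lemma \ref{lem:partialmon} shows that the bidder at $\nu(i)$ under $\mathbf{b}$ still carries effective bid at least $\delta^{\nu(i)}_{\tau(i)} v_i$; combined with conservativity ($b_j \leq v_j$), the right-hand side is non-positive, and $u_i \geq 0$ from no-overcharging closes the inequality.

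The hard case, and what I expect to be the main obstacle, is when $\sigma' \succeq \nu(i)$. Here $u_i \geq 0$ from no-overcharging is far from enough, since the RHS of the target inequality can be strictly positive. The plan is to bound the VCG payment not by $v_i$ but by the telescoping displacement-chain produced when $i$ enters at bid $v_i$: the externality $p_i\cdot \delta^{\sigma'}_{\tau(i)}$ equals the sum of apparent-welfare drops along the cascade $\sigma' \to \nu(i) \to \cdots$, and Lemma \ref{lem:partialmon} together with greedy's top-down choice identifies the bidder evicted from $\nu(i)$ with $\pi(\nu(i),\mathbf{b})$ and bounds the drop at that link by $\delta^{\nu(i)}_{\tau(\pi(\nu(i),\mathbf{b}))} v_{\pi(\nu(i),\mathbf{b})}$ (using conservativity). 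Combined with $\delta^{\sigma'}_{\tau(i)} v_i \geq \delta^{\nu(i)}_{\tau(i)} v_i$ from monotone discount curves, this would deliver the $(1,1)$-semismoothness inequality and hence PoA $\leq 2$ via the theorem of \cite{caragiannis2015bounding}.
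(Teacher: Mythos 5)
There is a genuine gap, and it is worth separating two issues. First, the paper's own proof of this theorem is simply the observation that Lemma \ref{lem:greedyvcgnooc} supplies the no-overcharging hypothesis of Theorem \ref{thm:receip}, so (Greedy, VCG) is $(1/2,1)$-semismooth and PoA $\leq 4$ --- which is the bound reported in Table \ref{tab:poauper} and in the introduction; the ``2'' in the theorem statement is inconsistent with the paper's own argument and appears to be an error. You correctly noticed that the paper's machinery only yields $4$ and tried to bridge the gap to $2$ by upgrading to $(1,1)$-semismoothness with the truthful deviation $b_i'=v_i$. Unfortunately that upgrade is false, and the failure is exactly where you anticipated trouble: in the case $\sigma' \succeq \nu(i)$ you need the \emph{entire} VCG externality (a cascade of apparent-welfare drops, each evaluated at the slot the displaced bidder would occupy in $i$'s \emph{absence}) to be bounded by the single term $\delta^{\nu(i)}_{\tau(\pi(\nu(i),\mathbf{b}))} v_{\pi(\nu(i),\mathbf{b})}$, which is evaluated at slot $\nu(i)$ and can even be zero while the externality is large.

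Concretely: take two slots and two bidders, with bidder $1$ of a type with discount curve $(1,1)$ and $v_1=1$, and bidder $2$ of a type with discount curve $(1,0)$ and $v_2=0.9$; let $\mathbf{b}=(1,\tfrac12)$, which is conservative. The optimal assignment puts bidder $2$ in slot $1$ and bidder $1$ in slot $2$, so $\nu(1)=2$ and $\mathrm{OPT}=1.9$, while greedy on $\mathbf{b}$ puts bidder $1$ in slot $1$ and bidder $2$ in slot $2$, so $W(\mathbf{b})=1$ and $\delta^{\nu(1)}_{\tau(2)}v_2 = 0\cdot 0.9=0$. Deviating truthfully, bidder $1$ still wins slot $1$ and pays the externality $\tfrac12$ (without him, bidder $2$ moves up to slot $1$ with apparent value $\tfrac12$), so $u_1(v_1,\mathbf{b}_{-1})=\tfrac12$, whereas your target inequality demands $u_1 \geq \delta^{2}_{\tau(1)}v_1 - \delta^{2}_{\tau(2)}v_2 = 1$. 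The aggregate version fails too: $u_2(v_2,\mathbf{b}_{-2})=0$, so $\sum_i u_i(v_i,\mathbf{b}_{-i})=\tfrac12 < 0.9 = \mathrm{OPT}-W(\mathbf{b})$. Since semismoothness must hold for \emph{every} (conservative) profile, not merely at equilibrium, this kills the $(1,1)$ claim. A smaller point: in your ``easy case'' you cite Lemma \ref{lem:partialmon}, but that lemma bounds the occupant's effective bid under the \emph{higher} profile from below by that under the lower one, which is the wrong direction for your purposes; the correct tool (used in the proof of Theorem \ref{thm:receip}) is that a downward deviation by $i$ leaves the occupants of all slots above $i$'s original slot literally unchanged. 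The safe conclusion here is the paper's: $(1/2,1)$-semismoothness and PoA $\leq 4$.
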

For lower bounds, we again find a suboptimal equilibrium:
\begin{theorem} \label{thm:lbgreedyvcg} Let $(\A, \P_{\A}) =$ (Greedy, VCG). The Price of Anarchy is at least $3/2$. 
	\end{theorem}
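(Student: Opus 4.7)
The plan is to exhibit a two-slot, two-bidder instance together with a conservative Nash equilibrium under (Greedy, VCG) whose welfare is only a $2/3$ fraction of the optimum (up to a vanishing $\epsilon$). Concretely, I would take two ad types with $\boldsymbol{\delta_A} = (1, 1/3)$ and $\boldsymbol{\delta_B} = (1, 1)$, one bidder of each type, and valuations $v_A = 1 - \epsilon$ and $v_B = 1$. The optimum allocation is $(A, B)$---placing $A$ in slot~$1$ wastes none of his top discount, while $B$'s flat curve retains full value in slot~$2$---giving welfare $v_A + v_B = 2 - \epsilon$. The candidate equilibrium allocation $(B, A)$ has welfare $v_B + (1/3) v_A = 4/3 - \epsilon/3$, so the ratio tends to $3/2$ as $\epsilon \to 0$.

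For the equilibrium itself I would propose the bids $b_A = 0$ and $b_B = v_B = 1$, under which greedy assigns $B$ to slot~$1$ (effective bid $1 > 0$) and $A$ to slot~$2$. The VCG prices both come out to zero: $B$'s externality on $A$ is $\delta_A^1 b_A - \delta_A^2 b_A = 0$ because $b_A = 0$, and $A$'s externality on $B$ is $\delta_B^1 b_B - \delta_B^1 b_B = 0$ because $\delta_B^1 = \delta_B^2 = 1$, so $B$'s discount in the $-A$ auction matches that in the full auction. I would then verify deviations in turn: any conservative deviation $b_A' \leq v_A = 1 - \epsilon$ stays strictly below $b_B = 1$, so $A$ cannot displace $B$ from slot~$1$ and her utility remains pinned at $(1-\epsilon)/3$; any deviation $b_B' \in [0, 1]$ either preserves $B$'s slot-$1$ win at price $0$ (utility $1$) or, in the event of a tie at $b_B' = 0$, potentially demotes $B$ to slot~$2$, which still yields value $1$ at price $0$ since $\delta_B^2 = 1$. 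Neither bidder can therefore strictly improve.

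The main obstacle is precisely this deviation analysis, since (Greedy, VCG) is not incentive compatible and a priori a unilateral move could be profitable. The construction circumvents this in two essential ways that must be made explicit. First, conservative bidding caps $A$'s bid below $v_B = b_B$, which physically prevents her from ever outbidding $B$ for slot~$1$; without the conservative restriction one would instead have to argue that $A$'s gain from winning slot~$1$ does not exceed her induced VCG payment. Second, the flat curve $\delta_B^1 = \delta_B^2 = 1$ simultaneously forces $B$'s VCG payment to zero and makes $B$ indifferent between slots, so no downward counter-deviation by $B$ helps. The particular choice $\delta_A^2 = 1/3$ is what calibrates the welfare loss of the bad allocation to exactly $1/3$ of optimum, yielding the claimed $3/2$ ratio; more aggressive choices of $\delta_A^2$ would in fact strengthen the bound, so what is stated is a clean rather than tight lower bound.
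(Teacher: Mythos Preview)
Your proof is correct, and it takes a genuinely different route from the paper. The paper constructs a three-bidder, three-slot instance in which \emph{truthful} bidding $(b_A,b_B,b_C)=(v_A,v_B,v_C)$ is a suboptimal Nash equilibrium, and then verifies that no bidder can profitably change slots; the ratio of optimal to equilibrium welfare tends to $3/2$. You instead reuse the paper's two-bidder, two-slot (Greedy,~GSP) lower-bound construction almost verbatim---flat curve for $B$, steep curve for $A$, and $b_A=0$---observing that the same bid profile is also a conservative Nash equilibrium under VCG pricing because $B$'s flat curve forces both externality payments to zero.

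Your approach is strictly simpler, and your closing observation is correct and worth highlighting: pushing $\delta_A^2$ down to $0$ keeps the equilibrium argument intact and drives the ratio to $2$, so your construction actually yields a stronger lower bound than the paper states. The trade-off is that your equilibrium hinges on $A$ playing the weakly dominated bid $b_A=0$ (against some opponent bids, raising $b_A$ strictly helps and never hurts), whereas the paper's three-bidder example exhibits the inefficiency at the truthful profile, which one might view as a more ``natural'' equilibrium. For the PoA as defined in the paper---a worst case over \emph{all} Nash equilibria with conservative bidders---your construction is perfectly valid.
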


\begin{proofsketch}
Let $v_A=1+\epsilon$, $v_B = 1$, $v_C = 1-\epsilon$. Let $\mathbf{\delta_A} = (1, 1,1-2\epsilon)$, $\mathbf{\delta_B}=(1,1,0)$, $\mathbf{\delta_C}=1, \epsilon, \epsilon^2$. 

The welfare of $(C,B,A)$ is $3-2\epsilon -2\epsilon^2$, while the welfare of $(A,B,C)$ is $2+\epsilon+\epsilon^2 -\epsilon^3$. Suppose that each player bids their value, ie:
\begin{align*}
\mathbf{b}^* = (b_A,b_B,b_C) = (v_A,v_B,v_C) = (1+\epsilon, 1, 1-\epsilon).
\end{align*}
Then the allocation $(A,B,C)$ will be chosen, despite being suboptimal. The rest of the proof consists in showing that no player has an incentive to unilaterally deviate. Note that we need only consider deviations that change the selected ordering; the values and bids are chosen in such a way that any bidder that could improve their ordering would suffer too high a high price, and any bidder that could lower their ordering prefers where they are at the price they are paying. Together, this means that we have exhibited an equilibrium where $OPT/EQ$ can be made arbitrarily close to $3/2$ by taking $\epsilon$ small. 
\end{proofsketch}

\subsection{Optimal Allocation and GSP Pricing}
In the case of Optimal Allocation and GSP pricing, we will obtain a smoothness result that depends on the largest and smallest discounts and the number of bidders, but not on the valuation profile. The result is as follows:
\begin{theorem}\label{thm:smooth-optgsp}
	Suppose $(\A, \P_{\A})$ is optimal allocation and GSP pricing. Then the game between bidders is	$(\frac{1}{2}, \frac{\delta^{\max}}{\delta^{\min}}(n-1))$-semismooth.
\end{theorem}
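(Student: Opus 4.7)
The plan is to follow the template of Theorem \ref{thm:receip}: consider the deterministic deviation in which bidder $i$ bids half her value, $b_i' = v_i/2$, and case-split on the slot $\sigma' := \sigma_\A(i, (v_i/2, \mathbf{b}_{-i}))$. Under (Opt, GSP), no-overcharging holds by definition of GSP (the minimum retaining bid is at most $b_i' = v_i/2$), so $u_i(v_i/2, \mathbf{b}_{-i}) \geq \tfrac{1}{2}\delta_{\tau(i)}^{\sigma'} v_i \geq 0$. If $\sigma' \succeq \nu(i)$ (Case 1), monotonicity of the discount curve immediately yields $u_i \geq \tfrac{1}{2}\delta_{\tau(i)}^{\nu(i)} v_i$, which already contributes the full $\lambda$-term with no need to charge against $W(\mathbf{b})$.

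For Case 2 ($\sigma' \prec \nu(i)$), let $j^* := \pi_\A(\nu(i), (v_i/2, \mathbf{b}_{-i}))$ denote the bidder who occupies $\nu(i)$ under the deviation. Since that allocation is a maximum-weight matching on apparent bids, swapping $i$ and $j^*$ cannot improve apparent welfare, giving the swap inequality
\begin{align*}
\bigl(\delta_{\tau(j^*)}^{\nu(i)} - \delta_{\tau(j^*)}^{\sigma'}\bigr)\, b_{j^*} \ \geq\ \bigl(\delta_{\tau(i)}^{\nu(i)} - \delta_{\tau(i)}^{\sigma'}\bigr)\,\tfrac{v_i}{2}.
\end{align*}
Dropping the nonnegative $\delta_{\tau(j^*)}^{\sigma'} b_{j^*}$ and applying conservatism $b_{j^*} \leq v_{j^*}$ converts this into $\delta_{\tau(j^*)}^{\nu(i)} v_{j^*} \geq \tfrac{1}{2}\delta_{\tau(i)}^{\nu(i)} v_i - \tfrac{1}{2}\delta_{\tau(i)}^{\sigma'} v_i$. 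Combined with the Case-2 no-overcharging lower bound $u_i \geq \tfrac{1}{2}\delta_{\tau(i)}^{\sigma'} v_i$, the two inequalities add to yield the per-player bound
\begin{align*}
u_i(v_i/2, \mathbf{b}_{-i}) \ \geq\ \tfrac{1}{2}\delta_{\tau(i)}^{\nu(i)} v_i - \delta_{\tau(j^*)}^{\nu(i)} v_{j^*}.
\end{align*}

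The remaining step is to sum over $i$ and absorb the $j^*$ terms into $W(\mathbf{b})$. The discount-ratio bound $\delta_{\tau(j^*)}^{\nu(i)} v_{j^*} \leq \tfrac{\delta^{\max}}{\delta^{\min}} \delta_{\tau(j^*)}^{\sigma(j^*, \mathbf{b})} v_{j^*}$ reexpresses each charge in terms of $j^*$'s actual contribution to the welfare under $\mathbf{b}$. Since $j^* \neq i$ in Case 2, any fixed bidder $j$ can play the role of $j^*$ for at most $n-1$ distinct $i$'s, so a multiplicity argument gives $\sum_{i \in \text{Case 2}} \delta_{\tau(j^*)}^{\sigma(j^*, \mathbf{b})} v_{j^*} \leq (n-1)\, W(\mathbf{b})$. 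Combining with the Case 1 contributions yields $\sum_i u_i(v_i/2, \mathbf{b}_{-i}) \geq \tfrac{1}{2}\mathrm{OPT} - (n-1)\tfrac{\delta^{\max}}{\delta^{\min}} W(\mathbf{b})$, which is exactly $(\tfrac{1}{2}, (n-1)\tfrac{\delta^{\max}}{\delta^{\min}})$-semismoothness.

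The main obstacle is precisely that the clean per-player sufficient condition used for greedy does not apply directly here: the bidder $j^*$ arising from the swap argument is the deviation's occupant of $\nu(i)$, not $\pi_\A(\nu(i), \mathbf{b})$, and because optimal allocation has no analogue of the top-to-bottom monotonicity exploited in Lemma \ref{lem:partialmon}, a unilateral bid change can reshuffle the entire matching. Routing through the discount-ratio bound plus the multiplicity over bidders is what forces the $(n-1)\delta^{\max}/\delta^{\min}$ loss, consistent with the authors' remark that the resulting PoA bound is very likely pessimistic and could be sharpened by exploiting more structure (e.g., geometric discount curves).
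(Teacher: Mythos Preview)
Your proof is correct and reaches the same $(\tfrac{1}{2},(n-1)\tfrac{\delta^{\max}}{\delta^{\min}})$ conclusion, but the Case~2 argument is genuinely different from the paper's. The paper compares the deviation allocation to the \emph{entire} optimal assignment $\boldsymbol{\nu}$: since $\A$ maximizes apparent welfare, it writes $\delta_{\tau(i)}^{\sigma'}\tfrac{v_i}{2} \geq \delta_{\tau(i)}^{\nu(i)}\tfrac{v_i}{2} + \widehat{W}_{\boldsymbol{\nu}}^{-i} - \widehat{W}^{-i}(b_i',\mathbf{b}_{-i})$, drops $\widehat{W}_{\boldsymbol{\nu}}^{-i}$, and then crudely bounds $W^{-i}(b_i',\mathbf{b}_{-i}) \leq \delta^{\max}\sum_{j\neq i} v_j$ and $W(\mathbf{b}) \geq \delta^{\min}\sum_j v_j$ to get the ratio; summing brings in the $(n-1)$ via $\sum_i \sum_{j\neq i} v_j = (n-1)\sum_j v_j$. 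You instead use a \emph{local} two-exchange (swap $i$ with $j^*$), which yields a per-bidder charge against a single term $\delta_{\tau(j^*)}^{\nu(i)} v_{j^*}$; the discount-ratio step rewrites this as a multiple of $j^*$'s contribution to $W(\mathbf{b})$, and the $(n-1)$ enters through the multiplicity bound on how often a fixed $j$ can serve as $j^*$. Your route is tighter at the per-bidder level (one term versus the full $\widehat{W}^{-i}$) and makes clearer where the $(n-1)$ slack actually comes from, while the paper's global comparison is conceptually simpler but throws away more in the $\widehat{W}_{\boldsymbol{\nu}}^{-i}$ drop. Both rely on the same implicit assumption that every bidder is allocated under $\mathbf{b}$ (so $\delta_{\tau(j^*)}^{\sigma(j^*,\mathbf{b})} \geq \delta^{\min}$ is well-defined), which the paper also uses.
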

To prove this result, we begin by observing that GSP pricing will never charge a bidder more than his effective bid\shortversion{, which is easy to see}. Formally: 
\begin{lemma}\label{lem:opt-gsp-upperboundprice} 
	In (Opt, GSP), bid upper bounds price. \end{lemma}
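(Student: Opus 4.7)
The plan is to appeal directly to the defining property of GSP pricing given in the preliminaries,
\[
\left[\P_\A(\mathbf{b})\right]_i \;=\; \argmin_{b \,:\, \A(b, \mathbf{b}_{-i})_i \,=\, \A(\mathbf{b})_i} b,
\]
which holds regardless of the choice of allocation algorithm $\A$. The claim is that the minimizer on the right-hand side is no greater than $b_i$, i.e., that the minimum bid preserving $i$'s slot assignment cannot exceed the bid $i$ actually submitted.

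The key step is a one-line feasibility observation: substituting $b = b_i$ into the constraint gives $\A(b_i, \mathbf{b}_{-i}) = \A(\mathbf{b})$ tautologically, so $b_i$ itself belongs to the set of bids that preserve $i$'s slot. Consequently the minimum of that set is at most $b_i$, giving $\P_\A(\mathbf{b})_i \leq b_i$. Scaling both sides by the nonnegative slot discount $\delta_{\tau(i)}^{\sigma(i,\mathbf{b})}$ then yields the corresponding inequality for the expected per-impression charge, so effective price is upper-bounded by effective bid as well.

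There is essentially no obstacle to carrying out this argument; the content is definitional. The only subtlety, specific to optimal allocation, is that ties in the max-weight matching can make the feasibility set a half-open interval rather than a closed one, so the infimum need not be realized by any attained bid. This is immaterial: the cited tie-breaking procedure for (Opt, GSP) resolves which bid is posted as the price, and in either interpretation the infimum is still bounded above by $b_i$, since $b_i$ lies in the closure of the feasible set. The real work lies not in this lemma but in leveraging it within Theorem~\ref{thm:smooth-optgsp}, where the challenge will be to bound the semismoothness parameter $\mu$ in terms of $\delta^{\max}/\delta^{\min}$ and $n$ by tracking how a ``bid half your value'' deviation can reshuffle the entire optimal matching and redirect surplus to other bidders.
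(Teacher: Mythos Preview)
Your argument is correct and is essentially identical to the paper's own proof: both observe that the bidder's actual bid $b_i$ trivially satisfies the GSP feasibility constraint $\A(b,\mathbf{b}_{-i})_i=\A(\mathbf{b})_i$, so the minimizing price cannot exceed $b_i$. Your added remarks on tie-breaking and the discounted version are fine but go slightly beyond what the paper records.
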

\longversion{
\begin{proof}
	By definition, the GSP price is the minimum the bidder could have bid and still earned the slot given the allocation algorithm and the other bids. But in particular, they could have bid exactly their bid and received their slot (because they did). Hence, the minimum they could have bid to receive the slot can never be more than whatever they actually bid. 
\end{proof}
}

\begin{proof}[Proof of Theorem \ref{thm:smooth-optgsp}]
	Again, assume that the deviation is to $b_i' = v_i/2$, and show that:
	\begin{align*}
	\sum_{i} u_i(b_i',\mathbf{b}_{-i}) \geq \frac{1}{2} \text{OPT} - \text{SW} (\mathbf{b})
	\end{align*}
	Now, let $\nu(i)$ be the slot of $i$ under the optimal allocation. Then if $i$ receives some slot $\sigma(i, b_i', \mathbf{b}_{-i}) \succ \nu(i)$ then by Lemma \ref{lem:opt-gsp-upperboundprice}, we have that:
	\begin{align*}
	u_i(b_i', \mathbf{b}_{-i}) &= \delta_{\tau(i)}^{\sigma(i,b_i', \mathbf{b}_{-i})}  v_i - p_i \geq \delta_{\tau(i)}^{\sigma(i,b_i', \mathbf{b}_{-i})} v_i- \delta_{\tau(i)}^{\sigma(i,b_i', \mathbf{b}_{-i})} \frac{v_i}{2} = \delta_{\tau(i)}^{\sigma(i,b_i', \mathbf{b}_{-i})} \frac{v_i}{2} \geq \delta_{\tau(i)}^{\nu(i)} \frac{v_i}{2}
	\end{align*}
	Otherwise, suppose deviating to $b_i'$ gets $i$ a slot $\sigma(i, b_i', \mathbf{b}_{-i}) \prec \nu(i)$. 
	Then since the allocation algorithm maximizes (apparent) welfare and allocating $i$ to $\nu(i)$ was feasible, it must be that:
	
	\begin{align*}
	\delta_{\tau(i)}^{\sigma(i,b_i', \mathbf{b}_{-i})}\frac{v_i}{2} + \sum_{j\neq i} \delta_{\tau(j)}^{\sigma(j,b_i', \mathbf{b}_{-i})} \mathbf{b}_j \geq \delta_{\tau(i)}^{\nu(i)} \frac{v_i}{2} + \sum_{j\neq i} \delta_{\tau(j)}^{\nu(j)}\mathbf{b}_j
	\end{align*}
	
Here, the summation on the left-hand side is the apparent welfare (excluding $i$) given the allocation selected under deviation ($b_i', \mathbf{b}_{-i}$); we will write this quantity as as $\widehat{W}^{-i}(b_i', \mathbf{b}_{-i})$. The summation on the right-hand side is what the apparent welfare  (excluding $i$) would be if the (truly optimal) assignment $\mathbf{\nu}$ had been chosen instead; we will write this as $\widehat{W}_{\boldsymbol{\nu}}^{-i}$. Then we write:
	\begin{align*}
	\delta_{\tau(i)}^{\sigma(i,b_i', \mathbf{b}_{-i})} \frac{v_i}{2} \geq \delta_{\tau(i)}^{\nu(i)} \frac{v_i}{2} + \widehat{W}_{\mathbf{\nu}}^{-i} - \widehat{W}^{-i}(b_i', \mathbf{b}_{-i}).
	\end{align*}

	As Lemma 4 guarantees that the undiscounted price cannot be more than the bid, we have:
		\begin{align*}
		u_i(b_i',\mathbf{b}_{-i})= \delta_{\tau(i)}^{\sigma(i,b_i', \mathbf{b}_{-i})} v_i - p_i \geq \ \delta_{\tau(i)}^{\sigma(i,b_i', \mathbf{b}_{-i})} \frac{v_i}{2} \geq  \delta_{\tau(i)}^{\nu(i)} \frac{v_i}{2} + \widehat{W}_{\mathbf{\nu}}^{-i} - \widehat{W}^{-i}( b_i', \mathbf{b}_{-i}). 
	\end{align*}
	We can drop $\widehat{W}_{\mathbf{\nu}}^{-i}$ and still have a true inequality, so we focus on how different $\widehat{W}^{-i}(b_i', \mathbf{b}_{-i})$ can be from ${W}^{-i}(\mathbf{b})$. And since we assume conservative bids, we must have $\widehat{W}^{-i}( b_i', \mathbf{b}_{-i}) \leq  W^{-i}( b_i',\mathbf{b}_{-i})$. Hence, we can rewrite the inequality we have as:
	\begin{align*}
	u_i(b_i', \mathbf{b}_{-i}) \geq \delta_{\tau(i)}^{\nu(i)}\frac{v_i}{2}  - W^{-i}( b_i', \mathbf{b}_{-i})
	\end{align*}
	Now, we need to bound $W^{-i}(b_i', \mathbf{b}_{-i})$ in terms of $W(\mathbf{b})$. We will do this very coarsely. Notice that in any allocation, the algorithm will always fill all the slots. Let $\delta^{\max}$ be the maximum discount rate in the first slot - that is, $\max_{j} \delta_j^{1}$ - and let $\delta_{\min}$ be the minimum discount rate for the \emph{last} slot (i.e. $\min_{j} \delta_j^{n}$). By monotonicity and full allocation, we know then that at the very most, we have:
	\begin{align*}
	W^{-i}(b_i', \mathbf{b}_{-i}) \leq \sum_{j\neq i} \delta^{\max}v_j = \delta^{\max} \sum_{j \neq i} v_j 
	\end{align*} 
	and at the very least, we have:
	\begin{align*}
	W( \mathbf{b}) \geq \sum_{j} \delta^{\min} v_j = \delta^{\min} \sum_{j} v_j.
	\end{align*}
	
	But that means that whatever $W( \mathbf{b})$ is, we must have that:
	\begin{align*}
	W^{-i}(b_i', \mathbf{b}_{-i}) \leq \frac{\delta^{\max}}{\delta^{\min}} \cdot \frac{\sum_{j\neq i} v_j}{\sum_{j}v_j} W(\mathbf{b}).
	\end{align*}
	(To see this, just note that $ 1\leq W(\mathbf{b})/ (\delta^{\min} \sum_j v_j)$ and multiply the inequality with $W^{-i}$ by 1 and apply this inequality to $\delta^{\max} \sum_{j\neq i} v_j \cdot 1$.)

	But now, using this upper bound for $W^{-i}$ to upper bound the negative term in the inequality above, we can write that
	\begin{align}\label{ineq:opt-gsp-smooth}
	u_i(b_i',\mathbf{b}_{-i})  \geq {\delta^{\nu(i)}}\frac{v_i}{2} - \frac{\delta^{\max}}{\delta^{\min}} \frac{\sum_{j \neq i} v_j}{\sum_{j} v_j} W(\mathbf{b}).
	\end{align}
	Inequality \ref{ineq:opt-gsp-smooth} thus holds in the case that $i$ gets a worse slot than $\nu(i)$ under the deviation, but of course it also holds true in the case that $i$ gets a better slot. Thus, it always holds, so we can sum over bidders to write:
	\begin{align*}
	\sum_{i} u_i(b_i', \mathbf{b}_{-i}) &\geq \sum_{i} \delta^{\nu(i)} \frac{v_i}{2} - \frac{\delta^{\max}}{\delta^{\min}} \sum_i \frac{\sum_{j\neq i} v_j}{\sum_j v_j} W(\mathbf{b}) = \frac{\text{OPT}}{2} - \frac{\delta^{\max}}{\delta^{\min}} W(\mathbf{b}) \frac{1}{\sum_{j} v_j} \sum_i \sum_{j\neq i} v_j \\
	&= \frac{\text{OPT}}{2} - \frac{\delta^{\max}}{\delta^{\min}} W(\mathbf{b}) \frac{1}{\sum_j v_j} \cdot (n-1) \sum_{j} v_j
	\end{align*}
	where the last inequality follows since each agent's valuation appears exactly $n-1$ times over the double sum.
	But then we have that
	\begin{align*}
	\sum_i u_i(b_i', \mathbf{b}_{-i}) \geq \frac{\text{OPT}}{2} - \frac{\delta^{\max}}{\delta^{\min}} (n-1) W(\mathbf{b}).
	\end{align*}
	
	Thus, this game is $(\frac{1}{2}, \frac{\delta^{\max}}{\delta^{\min}}(n-1))$-semismooth.
\end{proof}

\begin{cor}
	The (Opt,GSP) mechanism has an instance-specific upper bound on Price of Anarchy of:
	\begin{align*}
	\text{PoA} \leq 2 + 2 (n-1) \frac{\delta^{\max}}{\delta^{\min}}.
	\end{align*} If we assume that there are $m$ slots and all discount curves are geometric and strictly ordered (e.g. $c_\tau = c_{\tau'}$ and $\delta_{\tau_1} \geq \delta_{\tau_2} \geq...\geq \delta_{\tau_k}$ for some $k$), then an upper bound is given by:
	\begin{align*}
	2+ 2\cdot(n-1) \frac{\delta_{\tau_1}}{\delta_{\tau_k}^m}
\end{align*}
\end{cor}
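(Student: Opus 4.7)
The plan is to invoke the two results already in hand: Theorem~\ref{thm:smooth-optgsp}, which establishes that (Opt, GSP) induces a $(\lambda, \mu)$-semismooth game with $\lambda = \tfrac{1}{2}$ and $\mu = \tfrac{\delta^{\max}}{\delta^{\min}}(n-1)$, and the theorem of Caragiannis et al.\ quoted earlier which converts semismoothness directly into a Price of Anarchy upper bound of $\tfrac{\mu+1}{\lambda}$. Since the payments in GSP are nonnegative, social welfare is at least the sum of bidder utilities, so the hypothesis of that theorem is satisfied. Substituting the two parameter values and simplifying,
\begin{align*}
\text{PoA} \;\leq\; \frac{\mu+1}{\lambda} \;=\; \frac{\tfrac{\delta^{\max}}{\delta^{\min}}(n-1) + 1}{1/2} \;=\; 2 + 2(n-1)\frac{\delta^{\max}}{\delta^{\min}},
\end{align*}
which is exactly the first claim of the corollary.

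For the geometric specialization, the second step is to identify what $\delta^{\max}$ and $\delta^{\min}$ are in closed form. Recall from the definition in Theorem~\ref{thm:smooth-optgsp}'s proof that $\delta^{\max} = \max_\tau \delta_\tau^{1}$ (the largest discount appearing in the top slot across types) and $\delta^{\min} = \min_\tau \delta_\tau^{m}$ (the smallest discount appearing in the bottom slot across types). Under a geometric curve $\delta_\tau^s = c_\tau \cdot \delta_\tau^{\,s}$ with $c_\tau = c_{\tau'}$ (so we may rescale and assume $c = 1$) and the ordering $\delta_{\tau_1} \geq \delta_{\tau_2} \geq \cdots \geq \delta_{\tau_k}$, the maximum is attained by type $\tau_1$ in slot $1$, giving $\delta^{\max} = \delta_{\tau_1}$, and the minimum is attained by type $\tau_k$ in slot $m$, giving $\delta^{\min} = \delta_{\tau_k}^{m}$. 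Substituting these into the general bound yields
\begin{align*}
\text{PoA} \;\leq\; 2 + 2(n-1)\frac{\delta_{\tau_1}}{\delta_{\tau_k}^{\,m}},
\end{align*}
as claimed.

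There is no serious technical obstacle here; this is essentially a bookkeeping corollary that packages Theorem~\ref{thm:smooth-optgsp} with the semismoothness-to-PoA conversion. The only mild subtlety is checking the geometric-case identification of the extremal discount values: one needs the monotonicity of the discount curves in the slot index (so the maximum over slots is at slot $1$ and the minimum is at slot $m$) together with the assumed cross-type ordering to pin down which $(\tau, s)$ pair realizes each extremum.
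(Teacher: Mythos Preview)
Your proposal is correct and is exactly the intended derivation: the paper states the corollary immediately after Theorem~\ref{thm:smooth-optgsp} without a separate proof, so the only work is plugging $(\lambda,\mu)=(\tfrac12,\tfrac{\delta^{\max}}{\delta^{\min}}(n-1))$ into the Caragiannis et al.\ bound and, for the geometric case, reading off $\delta^{\max}$ and $\delta^{\min}$ from the ordering hypotheses. One tiny remark: you do not actually need to rescale $c$ to $1$, since the common constant cancels in the ratio $\delta^{\max}/\delta^{\min}$; but this changes nothing in the argument.
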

We remark that this bound is potentially exponential in the number of bidders in the case of \emph{geometric} discount curves, but linear in the case of \emph{linear} discount curves (assuming a fixed set of discount curves). And while this bound is likely too pessimistic, we can give a lower bound as well:
 \begin{theorem}\label{thm:optgspexample}
 	Let $(\A,\P_{\A}) = $ (Opt,GSP). Then there exists a conservative 3-bidder 3-slot example that gets a competitive ratio arbitrarily close to 3/4.
\end{theorem}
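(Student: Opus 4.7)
The plan is to exhibit a conservative Nash equilibrium on an explicit 3-bidder, 3-slot instance whose true welfare is at most $\left(\tfrac{3}{4} + O(\epsilon)\right)\cdot \text{OPT}$, thereby giving the $4/3$ lower bound. Since the optimal allocation algorithm always maximizes apparent welfare with respect to reported bids, a suboptimal equilibrium can arise only if at least one bidder strategically underbids, shifting the max-weight matching away from the welfare-optimum.

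First I would design the instance: three bidders of distinct types, with discount curves and values calibrated so that the welfare-optimal matching and a second-best matching have welfare ratio exactly $4/3$ in the limit $\epsilon\to 0$. Natural candidates for the discount curves are a ``top-slot-only'' curve such as $(1,0,0)$, a moderately decaying curve such as $(1,c,0)$ for some $c\in(0,1)$, and a flat curve such as $(1,1,1)$; values are perturbed by $\epsilon$ to avoid ties. I would then propose a candidate bid profile in which one or two bidders shade strictly below value, chosen so that under the reported bids the max-weight matching output by $\A$ is the second-best (in true welfare) assignment, not the welfare-optimum.

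Third, I would verify the Nash property by ruling out every unilateral deviation. For an upward deviation, the key computation uses the GSP payment rule of Cavallo et al.~\cite{cavallo2018matching}: the bidder's price at any promoted slot is the minimum bid making that slot part of the max-weight matching, and parameters must be chosen so that this price precisely eats any gain from the promotion. For a downward deviation, either the allocation is unchanged (and no utility is gained, since the posted GSP price is already the minimum retaining the slot) or the bidder is demoted, which by construction reduces utility. Since slot assignments are discrete, this reduces to a finite set of inequality checks, one per alternative slot per bidder.

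The main obstacle is the joint calibration: the choices of types, values, and equilibrium bids must simultaneously satisfy (a) the welfare ratio equals $3/4 + O(\epsilon)$, (b) the reported bids induce $\A$ to pick the suboptimal matching, and (c) every single-bidder deviation is unprofitable. Condition (c) is the delicate step because under optimal allocation the GSP price depends nonlocally on the entire bid vector: each potential promotion or demotion must be checked against a global max-weight recomputation rather than a local pairwise swap of the kind that suffices for greedy allocation. In practice, satisfying (c) will pin down the shaded bids to narrow ranges, and one must confirm that those ranges are consistent with (a) and (b) simultaneously.
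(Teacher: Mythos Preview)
Your proposal is a reasonable outline, but it is not a proof: this theorem is a construction, and the entire content lies in exhibiting the instance, the equilibrium bid profile, and then verifying the finitely many deviation inequalities. You describe the shape of such a construction and correctly identify the obstacle (joint calibration of your (a), (b), (c)), but you never commit to any instance, any bid profile, or any inequality check. Writing that ``natural candidates for the discount curves are $(1,0,0)$, $(1,c,0)$, $(1,1,1)$'' and that ``parameters must be chosen so that this price precisely eats any gain from the promotion'' is a description of the search space, not a resolution of it. As you yourself say, the calibration is ``the main obstacle''; the proof \emph{is} the calibration.

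For contrast, the paper's detailed argument (Propositions~\ref{prop:optgsp-poa-eq} and~\ref{prop:optimizing} in the appendix) in fact works with only \emph{two} bidders and \emph{two} slots, notwithstanding the theorem's wording and the proof sketch's remark to the contrary. Bidder $A$ has curve $(1,\delta_A)$, bidder $B$ has curve $(1,\delta_B)$ with $\delta_A<\delta_B$; values are constrained by $\Delta^2 v_B \le v_A \le \Delta v_B$ where $\Delta=(1-\delta_B)/(1-\delta_A)$; and the equilibrium profile is
\[
\mathbf{b}^*=\bigl(\Delta(1-\delta_B)v_B+\epsilon,\ \Delta(1-\delta_B)v_B\bigr).
\]
Both bidders shade down, the max-weight matching on reported bids is $(A,B)$ although $(B,A)$ is welfare-optimal, and the two deviations ($A$ dropping to slot~2, $B$ rising to slot~1) each reduce to a one-line inequality using the assumed ordering of $v_A$, $v_B$, $\Delta$. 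Setting $\delta_B=\tfrac12$, $v_B=1$, $v_A=\Delta^2$, and sending $\delta_A\to 0$ drives $EQ/OPT$ to $3/4$. The lesson is that the proof requires committing to explicit closed-form parameters and carrying out the two deviation checks; a plan that leaves the calibration open does not yet establish the theorem.
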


\begin{proofsketch}
	Again, we construct a counterexample, prove it is an equilibrium, and optimize the welfare subject to equilibrium conditions. Here, a two-player two-slot example cannot suffer a high PoA, because the inefficient assignment of bidders would allow for a profitable deviation of the bidder in the worse slot (or the better slot if the price were too high). But with three bidders and three slots, we can find an example where two of the bidders effectively exert a ``joint" externality, and no single bidder has any incentive to deviate despite the allocation being suboptimal overall.
\end{proofsketch}
	
	Of the results we have derived, this mechanism has the least-tight upper bound on the price of anarchy, and the weakest lower bound. On the other hand, the following intuition suggests that the mechanism should perform relatively well: by construction, whenever the mechanism has access to the true valuations, its allocation is optimal. It does not have access to true valuations because it is not incentive-compatible, but GSP, like VCG, does somewhat ``protect'' a bidder from the risk of overpaying. Thus bidders may have less incentive to greatly shade their bid. We leave formalizing and exploring this intuition and improving these PoA bounds to future work.

\section{Equilibrium Characterization}\label{s:eq}
%!TEX root = main.tex
In this section, we provide the first analytical characterization of Bayes-Nash equilibrium in the two-slot, two-bidder case with ad types and under the assumption that bidder values are drawn independently and from identical uniform distributions over the interval $[0,1]$. In particular, we show the existence of simple equilibria that are symmetric in form and mostly natural. To find these equilibria, one may assume as a heuristic that an equilibrium exists, and derive first-order conditions; while this is a natural way to do so, ultimately, the proof is easiest when positing the existence of a linear equilibrium and verifying that the prescribed strategies are, in fact, best responses to one another. That is the approach we will take here. As mentioned, we defer details of some proofs to an online extended version of this paper in favor of proof sketches.

For each auction type, we assume there are two slots, two discount types A and B, and one bidder of each type. We assume that the discount types have the form $(1,\delta_A)$ for type A and $(1,\delta_B)$ for type B; i.e., geometric discount curves that both have a constant factor of 1. (This assumption can be easily relaxed at the cost of carrying around some extra notation.) Throughout, we will assume without loss of generality that $\delta_A < \delta_B$, and define $\Delta := \frac{1-\delta_B}{1-\delta_A}< 1$.  For the sake of efficiency, we say a bidder `wins' if they win the first slot.

\begin{table}[h]
	\centering
	\begin{tabular}{ccc}
		& GSP & VCG\\
		\toprule
		Greedy &$(1-\delta_A) v_A, (1-\delta_B)v_B$ & $\frac{1-\delta_A}{1-\delta_B} v_A, \frac{1-\delta_B}{1-\delta_A} v_B$\\
		Opt & $(1-\delta_A) v_A, (1-\delta_B)v_B$ & $(v_A, v_B)$
	\end{tabular}
	\caption{2 bidder, 2 type case, simple equilibrium strategies \label{tab:strats}}
\end{table}

Table \ref{tab:strats} displays the simple linear equilibria we discover. These equilibria are unique among linear equilibria (but not in general). Notice that in each setting, player strategies are symmetric up to relabeling. In other words, the \emph{form} of the strategy is symmetric, despite the fact that the particular strategy will differ due to different discount rates. Also, other than the VCG mechanism, each auction involves some shading. For GSP pricing, the downward shading coincides with each bidder's marginal benefit of the first slot relative to the second. But when VCG pricing is combined with greedy allocation, Bidder B shades down while Bidder A shades \emph{up}\footnote{Whilte this may be counterintuitive, note that with greedy allocation, bidding higher increases the win probability, and under GSP pricing, bidding higher does not (directly) increase the price paid. However, overbidding results in the possibility of winning at a price higher than one's valuation.}.

\begin{table}[h]
	\centering
	\begin{tabular}{ccc}
		& GSP & VCG\\
		\toprule
		Greedy &$\frac{1-\delta_A}{6} \Delta^2 + \frac{1-\delta_B}{6}(3-2\Delta) $& $\frac{1-\delta_A}{6} \Delta^3 + \frac{1-\delta_B}{6} \Delta\left(3-2\Delta^2\right)$\\
		Opt & $\frac{1-\delta_A}{6} \Delta^3 + \frac{1-\delta_B}{6} \Delta\left(3-2\Delta^2\right)$& $\frac{1-\delta_A}{6} \Delta^2 + \frac{1-\delta_B}{6}(3-2\Delta) $
	\end{tabular}
	\caption{2 bidder, 2 type case, equilibrium revenue \label{tab:revs}}
\end{table}

Table \ref{tab:revs} gives the expected revenue for each of the equilibria described in Table \ref{tab:strats}. As with Table \ref{tab:strats}, several features are noteworthy. First, immediately we can see that both the two standard formats, as well as the two nonstandard formats, are (expected) revenue equivalent. This may be surprising, given the variation in payment rules and strategies; however, we will see that the strategies are such that the win condition and payment conditional on winning work out to be the same. Second, we note that as expected, if we allow $\delta_A=\delta_B=\delta$, we recover the equivalent revenue to the VCG mechanism for all four auction formats. This is because when discounts are the same and there are only two slots, the greedy allocation is equivalent to the optimal allocation, and GSP pricing coincides with externality pricing, so the matrix of auction formats collapses to a single row and column. Moreover, if we set $\delta=0$, we recover the revenue of the standard second price auction with two uniform bidders (which is sensible, because if $\delta=0$, the auction is effectively simply a second price auction for the only slot with any value). Finally, we note that it is not immediately obvious whether revenue increases or decreases with discount values (since $\Delta$ is a function of $\delta_A,\delta_B$); again, it is easy enough, if uninspiring, to take the derivative and find that revenue decreases as either discount factor increases. It may be surprising that revenue decreases when bidders can derive more total welfare, but the principle is easy to see in the extreme: if there is no difference in clickthrough rates, bidders need not bid high at all\footnote{We assume there is no reserve; we leave as an open problem questions around designing optimal auctions with ad types.}, as they may as well take the second slot.  

These revenue results let us make \emph{equilibrium}, rather than fixed bid\footnote{For instance, it is known in the standard position setting that GSP prices are lower bounded by VCG prices for any fixed set of bids, but such a statement makes no prediction when bidders adjust their strategies to equilibrium.}, comparisons of revenue. In particular, simple, if involved, algebra allows us to proclaim the following relationship between revenue:

\begin{restatable}[Equilibrium Revenue]{theorem}{eqrev} \label{thm:eqrev}
%\begin{theorem}[Equilbrium Revenue] \label{thm:eqrev}
  Consider a two-bidder, two-type, two-slot setting with bidder valuations drawn from a standard uniform distribution. Then in simple, linear equilibria: 
  \begin{align*}
  \mathcal{R}_{\opt}^{\vcg} = \mathcal{R}_{\greedy}^{\gsp} \geq \mathcal{R}_{\greedy}^{\vcg} = \mathcal{R}_{\opt}^{\gsp}
  \end{align*}
%\end{theorem}
\end{restatable}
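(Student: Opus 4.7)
The plan is to reduce the theorem to two routine tasks: computing the expected revenue at each of the four tabulated equilibria, and then verifying a single algebraic inequality in the reduced variable $\Delta = (1-\delta_B)/(1-\delta_A)$.

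First, for each mechanism I would compute expected revenue directly by plugging the simple linear strategies of Table \ref{tab:strats} into the corresponding pricing rule. Fixing the strategies $b_A(v_A), b_B(v_B)$, both the identity of the winner of the top slot and the associated payment are piecewise functions of $(v_A, v_B)$; integrating against the product uniform measure on $[0,1]^2$ yields the closed-form expressions in Table \ref{tab:revs}. Once these formulas are in hand, the two equalities $\mathcal{R}_{\opt}^{\vcg} = \mathcal{R}_{\greedy}^{\gsp}$ and $\mathcal{R}_{\greedy}^{\vcg} = \mathcal{R}_{\opt}^{\gsp}$ are immediate by inspection, since the corresponding formulas are syntactically identical. (Intuitively, the strategies are precisely chosen so that the allocation rule and the payment conditional on winning line up across the paired mechanisms.)

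Second, to establish the middle inequality I would use the identity $1 - \delta_B = \Delta(1-\delta_A)$, which comes straight from the definition of $\Delta$, to pull out the common positive factor $(1-\delta_A)/6$ and obtain an inequality in $\Delta$ alone. A direct simplification gives
$$\mathcal{R}_{\greedy}^{\gsp} - \mathcal{R}_{\opt}^{\gsp} \;=\; \frac{(1-\delta_A)\,\Delta}{6}\bigl(2\Delta^3 - \Delta^2 - 4\Delta + 3\bigr).$$
The cubic factor admits the factorization $(2\Delta + 3)(\Delta - 1)^2$, which one can verify directly or predict from the observation that $\Delta = 1$ must be a double root, since at $\delta_A = \delta_B$ the four mechanisms collapse to a single VCG-equivalent auction and all four revenues coincide. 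Because $\delta_A \leq \delta_B$ forces $\Delta \in (0,1]$, every factor on the right-hand side is non-negative, giving the desired inequality, with equality exactly when $\delta_A = \delta_B$.

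The main obstacle is really just the bookkeeping in the first step: for the (Greedy, VCG) mechanism Bidder $A$ shades \emph{upward} by a factor of $(1-\delta_A)/(1-\delta_B) > 1$, so the region where $A$ wins the top slot is no longer $\{v_A > v_B\}$ and one has to be careful about the event that the shaded bid exceeds $v_A$ when computing the VCG externality. Once the revenue formulas are correctly derived, the remainder is elementary polynomial factorization, and the clean vanishing of the difference at $\Delta = 1$ provides a strong sanity check that the calculation is right.
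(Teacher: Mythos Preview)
Your proposal is correct, and the overall architecture---compute the four revenues from the tabulated strategies, read off the two equalities by inspection, then prove the single remaining inequality---matches the paper's proof. Where you genuinely diverge is in the treatment of the inequality.

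The paper expands $\mathcal{R}_{\greedy}^{\gsp} - \mathcal{R}_{\greedy}^{\vcg}$ while keeping both $\delta_A$ and $\delta_B$ in play, then uses only the crude bound $-\delta_A \geq -\delta_B$ to replace $\delta_A$ by $\delta_B$ in one term. This yields a (non-tight) lower bound of the form $\tfrac{1-\delta_B}{6}\,(3\Delta^3 + \Delta^2 - 5\Delta + 3)$, and the paper then has to argue separately, via a further lower bound and a derivative computation, that this cubic stays nonnegative on $[0,1]$. By contrast, you exploit the exact identity $1-\delta_B = \Delta(1-\delta_A)$ at the outset, which collapses the two-parameter expression to a single-variable polynomial and gives the \emph{equality}
\[
\mathcal{R}_{\greedy}^{\gsp} - \mathcal{R}_{\opt}^{\gsp} \;=\; \frac{(1-\delta_A)\,\Delta}{6}\,(2\Delta+3)(\Delta-1)^2.
\]
The factorization is correct (and your heuristic that $\Delta=1$ must be a double root is a nice way to find it). This route is shorter, avoids the auxiliary calculus argument entirely, and as a bonus pins down the equality case $\delta_A=\delta_B$ exactly---information that the paper's bounding step throws away. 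The only cost is that one has to notice the substitution $1-\delta_B = \Delta(1-\delta_A)$, which the paper's more mechanical expansion does not require.

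One small quibble: your closing remark about being ``careful about the event that the shaded bid exceeds $v_A$ when computing the VCG externality'' is a bit off target. The VCG price is computed from \emph{bids}, not values, so overbidding by $A$ does not create any extra case in the revenue integral; the only piecewise split is over which bidder wins, exactly as in the other three mechanisms. This does not affect the argument, but you could safely drop that caveat.
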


Importantly, these results only apply to our simple setting; it is unclear whether the revenue, welfare, or other predictions carry over into a general setting. 
\longversion{And indeed, in Section \ref{sec:nonlinear}, we show that one of the least extensive generalizations does not admit such an analytically tractable characterization.} While it is possible that more complicated analytic equilibrium may exist, %and be found by clever inspection or some other method
 it is difficult to foresee how such an equilibrium might be found. Moreover, it is possible that equilibrium strategies, even if they do exist, are complicated to calculate and implement. Thus, in Section \ref{s:exp}, we turn our attention to empirical study of revenue under realistic bid distributions, where (coarse correlated) equilibria are \emph{learned} via no-regret learning techniques. 
%These results will bolster our faith in the results of these predictions. 
\subsection{Greedy GSP}
	In this setting, the higher bidder gets the top slot at a price of the lower bid, and the lower bidder gets the bottom slot at a price of 0.  We obtain the following theorem:
\begin{theorem}\label{thm:eqggsp} Suppose that $(\A, \P_{\A})$ are (Greedy, GSP). Then in the two slot, two bidder, uniform case, the strategy profile
	\begin{align*}
	\left(b_A(v_A),b_B(v_B) \right) \coloneqq ((1-\delta_A)v_A, (1-\delta_B)v_B)
	\end{align*}
	 is a Bayes-Nash equilibrium. Among conservative linear equilibria, it is unique.
\end{theorem}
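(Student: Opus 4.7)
The plan is to verify the candidate profile directly by a first-order-condition argument on interim expected utility, then leverage the same calculation to obtain uniqueness. I begin by fixing $B$'s strategy at $b_B(v_B) = (1-\delta_B)v_B$ and computing $A$'s interim expected utility from a conservative bid $b_A$. Under greedy allocation the top slot (discount $1$ for both types) is assigned to the higher bidder; under GSP the top-slot winner pays the opponent's bid, while the loser sits in the bottom slot at price $0$. Since $v_B$ is uniform on $[0,1]$, the opponent's bid is uniform on $[0, 1-\delta_B]$, so for any $b_A \in [0, 1-\delta_B]$ bidder $A$ wins with probability $b_A/(1-\delta_B)$ and, conditional on winning, pays $b_A/2$ in expectation. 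Combining with the loser payoff $\delta_A v_A$ yields
\begin{align*}
U_A(b_A) \;=\; \frac{b_A}{1-\delta_B}\Bigl((1-\delta_A)v_A - \tfrac{b_A}{2}\Bigr) \,+\, \delta_A v_A,
\end{align*}
which is strictly concave with unique interior maximizer $b_A^\star = (1-\delta_A)v_A$.

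Next I would address the boundary: the prescribed bid exceeds $1-\delta_B$ precisely when $v_A > \Delta = (1-\delta_B)/(1-\delta_A)$, in which case $A$ wins with certainty at the constant expected payment $(1-\delta_B)/2$; one checks that $U_A$ is increasing on $[0, 1-\delta_B]$ in this regime and constant thereafter, so any $b_A \geq 1-\delta_B$, including $(1-\delta_A)v_A$, is a best response. Running the symmetric computation against $A$'s candidate strategy gives $B$'s best response as $(1-\delta_B)v_B$, and because $(1-\delta_B)v_B \leq 1-\delta_B < 1-\delta_A$ holds for all $v_B \in [0,1]$, $B$'s response stays strictly in the interior and no boundary case arises. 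This establishes the Bayes--Nash equilibrium claim.

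For uniqueness among conservative linear equilibria, the central observation is that the first-order derivation above is \emph{independent} of the specific slope $\beta_B$: replacing $(1-\delta_B)$ by any $\beta_B > 0$ leaves the critical point at $b_A = (1-\delta_A)v_A$, because $\beta_B$ enters only as an overall scaling factor that cancels between the probability-of-winning derivative and the expected-payment term. Thus whenever $B$ plays a conservative linear strategy with positive slope, $A$'s unique linear best response has slope $1-\delta_A$, and symmetrically for $B$. The degenerate case $\beta_B = 0$ is ruled out because a constantly-zero opponent makes it strictly profitable for $A$ to outbid for the top slot, so the induced best response has positive slope; an analogous argument handles $\beta_A = 0$.

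The hard part will be the boundary analysis for $v_A > \Delta$, where the first-order condition alone does not pin down a unique optimum and the proposed bid is merely \emph{one} element of a flat optimal set extending to $+\infty$; I will interpret the uniqueness assertion as uniqueness of the \emph{slope} within the conservative linear class, which the slope-independence of the FOC delivers cleanly. The remaining work is bookkeeping: verifying that the win-probability expressions are valid on the claimed intervals, that conservatism rules out profitable overbidding deviations (which might otherwise win with certainty at a cost above $v_A$), and that the symmetric calculation for $B$ does not require any extra boundary case beyond what the geometry $\delta_A < \delta_B$ already ensures.
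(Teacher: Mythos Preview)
Your proposal is correct and follows essentially the same approach as the paper: fix the opponent's linear strategy, compute interim expected utility as a concave quadratic in the bid, read off the FOC $b_A^\star = (1-\delta_A)v_A$, handle the cap at $b_A = 1-\delta_B$, argue symmetrically for $B$, and obtain uniqueness by observing that the opponent's slope $\beta$ cancels out of the FOC. Your observation that $B$'s prescribed bid never hits the boundary (since $(1-\delta_B)v_B \le 1-\delta_B < 1-\delta_A$) and your explicit treatment of the degenerate $\beta=0$ case are small elaborations the paper glosses over, but the core argument is identical.
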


To prove this theorem, and all our other equilibrium claims, we must show that the strategies are a best response to each other under the distribution of bidder valuations. The easiest way to do so is to take an ex-interim perspective for each bidder, assume the opposing bidder uses the claimed strategy, and allow the original bidder to optimize freely. Then, one shows that maximum is achieved at exactly the value prescribed by strategy. Since said strategy prescribes an ex-interim best-response at every possible valuation, it is a best-response. 

To show that prescribed strategy is in fact a best-response, we decompose each bidder's expected payoff into the sum of their expected profit if they win (which can be further decomposed into the probability of winning times the expected profit given they win) and their expected profit if they obtain the worse slot. Viewing this payoff as a function of the players' bid, we maximize that function, and show the optimal bid is exactly the prescribed strategy for each player. 		

We will prove the case of $(Greedy, GSP)$ in detail; for the other cases, we provide a proof sketch and defer the more involved proofs to the appendix. 

\begin{proof}
	Consider Bidder A's perspective after she learns her valuation $v_A$. If A wins, she pays $b_B$ and gets value $v_A$; if she loses, she gets $\delta_A v_A$ and pays nothing. Then:
	\begin{multline}\label{eqn:g-gsp-exinterim}
	\E_{v_B \sim U[0,1]}\left[u_A|b_A\right] = (v_A-\E[b_B|b_B <b_A])  \Pr[b_B< b_A]  + \delta_A v_A (1-\Pr[b_B < b_A])
 	\end{multline}
 	Since we wish to show that $(1-\delta_A)v_A$ is a best-response to $(1-\delta_B)v_B$, we can assume that $b_B= (1-\delta_B)v_B$. Hence, A wins if and only if $v_B < b_A/(1-\delta_B)$. Under the uniform distribution, $\Pr[x<c]=\min\{c,1\}$ and $\E[x|x<c] = \frac{\min\{c,1\}}{2}$ for $c>0$.  Thus we can apply these to Equation \ref{eqn:g-gsp-exinterim} to write:
 	\begin{align}
 		\begin{split}\label{eq:ut}
 	\E_{v_B \sim U[0,1]}\left[u_A|b_A\right] &= \left(v_A - (1-\delta_B)\frac{b_A}{2(1-\delta_B)}\right)\frac{b_A}{1-\delta_B} + \delta_A v_A(1-\frac{b_A}{1-\delta_B})  \\
 	&= \frac{v_A b_A}{1-\delta_B} - \frac{b_A^2}{2(1-\delta_B)} - \frac{\delta_A b_A v_A}{1-\delta_B} + \delta_Av_A
 	\end{split}
 	\end{align} 
 whenever $b_A \leq (1-\delta_B)$, and $u_A = v_A - \frac{1-\delta_B}{2}$ otherwise. In other words, $A$'s payoff will be either the left-hand side of \Cref{eq:ut}, which we denote as $u_A(b_A)$ for brevity, or the ``cap'' of $v_A -\frac{1-\delta_B}{2}$, depending on whether $b_A$ is less or more than $(1-\delta_B)$. So in principle, we need to find the bid that maximizes $u_A$ on $[0, 1-\delta_B]$, and then check whether or not it gives a better payoff than the cap. But notice that $u_A(1-\delta_B)=v_A - \frac{1-\delta_B}{2}$, and increasing $b_A$ beyond $1-\delta_B$ cannot improve payoff, so it suffices to simply find the maximum of $u_A$ over $[0,1-\delta_B]$.
 
  Notice that $u_A(b_A$) is continuous and differentiable in $b_A$ on $[0,1-\delta_B]$. The first and second derivatives of $u_A$ are: 
 \begin{align*}
 	u_A'(b_A) = \frac{v_A(1-\delta_A)}{1-\delta_B} - \frac{b_A}{1-\delta_A}  \qquad u_A''(b_A) = -\frac{1}{1-\delta_A}
 \end{align*}
Hence $u_A$ is strictly concave. Suppose that $b_A^* \coloneqq (1-\delta_A)v_A < 1-\delta_B$. Then $b_A^*$ satifies the first order condition and so is a global maximum. On the other hand, if $b_A^* \geq (1-\delta_B)$, then because $u_A$ is increasing right up until $(1-\delta_B)$, $u_A$ takes it maximum at $b_A = (1-\delta_B)$. But, bidding $(1-\delta_A)v_A$ results in the same payoff as bidding $1-\delta_B$ (because of the ``cap''). Thus, regardless of what $v_A$ is, the strategy $b_A^*= (1-\delta_A) v_A$ is a best-response if B is bidding $(1-\delta_B) v_B$. 
 	Reversing roles and considering B's perspective gives exactly the same logic. Hence, the pair of strategies form an equilibrium. To see uniqueness among linear equilibria, notice that as long as $b_B$ is linear, i.e. $b_B(v_B) =\beta v_B$ for some fixed $0 \leq \beta \leq 1 $, the form of Equation \ref{eq:ut} holds, and the particular choice of $\beta$ cancels out just as it did for $(1-\delta_B)$; again, then, the optimal bid will be $(1-\delta_A)v_A$. A similar argument holds for B.
\end{proof}

\begin{proposition}\label{prop:gspgreedyrev}Under the linear equilibrium described above, with $\delta_A <\delta_B$, we have that the expected revenue is given by:
	\begin{align*}
	\E[\mathcal{R}] =  \frac{(1-\delta_A)\Delta^2}{6} + \frac{1-\delta_B}{6}\left(3-2 \Delta\right)
	\end{align*}\end{proposition}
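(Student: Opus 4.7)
The plan is to compute the expected revenue directly from the definition, by identifying which bidder wins the top slot as a function of $(v_A, v_B)$ and integrating the resulting revenue against the uniform density on $[0,1]^2$.

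First I would observe that under greedy allocation with two slots, the top slot (with discount $1$ for either type) goes to whichever bidder has the larger bid. Under the equilibrium strategies $b_A(v_A) = (1-\delta_A) v_A$ and $b_B(v_B) = (1-\delta_B) v_B$, bidder $A$ wins exactly when $(1-\delta_A) v_A \geq (1-\delta_B) v_B$, i.e.\ when $v_B \leq v_A / \Delta$. GSP pricing charges the top-slot winner the other bidder's bid and charges the second-slot bidder $0$, and both types see discount $1$ for the top slot. Hence the revenue is $(1-\delta_B) v_B$ on the event $\{v_B \leq v_A/\Delta\}$ and $(1-\delta_A) v_A$ otherwise.

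Next, since $\Delta < 1$, the condition $v_B \leq v_A/\Delta$ is automatically satisfied for $v_A \geq \Delta$, so I would split the outer integration over $v_A$ at $\Delta$. For $v_A \in [0,\Delta]$ the inner integral over $v_B$ splits further at $v_A/\Delta$, contributing
\begin{align*}
\int_0^{\Delta} \!\left[\int_0^{v_A/\Delta} (1-\delta_B) v_B\, dv_B + \int_{v_A/\Delta}^1 (1-\delta_A) v_A\, dv_B\right] dv_A,
\end{align*}
while $v_A \in [\Delta, 1]$ gives simply $\int_\Delta^1 \int_0^1 (1-\delta_B) v_B\, dv_B\, dv_A = (1-\delta_B)(1-\Delta)/2$. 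The inner integrals in the first piece evaluate to $(1-\delta_B) v_A^2 / (2\Delta^2)$ and $(1-\delta_A) v_A (1 - v_A/\Delta)$ respectively; integrating these over $[0,\Delta]$ in $v_A$ yields $(1-\delta_B)\Delta/6 + (1-\delta_A)\Delta^2/6$ after the cancellation $\tfrac{1}{2}-\tfrac{1}{3}=\tfrac{1}{6}$.

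Finally I would add the two pieces and simplify the $(1-\delta_B)$ coefficient to $(\Delta + 3(1-\Delta))/6 = (3 - 2\Delta)/6$, producing the stated formula. The only step that needs care is the case split at $v_A = \Delta$ and ensuring the arithmetic of the three resulting terms collapses to the two-term expression; everything else is routine polynomial integration, and there is no genuine obstacle beyond bookkeeping.
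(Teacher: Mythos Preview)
Your proof is correct and follows essentially the same direct-integration approach as the paper: identify the winning region in $(v_A,v_B)$-space, write revenue as a piecewise function, and integrate over the unit square. The only cosmetic difference is that the paper integrates with $v_B$ outer and $v_A$ inner, so the dividing line $v_A=\Delta v_B$ never exceeds $1$ (since $\Delta<1$) and no case split at $v_A=\Delta$ is needed; your choice of order is equally valid but incurs that extra split.
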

Before we sketch the proof, note that if we let $\delta_A=\delta_B=0$, we immediately recover $\frac{1}{3}$, which is the revenue of the standard second price auction with two bidders drawn from $U[0,1]$. Second, if we let $\delta_A=\delta_B=\delta$, then $\Delta =1$, and we see that $\E[\mathcal{R}] ={{(1-\delta)}\over{3}}$. That is, revenue decays \textit{linearly} to that of the standard second price auction as $\delta \to 1$. 
%again, with $\delta=0$ we achieve the revenue of the standard second price auction, but as $\delta\to1$, our revenue decays linearly.  
\longversion{:
\begin{align*}
\E[R] &= \frac{\left((1-\delta)(1)^2\right)}{6} + \frac{1-\delta}{6} (3-2)= \frac{1}{3} (1-\delta).
\end{align*}
}

\shortversion{\begin{proofsketch}
		Once one identifies the revenue curve over the joint distribution of valuations, this is a straightforward calculation. To identify the revenue curve, consider the case when A wins the top slot. Then it must be that $b_A \geq b_B$, which, under the equilibrium strategies, holds if and only if $b_A \geq \Delta b_B$. If A wins, then she pays $b_B$, which is $(1-\delta_B)v_B$. Similarly, B wins when $b_A \leq b_B$ and pays $(1-\delta_A) v_A$. Then, we set up a double integral over the uniform distribution for $v_A$ and $v_B$, but split the inner integral up, as it will have a different integrand from $0$ to $\Delta v_B$ (where B wins) and $\Delta v_B$ to $1$ (where A wins). Evaluating the double integral, the result follows. 
\end{proofsketch}}
\longversion{
\begin{proof}
	A wins if $b_A\geq b_B$, which happens when:
	\begin{align*}
	b_A \geq b_B \iff (1-\delta_A) v_A \geq (1-\delta_B)v_B \iff v_A \geq \Delta v_B
	\end{align*}
	If A wins, she pays $b_B$, and so the revenue is $b_B = (1-\delta_B)v_B$; otherwise, it is $b_A = (1-\delta_A) v_A$. Thus we can write the expected revenue as:
	\begin{align*}
	\E[R] &= \int_0^1 \int_0^1 R(v_A,v_B) dP(v_A)dP(v_B)\\
	&=\int_0^1 \int_0^{\Delta v_B} (1-\delta_A) v_A dv_A  dv_B + \int_0^1 \int_{\Delta v_B}^1 (1-\delta_B) v_B dv_A dv_B\\
	& = (1-\delta_A) \int_0^1 \frac{v_A^2}{2} \biggr|_0^{\Delta v_B} + (1-\delta_B) \int_0^1  v_B v_A\biggr|_{\Delta v_B}^1 dv_B\\
	&=(1-\delta_A) \int_0^1 \Delta^2 \frac{v_B^2}{2} dv_B + (1-\delta_B) \int_0^1 v_B - \Delta v_B^2 dv_B\\
	& = (1-\delta_A) \Delta^2\frac{v_B^3}{6} \biggr|_0^1 + (1-\delta_B) \left[\frac{v_B^2}{2} - \Delta \frac{v_B^3}{3}\right]\biggr|_0^1\\
	& = \frac{(1-\delta_A)\Delta^2}{6} + \frac{1-\delta_B}{6}\left(3-2 \Delta\right)
	\end{align*}
\end{proof}
}
\subsection{Optimal Allocation and GSP Pricing}
In this setting, the auctioneer chooses between the allocation $(A,B)$ and $(B,A)$. Note that:
\begin{align*}
(A,B) \succeq (B,A) \iff b_A + (1-\delta_B)b_B \geq b_B +(1-\delta_A) v_A \iff b_A \geq \Delta b_B
\end{align*}
Suppose bidder A is the winner. Then A is charged the smallest bid $b$ such that
$b \geq \Delta b_{B}$, which is just $\Delta b_{B}$. Similarly, if B wins, he will be charged $b_A/\Delta$.

\begin{theorem} \label{thm:eqoptgsp} Suppose that $(\A, \P_{\A})$ are (Opt, GSP). Then in the two slot, two bidder, uniform case, the strategy profile
	\begin{align*}
	\left(b_A(v_A), b_B(v_B)\right) \coloneqq ((1-\delta_A)v_A, (1-\delta_B)v_B)
	\end{align*}
	is a Bayes-Nash equilibrium.
\end{theorem}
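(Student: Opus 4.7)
The plan is to mirror the structure used in the proof of Theorem \ref{thm:eqggsp}: adopt the ex-interim perspective of one bidder, assume the opponent uses the claimed strategy, and verify that the prescribed bid is a best response at every value. By symmetry (up to relabeling) it will suffice to carry out the argument for bidder A.

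First I would substitute $b_B = (1-\delta_B) v_B$ into the characterization of the optimal allocation given in the excerpt: bidder A wins the top slot iff $b_A \geq \Delta b_B$, which becomes the condition $v_B \leq c \, b_A$ with $c := \frac{1-\delta_A}{(1-\delta_B)^2}$. Using the GSP pricing formulas stated just above the theorem (winner A pays $\Delta b_B$ per conversion; the losing bidder in slot 2 is charged $0$, since she retains slot 2 for any nonnegative bid below the winning threshold), I would then write, for $b_A$ in the range where $c\, b_A \leq 1$,
\begin{align*}
\E[u_A \mid b_A] = c b_A \left( v_A - \Delta (1-\delta_B)\frac{c b_A}{2}\right) + (1 - c b_A)\, \delta_A v_A,
\end{align*}
using that under the uniform distribution $\Pr[v_B \leq c b_A] = c b_A$ and $\E[v_B \mid v_B \leq c b_A] = c b_A / 2$. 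For $b_A$ above the cap where A wins with probability one, $\E[u_A \mid b_A]$ is constant in $b_A$, so it suffices to maximize the displayed expression on $[0, 1/c]$.

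Next I would check concavity (the quadratic has a negative leading coefficient) and solve the first-order condition. The algebraic payoff: one computes
\begin{align*}
\frac{d}{d b_A}\E[u_A\mid b_A] = c (1-\delta_A) v_A - \Delta (1-\delta_B)\, c^2\, b_A,
\end{align*}
and setting this to zero gives $b_A^\star = \frac{(1-\delta_A) v_A}{\Delta (1-\delta_B) c}$. Here the arithmetic simplification is the only delicate step: plugging in $c=\frac{1-\delta_A}{(1-\delta_B)^2}$ and $\Delta=\frac{1-\delta_B}{1-\delta_A}$ collapses $\Delta(1-\delta_B) c$ to $1$, so $b_A^\star = (1-\delta_A) v_A$, exactly the prescribed strategy. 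If $(1-\delta_A) v_A$ exceeds $1/c$, concavity together with the cap argument used in the Greedy-GSP proof shows bidding $(1-\delta_A) v_A$ is still a best response (it falls into the flat region and yields the same payoff as the maximizer).

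The remaining work is the symmetric calculation for B, which is obtained by swapping the roles of A and B and of $\delta_A$ and $\delta_B$, so the inequality $b_A \geq \Delta b_B$ is reversed and the analogous quadratic, first-order condition and cancellation yield $b_B^\star = (1-\delta_B) v_B$. The main obstacle is really just the bookkeeping around the asymmetric threshold $\Delta$ and the somewhat unusual form of $c$; once those are handled cleanly, the proof reduces to the same concave-optimization template as in the Greedy-GSP case.
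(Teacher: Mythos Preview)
Your proposal is correct and follows essentially the same approach as the paper's proof: both take the ex-interim perspective, compute the expected payoff as a concave quadratic in the bid (with a cap when the win probability hits one), solve the first-order condition, and appeal to symmetry for the other bidder. The only minor difference is that the paper keeps $B$'s linear coefficient as a free parameter $\beta$ throughout and observes that $A$'s best response is $(1-\delta_A)v_A$ regardless of $\beta$, which also delivers uniqueness among linear equilibria; your version fixes $\beta = 1-\delta_B$ from the start and carries the constant $c$, but this suffices for the theorem as stated.
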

\begin{proofsketch}
	As in Theorem \ref{thm:eqggsp}, we show that each strategy is a best-response to the other, and take particular care with the piecewise-nature of the payoff.
\end{proofsketch}

\begin{proposition}[(Opt,GSP) Revenue]\label{prop:optgsprev}
	Under the linear equilibrium described above, with $\delta_A <\delta_B$, we have that the expected revenue is given by:
	\begin{align*}
	\E[\mathcal{R}] =\frac{\Delta^3(1-\delta_A)}{6} + (1-\delta_B) \Delta \left[\frac{1}{2} - \frac{\Delta^2}{3}\right]
	\end{align*}
\end{proposition}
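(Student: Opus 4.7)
The plan is to follow the same recipe as the proof of Proposition \ref{prop:gspgreedyrev}: identify the win condition and per-conversion price in each region of $(v_A, v_B)$ space, then integrate over the uniform product distribution.

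First I would combine the allocation rule $b_A \geq \Delta b_B$ and the GSP pricing formulas derived immediately above the theorem (A charged $\Delta b_B$ when A wins, B charged $b_A/\Delta$ when B wins) with the linear equilibrium strategies $b_A = (1-\delta_A)v_A$, $b_B = (1-\delta_B)v_B$. This yields a clean characterization: A wins iff $v_A \geq \Delta^2 v_B$, in which case the effective revenue is $\Delta(1-\delta_B)v_B$ (the winner occupies slot 1 with discount $1$); symmetrically, when B wins the effective revenue is $b_A/\Delta = (1-\delta_A)^2 v_A / (1-\delta_B)$. In either case the loser sits in slot 2 and contributes nothing to revenue: the allocation ranking inequality is strict, so decreasing the loser's bid preserves both the chosen allocation and their slot assignment, forcing the GSP price down to $0$.

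With the per-region revenue pinned down, the expected revenue decomposes as a double integral over $[0,1]^2$ split along $v_A = \Delta^2 v_B$:
\[
\E[\mathcal{R}] = \int_0^1 \!\int_{\Delta^2 v_B}^1 \Delta(1-\delta_B)v_B \, dv_A \, dv_B + \int_0^1 \!\int_0^{\Delta^2 v_B} \tfrac{(1-\delta_A)^2}{1-\delta_B} v_A \, dv_A \, dv_B.
\]
The first piece integrates directly to the $(1-\delta_B)\Delta\bigl[\tfrac{1}{2} - \tfrac{\Delta^2}{3}\bigr]$ summand in the stated formula. The second piece evaluates to $(1-\delta_A)^2 \Delta^4 / [6(1-\delta_B)]$, and the one mild obstacle is recognizing that substituting $\Delta = (1-\delta_B)/(1-\delta_A)$ collapses this to $(1-\delta_A)\Delta^3 / 6$, matching the remaining summand. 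As a sanity check, setting $\delta_A = \delta_B = \delta$ gives $\Delta = 1$ and the total reduces to $(1-\delta)/3$, consistent with a standard second-price auction between two $U[0,1]$ bidders; setting $\delta_A = \delta_B = 0$ further recovers $1/3$.
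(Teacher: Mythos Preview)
Your proposal is correct and follows essentially the same route as the paper: identify the win region $v_A \geq \Delta^2 v_B$, compute the winner's GSP price in each region (with the loser paying zero), and integrate over $[0,1]^2$ split along $v_A = \Delta^2 v_B$. The only cosmetic difference is that the paper rewrites the B-wins price as $\tfrac{1-\delta_A}{\Delta}\,v_A$ before integrating, whereas you carry $\tfrac{(1-\delta_A)^2}{1-\delta_B}\,v_A$ through the integral and simplify via $\Delta=(1-\delta_B)/(1-\delta_A)$ at the end; both yield the same result.
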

\begin{proofsketch}
		The proof follows the same structure as that of Theorem \ref{thm:eqrev}. Again, once one identifies the relevant events and payoffs, the calculation is a straightforward double integral. A wins whenever $b_A \geq \Delta b_B$; thus given the equilibrium strategies, A wins whenever $v_A \geq \Delta^2 v_B$. The payment A makes if she wins is the smallest payment $p$ such that $p \geq \Delta b_B$, which is exactly $\Delta b_B$; under the equilibrium, then A will pay $\Delta (1-\delta_B) v_B$. Similarly, $B$ wins whenever $v_A \leq \Delta^2 v_B$, and pays $v_A \cdot (1-\delta_A)/\Delta$. Setting up the integral in pieces as before and evaluating yields the claim. 
\end{proofsketch}

\subsection{Greedy VCG}
Now we turn to $(Greedy,VCG)$. Here, $A$ wins whenever $b_A \geq b_B$,  but the pricing is VCG; that is, if $A$ wins, she pays $(1-\delta_B) b_B$. In this case, we again find a simple linear equilibrium; however, in this setting, the equilibrium given is not unique.

\begin{theorem} \label{thm:eqgreedygsp}Suppose that $(\A, \P_{\A})$ are $(Greedy,GSP)$. Then in the two slot, two bidder, uniform valuation case, the strategy profile:
	
	\begin{align*}
		(b_A(v_A), b_B(v_B)) \coloneq \left({(1-\delta_A)\over (1-\delta_B)} v_A, {(1-\delta_B)\over(1-\delta_A)} v_B\right)= \left(\frac{v_A}{\Delta}, \Delta v_B  \right)
		\end{align*}
	is a Bayes-Nash equilibrium.
	\end{theorem}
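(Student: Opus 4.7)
The theorem as literally written says $(\mathcal{A},\mathcal{P}_\mathcal{A})=(\text{Greedy},\text{GSP})$, but the paragraph preceding it describes VCG pricing (``if $A$ wins, she pays $(1-\delta_B) b_B$'') and the section is titled Greedy VCG, so I will read the mechanism per its defining pricing formula in the preamble. Under that reading the plan is to mirror the ex-interim best-response approach used in the proof of Theorem~\ref{thm:eqggsp}: fix one bidder's claimed strategy, write the other bidder's expected utility as a function of a free bid, show it is a concave quadratic whose unique unconstrained maximizer matches the claimed strategy, and handle the piecewise boundary cases explicitly.

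Concretely, I would fix $v_A$, assume $b_B(v_B)=\Delta v_B$ so that $b_B\sim U[0,\Delta]$, and decompose A's expected utility into the ``A wins'' event $\{b_B\le b_A\}$ (in which A's expected per-slot price is $(1-\delta_B)\,\mathbb{E}[b_B\mid b_B\le b_A]$, and her utility is $v_A - (1-\delta_B)b_B$) and the ``A loses'' event (in which she earns $\delta_A v_A$ and pays nothing). For $b_A\in[0,\Delta]$ this gives
\begin{align*}
\mathbb{E}[u_A\mid b_A] \;=\; \frac{b_A}{\Delta}\Bigl(v_A - (1-\delta_B)\tfrac{b_A}{2}\Bigr) \;+\; \Bigl(1-\frac{b_A}{\Delta}\Bigr)\delta_A v_A,
\end{align*}
which is a concave quadratic in $b_A$ with second derivative $-(1-\delta_B)/\Delta<0$. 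Setting the derivative to zero and using $(1-\delta_A)\Delta=(1-\delta_B)$ yields the critical point $b_A^{\ast}=v_A/\Delta$, matching the claimed strategy. This is the bulk of the calculation; the remainder is bookkeeping.

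The main obstacle is the piecewise structure: because $\Delta<1$, the critical point $v_A/\Delta$ exceeds $\Delta$ whenever $v_A>\Delta^2$, so it need not lie in the regime where the above quadratic describes the utility. I would handle this by observing that for any $b_A\ge \Delta$, A wins with probability $1$ and the expected utility flattens to the constant $v_A-(1-\delta_B)\Delta/2$, which matches $\mathbb{E}[u_A\mid b_A=\Delta]$. Therefore when $v_A\le\Delta^2$ the interior maximizer $v_A/\Delta\le\Delta$ is a global best response, and when $v_A>\Delta^2$ every bid in $[\Delta,\infty)$ (including $v_A/\Delta$) achieves the same maximal value, so the claimed strategy is again a (weak) best response. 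A brief sanity check here is that in the overbidding regime ($b_A^{\ast}>v_A$), any realization in which A wins satisfies $v_B\le v_A/\Delta^2$, so her realized payment is at most $(1-\delta_B)\Delta\cdot v_A/\Delta^2=(1-\delta_A)v_A\le v_A$; this is not needed for the equilibrium proof itself but is worth noting since it shows the ``overbid'' is safe.

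To finish, I would repeat the symmetric argument for B, now holding $v_B$ fixed and letting $b_A=v_A/\Delta\sim U[0,1/\Delta]$. The algebra is isomorphic after the change of variables $\Delta\mapsto 1/\Delta$ (because the roles of the two discount ratios are swapped relative to A's problem); the resulting critical point is $b_B^{\ast}=\Delta v_B$, again with a piecewise-boundary argument showing weak optimality over the full bid range. Together, both best-response verifications establish the Bayes--Nash property; uniqueness is not claimed here, consistent with the remark in Table~\ref{tab:strats} that the (Greedy, VCG) linear equilibrium is not unique.
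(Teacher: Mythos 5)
Your proposal is correct, including the reading of the mechanism as (Greedy, VCG) despite the typo in the theorem statement, and it follows essentially the same route as the paper's proof: fix the opponent's linear strategy, write the ex-interim utility as a concave quadratic on the region where the win probability is interior, verify the first-order condition gives the claimed coefficient, and use the flat ``cap'' beyond the boundary to show the prescribed bid remains a (weak) best response when the critical point exceeds it. The only cosmetic difference is that for B the critical point $\Delta v_B$ always lies in the interior region (since $\Delta v_B \le 1/\Delta$), so the boundary case you mention never actually arises on that side.
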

Notice that since $\Delta < 1$, Bidder A overbids while Bidder B shades down. The intuition for this structure is that A has a higher expected (marginal) effective valuation for the first slot than B; thus greedy allocation encourages overbidding on A's parts to increase win probability without a strong enough countervailing check via B's bid. For B, by contrast, the fact that A is overbidding and has a higher marginal valuation anyway makes the it possible to achieve negative payoff even if bidding only his valuation. 

\begin{proofsketch}
	Once again, we show that each strategy is a best-response to the other, and take particular care with the piecewise-nature of the payoff.
\end{proofsketch}

%Now, we examine revenue in this equilibrium above. 
\begin{proposition}\label{thm:revgreedyvcg}
	In this equilibrium above, revenue is given by:
	\begin{align*}
	\E[\mathcal{R}(v_A,v_B)] =\frac{\Delta^3(1-\delta_A)}{6} + (1-\delta_B) \Delta \left[\frac{1}{2} - \frac{\Delta^2}{3}\right]
	\end{align*}
\end{proposition}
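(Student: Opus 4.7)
The plan is to follow the same recipe used for Propositions~\ref{prop:gspgreedyrev} and \ref{prop:optgsprev}: first identify the winner as a function of $(v_A, v_B)$ under the stated equilibrium strategies, then identify the payment conditional on the outcome, and finally integrate against the uniform joint distribution. Under the equilibrium $b_A(v_A) = v_A/\Delta$ and $b_B(v_B) = \Delta v_B$, greedy allocation places $A$ in the top slot precisely when $b_A \geq b_B$, i.e., when $v_A/\Delta \geq \Delta v_B$, which simplifies to $v_A \geq \Delta^2 v_B$.

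Next I would compute the VCG payment. With only two bidders, the winner's externality is entirely on the loser. If $A$ wins, then removing $A$ would place $B$ in slot~1 with effect $1$, whereas in $A$'s presence $B$ is in slot~2 with effect $\delta_B$; applying the VCG formula from the preliminaries gives an undiscounted price $p_A = (1-\delta_B)b_B$, and since $A$'s own slot effect is $1$, the revenue contribution equals $(1-\delta_B)\Delta v_B$. Symmetrically, if $B$ wins, the revenue contribution is $(1-\delta_A)b_A = (1-\delta_A)v_A/\Delta$. A useful observation at this point is that both the winning condition ($v_A \geq \Delta^2 v_B$) and the conditional payments coincide exactly, pointwise, with those derived in the proof of Proposition~\ref{prop:optgsprev} for (Opt, GSP); this already establishes the revenue equivalence highlighted in Table~\ref{tab:revs}, bypassing the need to recompute from scratch. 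Still, the explicit evaluation proceeds as follows.

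I would split the inner integral over $v_A$ at the threshold $\Delta^2 v_B$ and write
\begin{align*}
\E[\mathcal{R}] = \int_0^1 \left[ \int_0^{\Delta^2 v_B} \frac{(1-\delta_A) v_A}{\Delta}\, dv_A + \int_{\Delta^2 v_B}^{1} (1-\delta_B)\Delta v_B \, dv_A \right] dv_B.
\end{align*}
The first inner integral evaluates to $\frac{(1-\delta_A)\Delta^3 v_B^2}{2}$, and the second to $(1-\delta_B)\Delta v_B\bigl(1-\Delta^2 v_B\bigr)$. Integrating each piece over $v_B \in [0,1]$ yields $\frac{(1-\delta_A)\Delta^3}{6}$ and $(1-\delta_B)\Delta\bigl[\tfrac{1}{2} - \tfrac{\Delta^2}{3}\bigr]$, respectively, summing to the claimed expression.

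The only real subtlety, which is more a matter of careful bookkeeping than a genuine obstacle, is correctly reducing the VCG externality formula to $(1-\delta_B)b_B$ (respectively $(1-\delta_A)b_A$) in the two-bidder setting, and confirming that the effective (discounted) revenue the auctioneer collects is exactly that quantity given that the winner occupies slot~1 with effect~$1$. Once this is in place, the rest is a mechanical double integral identical in form to Proposition~\ref{prop:optgsprev}.
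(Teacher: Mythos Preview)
Your proposal is correct and follows essentially the same approach as the paper: identify the winning event $v_A \geq \Delta^2 v_B$, compute the VCG payments $(1-\delta_B)\Delta v_B$ and $(1-\delta_A)v_A/\Delta$, observe that these coincide pointwise with the (Opt, GSP) case of Proposition~\ref{prop:optgsprev}, and conclude. The paper's proof in fact stops at that observation and defers to the earlier calculation, whereas you additionally write out the double integral explicitly; this is harmless extra detail rather than a different route.
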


Notice that this is the \emph{same} revenue as under (Opt,GSP). Why should this be? It turns out that the structure of the (Greedy, GSP) equilibrium implies the same win conditions, in terms of realized bidder valuations, and the same payments conditional on winning. Informally, the equilibrium strategies ``adjust'' for the differing pricing and allocations rules.

	\begin{proofsketch}Notice that A wins if her value is $b_A \geq b_B$, which under the strategy profile is true iff $v_A \geq \Delta^2 v_B$. If she wins, she pays $(1-\delta_B) \Delta v_B$. Similarly, B wins if $v_A \leq \Delta^2v_B$, and pays $v_A (1-\delta_A)/\Delta$ if he wins. Recall that in OPT + GSP, A won if $b_A \geq \Delta b_B$, but given the strategy profile, this is true whenever $(1-\delta_A) v_A \geq (1-\delta_B) \Delta v_B$. So A wins whenever $v_A \geq \Delta^2 v_B$. Similarly, under OPT + GSP, A paid $\Delta b_B$, which under the profile is $ \Delta (1-\delta_B)v_B$. A similar argument works for B. And thus, the calculation works out to be exactly the same.
\end{proofsketch}	

\subsection{Optimal Allocation and VCG Pricing} Recall that (Opt, VCG) is just the standard VCG mechanism, which is well-known to have a natural dominant strategy equilibrium in truthful bidding. Thus we need only calculate the revenue:

\begin{proposition}\label{thm:optvcgrev} In the truthful equilibrium of (Opt, VCG), revenue is given by $\frac{\Delta^2 (1-\delta_A)}{6} + \frac{1-\delta_B}{6}\left(3 - 2\Delta\right)$. 
	\end{proposition}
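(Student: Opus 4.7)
The plan is to exploit the standard fact that (Opt, VCG) reduces to the classical VCG mechanism, in which truthful bidding $b_i = v_i$ is a dominant-strategy equilibrium. Under truthful bidding the two feasible assignments have apparent welfare $v_A + \delta_B v_B$ for $(A,B)$ and $v_B + \delta_A v_A$ for $(B,A)$, so the optimal allocation rule selects $(A,B)$ exactly when $(1-\delta_A)v_A \geq (1-\delta_B)v_B$, i.e.\ when $v_A \geq \Delta v_B$. This win condition is the same one that appeared in the (Greedy, GSP) analysis, which turns out to be the key structural fact driving the rest of the proof.

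Next I compute VCG prices directly from the externality definition in the model section. When A wins slot~1, removing A promotes B from slot~2 to slot~1, so A's per-conversion price is $\delta_B^{1} v_B - \delta_B^{2} v_B = (1-\delta_B) v_B$, and A's expected payment is $\delta_A^{1}(1-\delta_B) v_B = (1-\delta_B) v_B$ since $\delta_A^1 = 1$; meanwhile B, occupying slot~2, exerts no externality and so pays $0$. By the symmetric argument, when B wins slot~1 the realized revenue is $(1-\delta_A) v_A$ while A pays $0$. Hence the pointwise revenue as a function of the valuation draw is
\begin{align*}
\mathcal{R}(v_A,v_B) \;=\; \begin{cases} (1-\delta_B)\,v_B & \text{if } v_A \geq \Delta v_B, \\ (1-\delta_A)\,v_A & \text{if } v_A < \Delta v_B, \end{cases}
\end{align*}
which is \emph{identical} to the pointwise revenue function derived for (Greedy, GSP) in the proof of Proposition~\ref{prop:gspgreedyrev}.

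The final step is therefore a routine double integral that has already been carried out in the paper: split $[0,1]^2$ at the line $v_A = \Delta v_B$, integrate $(1-\delta_A) v_A$ over the region $\{v_A < \Delta v_B\}$ and $(1-\delta_B) v_B$ over $\{v_A \geq \Delta v_B\}$, and collect terms to obtain $\frac{\Delta^2 (1-\delta_A)}{6} + \frac{1-\delta_B}{6}(3-2\Delta)$. There is no substantive obstacle here; the only point requiring care is correctly recognizing that the slot-2 bidder always pays zero under VCG (removing her does not displace the slot-1 winner), after which the claim reduces to an integral computation already completed in Proposition~\ref{prop:gspgreedyrev}. This pointwise coincidence is also exactly the revenue-equivalence half of Theorem~\ref{thm:eqrev} for the two standard formats.
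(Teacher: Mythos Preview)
Your proposal is correct and follows essentially the same approach as the paper: you identify the win condition $v_A \geq \Delta v_B$ under truthful bidding, compute the VCG externality payments $(1-\delta_B)v_B$ and $(1-\delta_A)v_A$ for the respective winners (with the loser paying zero), observe that the resulting pointwise revenue function coincides with that of (Greedy, GSP), and defer to the double integral already computed in Proposition~\ref{prop:gspgreedyrev}. The paper does exactly this, with slightly less detail on the externality computation than you provide.
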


Notice that this is, perhaps surprisingly, the same revenue as the Greedy + GSP auction. As before, this is because the winning events and conditional payments are exactly the same in this format (in this setting) as under the linear equilibrium under Greedy + GSP. 

\begin{proofsketch} 
		Again, rather than recalculating this expected revenue, notice that A wins whenever $b_A \geq \Delta b_B$; since we are considering the dominant strategy truthful equilibrium, this is true if and only if $v_A \geq \Delta v_B$, which was the same win condition for A under Greedy + GSP when A's bid was $(1-\delta_A) v_A$ and B's was $(1-\delta_B) v_B$. As for payment, if A wins, she pays her externality, which , since bidders bid truthfully, is just $(1-\delta_B)v_B$. This again is the same payment as under Greedy+GSP when B bid $(1-\delta_B) v_B$. So again, the calculation follows in the same way. 
	
		\end{proofsketch}

\subsection{Revenue Comparison}
Now we compare the revenue of the different auction forms.  
%Notice that we have shown that the VCG mechanism and Greedy Allocation + GSP pricing achieve the same revenue in this case. Ultimately this is because in both auctions, given the strategies, A wins whenever $v_A \geq \Delta v_B$ and pays $(1-\delta_B) v_B$; similarly, in both auctions, B wins whenever $v_a \leq \Delta v_B$ and pays $(1-\delta_A) v_A$. 
Again, this is using the revenues calculated above and provided in table 2; that is, the simple linear equilibria and assuming that $\delta_A <\delta_B$. We stated the revenue hierarchy before as Theorem \ref{thm:eqrev}:

\eqrev*

\begin{proofsketch}The two inequalities follow by inspection of Table \ref{tab:revs}, so only the inequality needs proof. To do this, we simply expand out the difference between $R_{\greedy}^{*\gsp}$ and $R_{\greedy}^{*vcg}$. At this point, we just need to show that this difference is always positive; this can be easily seen by graphing the function, but we also analytically show that this holds in the appendix.\end{proofsketch}

\longversion{
\subsection{More complicated settings}\label{sec:nonlinear}
Unfortunately, though the two bidder case admits elegant linear equilibria, expanding the setup as simply as to two slots, two bidders of one type and one bidder of another immediately eliminates hope of finding a simple linear equilibrium in general. 
To see this, one can posit a linear equilibrium again symmetric up to discount types. Then beginning with the rare player, one can attempt to solve for this linear equilibrium, and one way to attack this is to view the game as a two-stage game for that player: first, there is a preliminary game in which the players bids determine who partcipiates in a second price auction and who sits unallocated entirely; then, the there is a continuation game for the selected players in which their (original) bid determines their result in the auction. By calculating the payoff of a given bid in this continuation game, one can easily write the payoff of a bid, and then it is easy to see that if the opposing type is playing a linear strategy, a linear strategy will not be optimal. 
}
\section{Empirical Study}\label{s:exp}
%!TEX root=main.tex
In this section, we test our theoretical predictions of revenue and PoA on both simulated and realistic data using \emph{No-Regret Learning} (NRL) algorithms to model bidders. These algorithms have guarantees of convergence to (more general notions of) equilibrium, and have also been proposed as potential solution concepts in their own right \cite{kleinberg2011beyond}. We have two key results. First, the theoretical Bayes-Nash equilibria are (approximately) discovered by NRL bidders. Second, we find evidence that the revenue relationships predicted by the theoretical analysis do appear in the data, but this can be sensitive to the particular valuations drawn, underscoring the need for experimentation in addition to theoretical analysis.

\paragraph{Approach.} 
 We outline here the core commonality across the experiments. Each is based on NRL algorithms, which converge\footnote{We describe the more formal meaning behind these statements  and the technical details more broadly in Section \ref{sec:appexperiments} of the online appendix.} to (Bayesian) \emph{coarse correlated equilibrium} (CCE) (see, e.g. \cite{roughgarden2016twenty})). 
 Though CCEs are more general than those we studied earlier in the paper, they may better reflect the real-world situation bidders face as they arise naturally from players learning to bid independently. 
 
 In each experiment and for each mechanism, we instantiate copies of the \emph{exponential weights} (EW) algorithm to represent each player. The players play in many repeated rounds, maintaining at every round a distribution over bids from a discrete bid space. Each round, bidders draw their bids from this distribution, and the mechanism uses the bids to compute an allocation and price for each bidder. We record the total revenue and welfare under the realized outcome; for each player and each alternative potential bid, we also calculate a \emph{counterfactual} outcome by re-running the mechanism with all \emph{other} players' choices held fixed. The player then observes his payoff under the counterfactual outcome and updates his distributions over actions accordingly.

 \paragraph{Experiment 1.} Our first experiment approximates the two-bidder, two-slot, uniform distribution case we analyzed in Section \ref{s:eq}.  
  For this experiment, we adopt the \emph{population interpretation} of Bayesian games, and so discretize the uniform valuation distributions into discrete uniform distributions over players with \emph{fixed}, evenly-spaced valuations. Each round, nature selects a single player from each population, and each player maintains their own strategy. This approach is analyzed in \cite{hartline2015no}; we modify the approach by adding an extensive random exploration period. This modification is inspired by \cite{feng2021}, which shows that including a sufficiently long exploration period in natural auction settings allows bidders to {provably} converge to \emph{specific} and natural Nash equilibria. In our case, we find that for each mechanism, we observe a very close correspondence between realized and theoretical bid distributions. Figure \ref{fig:exp1bcce}) displays the predicted bids (as a dashed line) and average observed bid for each valuation.
 
\paragraph{Realistic Data for Experiments 2 and 3.}  We use real data from an online platform with a large advertising business
%, Facebook, 
and sample real bids to generate realistic valuation data. These are not \emph{literal} valuation data for two reasons. First, bidders on the platform face a more complicated setting than modeled, e.g., bidders compete in multiple sequential and simultaneous auctions, so bids may not precisely correspond to values. Second, we have normalized the data to protect the privacy of the participants. Hence, these bids are a reasonable proxy for real-world distributions, but may not be exactly such in practice.  

The first dataset we collect is the \emph{Random Advertisers} dataset, in which we sample 10 random advertisers who had between 100,000 and 200,000 impressions on a particular outlet and a day\footnote{Mobile Advertising, September 19, 2020}. For these 10 advertisers, we select 100,000 bids and normalize each advertiser's bids to fall within the unit interval and clamp at the 5th and 95th percentiles.  We can use this dataset to sample independently drawn valuations. The second is the \emph{Random Auction} dataset: we again fix the outlet and day and randomly select 100,000 auctions, this time normalizing each auction individually. The Random Auction dataset thus maintain correlation between bidders' valuations in a given auction, which may be, in some cases, an important real-world feature of the domain.

 \begin{figure}[h]
 	\centering
 	\begin{subfigure}[b]{0.4\textwidth} 
 		\centering
 				\includegraphics[width=\linewidth]{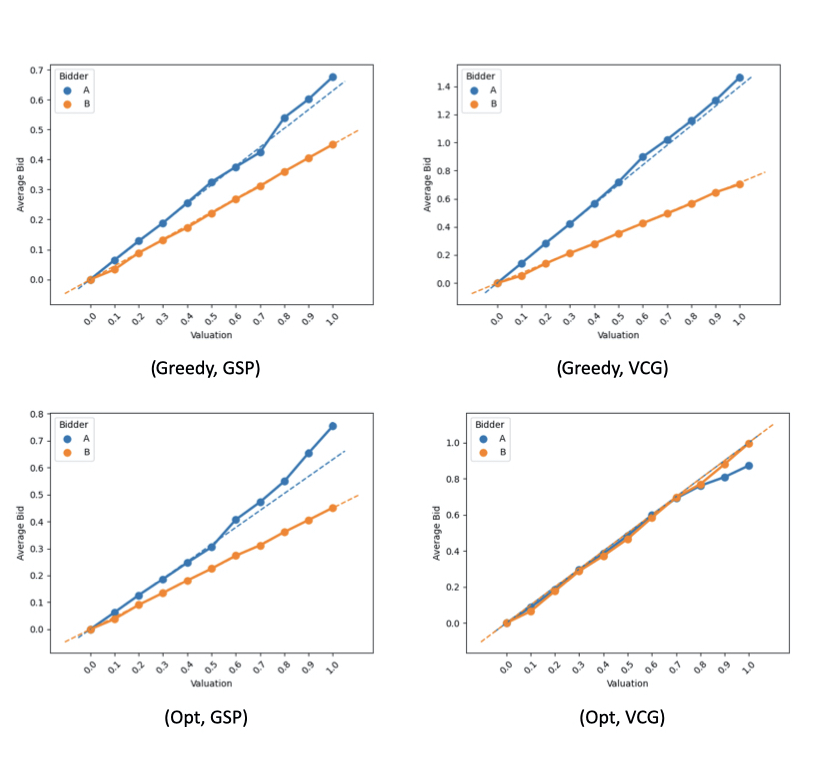}
 		\caption{Experiment 1: Mean bid by value. Dashes indicate theory. \label{fig:exp1bcce}}	
 	\end{subfigure}
 	 	 	\begin{subfigure}[b]{0.24\textwidth}
 	 		\centering
 	 		\includegraphics[width=\linewidth]{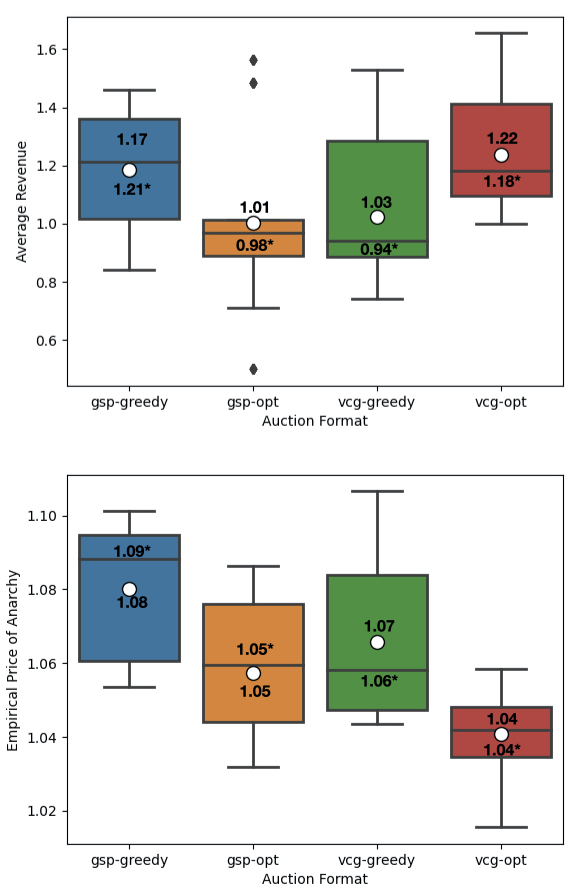}
 	 		\caption{Experiment 2: Fixed Valuations}		\label{fig:exp2}
 	 	\end{subfigure}
  	 	\begin{subfigure}[b]{0.24\textwidth}
  		\centering
  		\includegraphics[width=\linewidth]{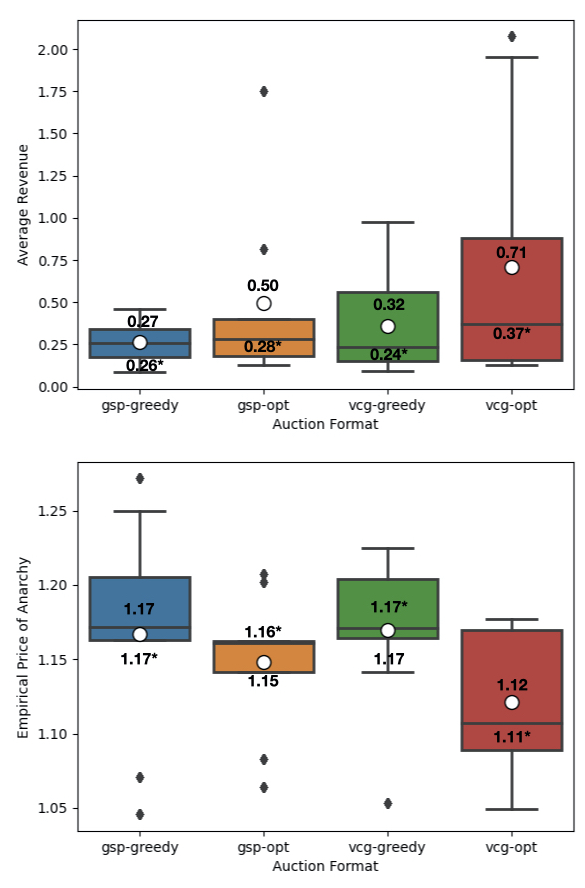}
  		\caption{Experiment 3: Random Valuations}		\label{fig:exp3}
  	\end{subfigure}
 	\caption{Experimental Results. Scales differ due to normalization. Means (medians) are marked by circles (lines).}
 \end{figure}

\paragraph{Experiments 2 and 3.} 
In Experiment 2, we use the Random Advertisers dataset. A protocol for a single round is as follows. We initialize an auction with 4 slots and 9 bidders of varying\footnote{We provide these and other implementation details in Section \ref{s:app-expdetails} of the online appendix.} geometric discount factors (each with a fixed constant multiplier of $1$). Each bidder has a valuation drawn independently from the Random Advertisers dataset, and is initialized with a fresh exponential weights algorithm over the (discretized) bidspace up to their valuation, as we would expect bids to be conservative in practice. After an exploration period, the players update via EW for 100 rounds, updating their bid distributions based on the realized and counterfactual bids each round.
Then, we sample a bid profile from the time-averaged joint distribution by uniformly selecting a time period and drawing a bid profile from the EW distributions of that given round; we average the revenue and welfare of 200 such samples as an estimate of the revenue and welfare given that valuation profile. For each auction format, we repeat this entire process for 200 total valuation profile draws, and average these together as an estimate of the format's revenue and welfare under the valuation distribution. We also compute the optimal allocation for each draw, and take the ratio of the average optimal value to the average welfare as an estimate of an empirical analogue (i.e. not worst-case) to the PoA. We plot revenue %in Figure \ref{fig:reviid} 
and empirical PoA in Figure \ref{fig:exp2}. Our third experiment is very similar to our second, except for the valuation sampling. Rather than sample valuations independently from the Random Advertiser dataset, we sample \emph{auctions} randomly from the Random Auction dataset, and assign those bids to bidder valuations. Then we proceed as described in Experiment 1. We plot revenue %in Figure \ref{fig:revcorr}
and the Empircal PoA in Figure \ref{fig:exp3}.

\paragraph{Experiment 2 and 3 Results.}
  In Experiment 2 (Figure \ref{fig:exp2}), a (rough) analog to the revenue hierarchy of Table \ref{tab:revs} is apparent. Additionally, the Empirical PoAs of all formats are significantly better than the worst-case bounds in Table \ref{tab:poalower}. In Experiment 3 (Figure \ref{fig:exp3}), the PoA is somewhat worse than those of Experiment 2, but again relatively far from the worst-case bounds we proved. On the other hand, the revenue hierarchy appears significantly different, with both GSP mechanisms doing worse than VCG. Together, these results suggest that the relative quality of mechanisms can be highly sensitive to the underlying distributions\footnote{
  We must caveat all these results by noting that because the bid space inherently high-dimensional, more samples of valuation profiles may be required for fidelity to the true distribution. This may be particularly true for the correlated case.}. 
  
\begin{algorithm}
	\SetAlgoNoLine
	\caption{Protocol for Experiment 1
		\label{exp:one}}
		\For{$(\A, \P_{\A}) \in \{\text{[Greedy,Opt]} \times \text([GSP,VCG]) \}$}{
		\For{$t \in 1,2...,100$}{
		 Draw 10 valuations. Initialize bidders with values and fresh Exp. Weights.  
		\For{$t \in 1,2,...,1000$}{
		 Initialize and run a 5 slot auction using $(\A, \P_{\A})$
		 Draw bids and fix them. 
		\For{$i \in \mathcal{I}$}{
		\For{$b' \in \{\frac{1}{d}, \frac{2}{d}, ...,1\}$}{
		 Re-run auction with all other players' bids fixed, but player $i$ using $b'$.\\
		 Save payoff. \\
		
		 Update ExpWeights. 
	}
	}
		\For{$t' \in 1,2,...100$}{
		 Test for CCE. If not, re-run another 1000 learning steps.\\
		 Draw number uniformly at random from the total number of learning steps to become round number. \\
		 Draw bid for each bidder from time average of ExpWeights at that round.\\
		 Run auction with these bids\\
		 Save round info.   
}
}
}
}
\end{algorithm}

\section{Discussion and Open Questions}\label{s:conc}
%!TEX root = main.tex
In this paper, we obtain several theoretical and empirical results for the Ad Types setting. We leave several open directions. In terms of Price of Anarchy: while we provide constant upper and lower bounds on the Price of Anarchy under greedy allocation, there remains a gap between these bounds. More substantially, while we provide a constant lower bound on the Price of Anarchy under optimal allocation with VCG pricing, our upper bound is instance-dependent and likely quite pessimistic; resolving this with either a constant upper bound, or identifying a family of arbitrarily bad examples, would be helpful. In terms of equilibrium characterization, it would be useful to identify (or rule out) analytical solutions in more complicated settings. In terms of empirics: understanding how our results would change as various features of the setting change would be valuable. For instance, even our large setting is still relatively small compared to modern instances encountered in online advertising today. Second, our understanding of how revenue and welfare may vary with discount curves in practice is not yet systematic; theory suggests that bidders ought to bid less aggressively as their valuation of further slots increase, but how much less aggressively, and how this is affected by auction format, is unknown.

\bibliographystyle{acm}

\bibliography{bib.bib}
\clearpage
\appendix
%!TEX root = main.tex
\section{Notation Table}
%Here we provide a reference for notation: 
\mbox{}
\nomenclature[A,01]{$\tau_i$}{The ad type of Player i}
\nomenclature[A,02]{$v_i$}{Valuation of Player i }
\nomenclature[A,04]{$b_i$}{Bid of Player i}
\nomenclature[A,03]{$b_i(v_i)$}{Bid mapping of player $i$}

\nomenclature[B,01]{$\A$}{An allocation algorithm}
\nomenclature[B,02]{$\P_{\A}$}{A pricing algorithm that uses $\A$ as its allocation subroutine}
\nomenclature[B,03]{$\mathbf{b}$}{A bid vector}
\nomenclature[B,04]{$\mathbf{b}_i$}{The ith component of the bid vector $\mathbf{b}$}

\nomenclature[C,01]{$\opt$}{The optimal welfare achievable given valuations}
\nomenclature[C,02]{$\text{SW}$}{Social Welfare}
\nomenclature[C,03]{$W$}{(True) Welfare as function, or specific welfare value in context }
\nomenclature[C,04]{$\widehat{W}$}{(Public) Welfare, i.e. welfare if bids were valuations, or public welfare value in context}
\nomenclature[C,05]{${W}^{-i}$}{(True) Welfare excluding $i$. , or true welfare excluding $i$ in context}
\nomenclature[C,06]{$\widehat{W}^{-i}$}{(Public) Welfare excluding $i$, or public welfare excluding $i$value in context}

\nomenclature[D,01]{$\delta_A$, $\delta_B$}{Player A's, Player B's discount rate for the second slot}

\nomenclature[D,02]{$v_A$, $v_B$}{Player A's,Player B's valuation for a click}

\nomenclature[D,03]{$b_A$, $b_B$}{Player A's, Player B's bid}

\nomenclature[D,04]{$\mathcal{R}(\mathbf{v})$}{Revenue. Given equilibrium/mechanism, $\mathcal{R}$ is function of valuations (i.e. $v_A$,$v_B$ in Section \ref{s:eq}).}

\nomenclature[E,01]{$\pi_{\A}(s,\mathbf{b})$}{The player in slot $s$ when the bid vector is $\mathbf{b}$. When clear, we may omit $\mathbf{b}$ and $\A$. }
\nomenclature[E,02]{$\tau(i)$}{The type of Player i}
\nomenclature[E,03]{$\tau(\pi(s))$}{The type of the player in slot $s$}
\nomenclature[E,04]{$\sigma_{\A}(i,\mathbf{b})$}{The slot that player $i$ receives under allocation algorithm $\A$ when the bid vector is $\mathbf{b}$. When clear, we may omit $\mathbf{b}$ and $\A$. }
\nomenclature[E,05]{$\boldsymbol{\nu}$}{Optimal allocation assignment vector as measured by (true) welfare given valutations}
\nomenclature[E,06]{$\nu(i)$}{$i$'s slot under the optimal assignment}
\printnomenclature

\section{Experimental Framework and Parameters} \label{sec:appexperiments}
The exact theoretical results we provide are interesting, but are limited to simple settings. 
In order to evaluate the equilibrium revenue and welfare of the mechanisms studied in a more general and realistic setting, we need to devise a computational approach we can take to data.  

It is computationally hard (\cite{chen2009settling}) in general to directly compute a Nash equilibrium, but at the cost of considering a more general equilibrium concept, we can make progress via no-regret learning (NRL) algorithms. In particular, we apply the well-known fact that the empirical distribution of action profiles taken by players using NRL algorithms forms a \emph{coarse correlated equilibrium}.

We describe the theory behind this approach in further detail below, but first we summarize our simulation framework at a high level. We explore two settings. We call the first (presented second in the main body of the text) the \emph{fixed valuation setting}, because bidders are assumed to have fixed valuations given which they learn to bid. We call the second the \emph{random valuation setting}, because bidders are modeled as randomly drawing a valuation each round. These two settings capture different but potentially equally reasonable models of repeated auctions. The fixed valuation setting is a good model for repeated auctions in which bidders are stable and have a sense of their opponents' valuations; the random valuation setting better captures bidders who may compete in auctions against entirely different opponents, and be competing for users with different valuations. Despite this apparent difference, however, the approach we use to learn their equilibria is largely similar, as will become apparent.

\subsection{Theoretical Underpinnings}
A \emph{coarse correlated equilibrium} is a more general equilibrium concept than a Nash equilibrium:

\begin{definition}[Coarse Correlated Equilibrium]
	We say a distribution $\D$ over actions is a \emph{coarse correlated equilibrium} (CCE) if for every player $i$, and every action $a'$:
	\begin{align*}
		\E_{\mathbf {a} \sim \D}[u_i(\mathbf {a})] \geq \E_{\mathbf{a} \sim \D} [u_i(a',\mathbf {a}_{-i})] 
	\end{align*}
In other words, if a strategy profile is drawn from a distribution $\mathcal{D}$, it is in each player's interest to follow their own part of the prescribed strategy under the assumption that others also will. 

We say that $\mathcal{D}$ is an $\epsilon$-approximate CCE if
	\begin{align*}
	\E_{\mathbf {a} \sim \D}[u_i(\mathbf {a})] \geq \E_{\mathbf{a} \sim \D} [u_i(a',\mathbf {a}_{-i})] -\epsilon
\end{align*}
\end{definition}

Next, we describe the simplest version of NRL that is useful for our purposes:

\paragraph{No-Regret Learning Framework.}
In the No-Regret Learning (NRL) Framework, a decision-maker faces an online sequence of decision problems with a fixed action set $\mathscr{A}$ over a finite time horizon. For each round $t=1...T$, the player selects $a^t$ from a probability distribution over $\mathscr{A}$, which we denote by $\alpha^t(a)$; the \emph{losses} (equivalently, payoffs) for each action $a$ are realized as $u^t(a)$, and the player receives the loss of whatever action he selected, i.e. $u^t(a^t)$\footnote{To be consistent with the literature, we will assume without loss of generality that payoffs are bounded between $0$ and $1$. This is also consistent with our normalization.}. (In the learning context, $u^t(a)$ is an arbitrary sequence of loss vectors, but in the game setting we can think of $u^t(a)$ as instead being a fixed utility function that depends on the decision-maker's choice but also on the choice of all other agents. In other words, $u^t(a) = u(a, \mathbf{a}_{-i}^{t})$.)

The (external) \emph{regret} of an action sequence is the difference between the payoff of the best fixed action $a^*$ and the player's payoff over the sequence. That is:
\begin{align*}
	 R_T =  \sum_{t=1}^{T} u^t(a^*) - \sum_{t=1}^{T} u^t(a^t) \qquad a^* \in \argmax_{\mathscr{A}} \sum_{t=1}^{T} u^t(a^*)
	 \end{align*}
 If an online learning algorithm promises that for every $\epsilon >0$, there exists a $T$ such that whatever the sequence of losses:
 \begin{align*}
	\E\left[\frac{R_T}{T} \right]\leq \epsilon
 \end{align*}
where the expectation is over the randomness of the algorithm, we say it is a \emph{NRL} algorithm. Of course, there is a huge literature on variants and generalizations of the simple framework presented here, but all we really need is the following result\footnote{A similar result can be obtained with high probability over the \emph{realized} sequence. (\cite{roughgarden2016twenty}), but its guarantees require extra time steps to allow for concentration of realized regret around expected regret.}:

\begin{clm}[No-Regret Implies CCE]\label{clm:cce}
	Suppose that players each use a NRL strategy that guarantees average regret $\epsilon(T)$. Fix a horizon $T$ and let $\boldsymbol{\alpha}^1,...,\boldsymbol{\alpha}^T$ be the probability vectors for the \emph{joint} distribution of action profiles induced by each players' play. Then
	the following compound distribution, which we call the \emph{average empirical action distribution}, is an $\epsilon(T)$-approximate CCE:

\begin{align*}
	\bar{\mathcal{D}} \coloneqq \boldsymbol{\alpha}^{\tilde{t}}, \ \ \tilde{t} \sim \text{Uniformly}\{1,2,...,T\}
	\end{align*}
\end{clm}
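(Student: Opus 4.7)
The plan is to unroll the definition of $\bar{\mathcal{D}}$ and verify the CCE inequality player-by-player by directly invoking the no-regret guarantee as the (in-expectation) version of external regret. Concretely, fix player $i$ and an arbitrary candidate deviation $a'_i \in \mathscr{A}_i$. Because $\bar{\mathcal{D}}$ first samples $\tilde{t}$ uniformly from $\{1,\dots,T\}$ and then draws an action profile from $\boldsymbol{\alpha}^{\tilde{t}}$, by the tower property I can rewrite both sides of the CCE inequality as time-averages:
\begin{align*}
\E_{\mathbf{a}\sim\bar{\mathcal{D}}}[u_i(\mathbf{a})] &= \tfrac{1}{T}\sum_{t=1}^{T} \E_{\mathbf{a}\sim\boldsymbol{\alpha}^{t}}[u_i(\mathbf{a})],\\
\E_{\mathbf{a}\sim\bar{\mathcal{D}}}[u_i(a'_i,\mathbf{a}_{-i})] &= \tfrac{1}{T}\sum_{t=1}^{T} \E_{\mathbf{a}\sim\boldsymbol{\alpha}^{t}}[u_i(a'_i,\mathbf{a}_{-i})].
\end{align*}
So establishing the $\epsilon(T)$-CCE condition for $\bar{\mathcal{D}}$ reduces to showing that the per-round expected advantage of the fixed deviation $a'_i$ over what $i$ actually plays, averaged across time, is at most $\epsilon(T)$.

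Next I would invoke the NRL guarantee from the perspective of player $i$. Viewing the other players' realized actions $\mathbf{a}^{t}_{-i}$ as an adaptive (arbitrary) sequence of loss vectors $u^{t}(\cdot) := u_i(\cdot,\mathbf{a}^{t}_{-i})$, the hypothesis that $i$'s algorithm has average regret at most $\epsilon(T)$ implies, for \emph{every} fixed comparator $a'_i$,
\begin{align*}
\E\!\left[\tfrac{1}{T}\sum_{t=1}^{T} u_i(a'_i,\mathbf{a}^{t}_{-i}) - \tfrac{1}{T}\sum_{t=1}^{T} u_i(a^{t}_i,\mathbf{a}^{t}_{-i})\right] \le \epsilon(T),
\end{align*}
since the best-in-hindsight action can only do at least as well as any fixed $a'_i$. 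The expectation here is over the joint randomness of all players' algorithms, and by the law of total expectation it factors through the per-round distributions $\boldsymbol{\alpha}^t$. Identifying this inequality with the reordered display from the previous paragraph gives precisely
\begin{align*}
\E_{\mathbf{a}\sim\bar{\mathcal{D}}}[u_i(a'_i,\mathbf{a}_{-i})] - \E_{\mathbf{a}\sim\bar{\mathcal{D}}}[u_i(\mathbf{a})] \le \epsilon(T),
\end{align*}
which is the defining inequality of an $\epsilon(T)$-approximate CCE. Since $i$ and $a'_i$ were arbitrary, the result follows.

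The substantive step, and the main conceptual obstacle, is the second one: one must be comfortable applying the NRL guarantee even though the loss sequence faced by player $i$ is generated endogenously by the other players' learning algorithms, which themselves react to $i$'s past bids. The point is that the NRL guarantee is worst-case over loss sequences (indeed oblivious to how they are generated), so correlation with history is irrelevant. A minor secondary care-point is that the claim states expected average regret; if one instead wanted a high-probability CCE from realized play one would need either a high-probability regret bound or a concentration argument, but for the in-expectation statement as written, linearity of expectation together with the tower property suffices and no concentration is needed.
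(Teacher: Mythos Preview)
Your proposal is correct and is precisely the standard argument; the paper itself does not supply a proof but simply notes the claim is ``well-known and almost immediate'' and defers to \cite{roughgarden2016twenty}. Your write-up is more careful than the paper's treatment, including the explicit remark about adaptive adversaries and the distinction between the in-expectation statement and a high-probability variant.
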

Note that if $\mathbf{a} \sim \bar{\mathcal{D}}$, then $\mathbf{a}$ is drawn from the joint distribution over actions profiles at a uniformly selected time period. The proof this claim is well-known and almost immediate; see e.g. \cite{roughgarden2016twenty}. 
% but we include it for completeness:

We can thus use Claim \ref{clm:cce} to obtain a measure of the performance of our mechanisms under a CCE. Our high-level approach will be to simulate a repeated game corresponding to that induced by each of the mechanisms we study, allow agents to learn to bid using NLR algorithms, and estimate each mechanism's equilibrium welfare, revenue, and empirical price of anarchy by sampling from the average empirical joint distribution. The particular NLR algorithm we choose is the Exponential Weights (EW), one of the most well-studied algorithms with well-developed guarantees. Algorithm \ref{alg:ew} provides pseudocode for the basic (EW) algorithm applied to our setting:

\begin{algorithm}[H]
	\caption{Exponential Weights
		\label{alg:ew}}
	%\SetAlgoNoLine
		\KwIn{Learning Rate $\eta$, Bid Space $\mathcal{B} = \{0,\frac{1}{d},\frac{2}{d},...1\}$, Number of Steps $T$, Mechanism $M$}
		
	 $\mathcal{W}(b)^0 \gets \frac{1}{d+1}$ for each $b$ in $\mathcal{B}$.\\
	\For{$t \in 1...T$}{
		 Draw $b^{t} \sim \mathcal{W}^t$.\\
		 Submit $b^t$ to $M$. \\Experience utility $u_i(b^t)$ from mechanism.\\
		 Query $M$ to obtain counterfactual utility $u(b')^t$ for all alternative bids $b'$ in $\mathcal{B}$. \\
		 Update weights using $\mathcal{W}^{t+1}(b) = \exp(\eta u(b)^t) \cdot\mathcal{W}^{t}(b)$ for all $b$\\
		 Renormalize weights.
	}
\end{algorithm}

The choice of the learning rate parameter $\eta$ will affect the performance of the algorithm. A basic analysis of EW is available in many books and lecture notes (e.g. \cite{roughgarden2016twenty}), so we state the following claim without proof:

\begin{proposition}[EW Optimal $\eta$] \label{clm:eweta} Under the EW algorithm, the optimal choice of $\eta$ for a known-horizon setting with $K$ actions is $\eta= \sqrt{\frac{\ln K}{T}}$, resulting in worst-case expected regret $2 \sqrt{\frac{\ln K}{T}}$. Thus to guarantee $\epsilon$ expected regret (and so achieve an $\epsilon$-CCE), we need $T=\frac{4\ln K}{\epsilon^2}$ rounds of learning. 
\end{proposition}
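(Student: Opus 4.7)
The plan is to prove this via the standard potential-function (``weight-sum'') analysis of exponential weights, adapted to our maximization (utility) convention, and then to derive the round complexity by equating average regret with $\epsilon$.

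First I would track $W^t := \sum_{b \in \mathcal{B}} w^t(b)$, where $w^t(b)$ are the unnormalized weights (so that initially $w^1(b) = 1$ and $W^1 = K := |\mathcal{B}|$, ignoring the harmless renormalization step in Algorithm~\ref{alg:ew}). The update rule gives $w^{t+1}(b) = w^t(b) \exp(\eta u^t(b))$, where $u^t(b) \in [0,1]$ since payoffs are normalized. The two halves of the argument are: (i) a lower bound on $\ln(W^{T+1}/W^1)$ obtained by keeping only the term corresponding to the best fixed bid $b^*$, which yields $\ln(W^{T+1}/W^1) \geq \eta \sum_t u^t(b^*) - \ln K$; (ii) an upper bound obtained by expanding $W^{t+1}/W^t = \mathbb{E}_{b \sim \boldsymbol{\alpha}^t}[\exp(\eta u^t(b))]$ and applying the inequality $\exp(\eta x) \leq 1 + \eta x + \eta^2 x^2$ for $\eta x \leq 1$ (valid when $\eta \leq 1$), which combined with $u^t(b) \in [0,1]$ and $\ln(1+y) \leq y$ yields $\ln(W^{t+1}/W^t) \leq \eta \, \bar u^t + \eta^2$, where $\bar u^t := \mathbb{E}_{b \sim \boldsymbol{\alpha}^t}[u^t(b)]$ is the expected utility in round $t$.

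Chaining these two bounds over $t = 1, \ldots, T$ and rearranging gives the expected regret bound
\begin{equation*}
\mathbb{E}[R_T] \;=\; \sum_{t=1}^T u^t(b^*) - \sum_{t=1}^T \bar u^t \;\leq\; \frac{\ln K}{\eta} + \eta T.
\end{equation*}
Optimizing the right-hand side over $\eta > 0$ (simple one-variable calculus) gives $\eta^{\star} = \sqrt{\ln K / T}$ and plugging back yields $\mathbb{E}[R_T] \leq 2\sqrt{T \ln K}$, so the average expected regret is at most $2\sqrt{\ln K / T}$, matching the claim. To obtain the second half of the proposition, I would simply set $2\sqrt{\ln K / T} \leq \epsilon$ and solve, which gives $T \geq 4 \ln K / \epsilon^2$; by Claim~\ref{clm:cce}, this suffices for the average empirical action distribution to be an $\epsilon$-CCE.

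The only subtle step is verifying the quadratic upper bound on $\exp(\eta u^t(b))$: we need $\eta u^t(b) \leq 1$, which holds for any reasonable target $\epsilon$ (in particular, for $\eta = \sqrt{\ln K/T}$ once $T \geq \ln K$). Everything else is mechanical algebra and the textbook telescoping of $\ln W^t$. The hardest conceptual move is really just choosing the right potential function $\ln W^t$, which is already standard; thus the proof is essentially bookkeeping, and the main thing to be careful about is the convention (maximization rather than loss minimization) and that our ``expected utility'' in round $t$ is exactly the expectation of $u^t(b)$ under the EW distribution $\boldsymbol{\alpha}^t$, so that the average-regret guarantee feeds directly into the CCE conversion of Claim~\ref{clm:cce}.
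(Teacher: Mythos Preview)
Your argument is correct and is precisely the standard potential-function analysis of exponential weights. The paper does not actually prove this proposition; it explicitly states the claim without proof, deferring to standard references (e.g., \cite{roughgarden2016twenty}), so your write-up is exactly the argument the paper points to rather than an alternative to it.
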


\paragraph{Protocol for Experiment 2 and 3.} Algorithm \ref{exp:twothree} gives pseudocode for the protocol as a whole. At a high level, the component steps are simply: sampling a valuation for the bidders; running EW for the desired number of learning rounds; after completing learning, re-sample a strategy profile from the bidder strategy distributions from a randomly sampled round and measure the revenue, welfare, and price of anarchy. We run $N_l$ rounds of learning, re-sample $N_t$ times, and repeat this for $N_s$ valuation samples. Using the guarantees of Claim \ref{clm:eweta}, we can decide how stringent we want our CCE to be (i.e. how much ``approximation'' we allow in the ``approximate'' CCE) and then set $N_l$ and $\eta$ accordingly; we then select $N_t$ based on how exact we would like an estimate. We highlight, though, that there is a significant cost to increasing the various parameters of the experiment. For instance, each additional learning step actually requires $d*M+1$ auctions (i.e. the original and all counterfactual bids for all bidders); put together with the $N_t$ samples from the time-averaged distribution,  this implies that each additional valuation sample requires running $N_l(dM+1)+N_t$ full auctions for each mechanism. We are thus constrained in the length of experiments we could run. We provide detailed descriptions and chosen values for these and other parameters in Section \ref{s:app-expdetails}.

\begin{algorithm}
	\caption{Protocol for Experiments 2 and 3
		\label{exp:twothree}}
	\SetKwInOut{Input}{input}\SetKwInOut{Output}{output}
	\Input{Bid dataset $\mathscr{D}$; Bid discretization number $d$; Number of learning steps $N_{l}$, Number of Test steps $N_{t}$, Number of sampling steps $N_s$, Number of slots $S$, Bidder set $\mathcal{I}$, Number of Bidders $M$, EW Learning Rate $\eta$}
	%\begin{algorithmic}[1]
		
		\For{$(\A, \P_{\A}) \in \{\text{[Greedy,Opt]} \times \text{[GSP,VCG]} \}$}{
		\For{$t \in 1,2...,N_s$}{
		 Draw $M$ valuations randomly from $\mathscr{D}$.\\
		  Initialize bidders with values and fresh Exp. Weights($\eta$). Store each in $\mathcal{W}_{is}^t$.\\		 
		\For{$t \in 1,2,...,N_l$}{
		 Initialize and run an $S$-slot auction using $(\A, \P_{\A})$\\
		 Draw bids and fix them.\\
		\For{$i \in \mathcal{I}$}{
		\For{$b' \in \{\frac{1}{d}, \frac{2}{d}, ...,1\}$}{
		 Re-run auction with all other players' bids fixed, but player $i$ using $b'$\\
		 Save payoff \\
		}
	 Update ExpWeights. \\
		}
		}

		\For{$t' \in 1,2,...N_t$}{
		 Draw round number $r_t$ uniformly at random from $\{1,...N_L\}$ \\
		 Draw bid for each bidder from bidder distributions at that round\\
		 Run auction with these bids\\
		 Calculate and save round info   \\
		}
	Store sample results\\
	}	
}
%\end{algorithmic}
\end{algorithm}

\paragraph{Random Valuation Setting.} To study an empirical version of the Bayes-Nash equilibria we analyzed, we use the \emph{population interpretation} of Bayesian games. In this interpretation, we view each valuation as a separate player from a population (which corresponds to the original player); so for instance, rather than having two players valuations distributed uniformly, we would have two populations of player with valuations distributed with equal probability over the discrete uniform distribution. The game then consists of nature first selecting a player from each population and then each player acting as a standard agent (i.e. playing a normal form game rather than a Bayesian game). This interpretation has a long history, but \cite{hartline2015no} show that the utility of NRL algorithms extends to this interpretation. In particular, they show that that if each individual type uses a NRL algorithm, then any convergent subsequence of the sequence of strategy distributions converges almost surely to a (Bayesian) CCE. They also prove that the social welfare of play enjoys the welfare guarantees of stage game if it is smooth via an extension argument. 

We use a similar approach to learn a BCCE in our setting. There are multiple definitions of BCCEs \cite{forges1993five}, but the one used by \cite{hartline2015no} is a natural one and corresponds to a standard CCE over the population of the game. (It is thus quite intuitive that NRL would produce a correlated equilibrium in this setting.) One addition we make is that of a \emph{uniform exploration period}. This period is inspired by \cite{feng2021}, which shows that adding a period of uniformly random exploration (and updating weights based on the exploration) before following the recommendations of the NRL algorithm results in provable convergence to specific, natural equilibria\footnote{For instance, in a first price auction, they recover the classic symmetric profile of each bidder bidding a $1/(N)$ fraction of the value, where $N$ is the number of bidders. Convergence to \emph{specific} equilibria is not generally guaranteed or easy to prove, so this exploration period does add value. }.

\paragraph{Protocol for Experiment 1.} We imagine the two-bidder, two-auction uniform distribution case from Section \ref{s:eq}, but now view the bidders as \emph{populations}. We discretize the uniform distribution over $[0,1]$ into $[0, \frac{1}{V},\frac{2}{V},...1]$; every \emph{population representative} for both populations has one of those valuations. When a round occurs, we randomly select a population representative from each population, and each representative has an equal probability of selection. (Hence, as the discretization becomes increasingly fine, we approach the continuous uniform distribution.) 

If the round $t$ is less than the number of exploration steps $N_e$, selected population representatives choose a random action; otherwise, they sample a bid from their EW distribution. In either case the representatives playing update their EW distributions. 

We run this for a large number of rounds, each round recording the valuations of the selected population representatives and how they bid. We then compute the average bid for each population representative over the non-exploration period and use this as their average bid.

\begin{algorithm}
	\caption{Protocol for Experiment 1
		\label{exp:bayes}}
	\SetKwInOut{Input}{input}\SetKwInOut{Output}{output}
	\Input{Value discretization $V$, Bid discretization number $d$; Number of learning steps $N_{l}$,  Number of slots $S$, Bidder Populations $\mathcal{I}$, Number of Bidders $M$, EW Learning Rate $\eta$}
	%\begin{algorithmic}[1]
	
	\For{$(\A, \P_{\A}) \in \{\text{[Greedy,Opt]} \times \text{[GSP,VCG]} \}$}{
		Initialize set of population representatives  $\mathcal{V}_i:= \{0, \frac{1}{V}, \frac{2}{V}...,1\}$ for each player $i$ \\
		For each valuation, initialize a separate instance of EW with weight $\eta$ \\
		%Initialize bidders with values and fresh Exp. Weights($\eta$). Store each in $\mathcal{W}_{is}^t$.\\		 
			\For{$t \in 1,2,...,N_l$}{
				For each population $i$, draw index $j$ uniformly from  $\{0,1...,V\}$ \\
				
				Initialize and run an $S$-slot auction using $(\A, \P_{\A})$
				with bidder with valuation $\frac{1}{j_i}$ from each population $i$
				 \\
				Draw bids from each bidder's EW distributions as of time $t$ \\
				\For{each drawn bidder}{
					\For{$b' \in \{\frac{1}{d}, \frac{2}{d}, ...,1\}$}{
						Re-run auction with other bidders' bids fixed, but player $i$ using $b'$\\
						Save payoff \\
					}
					Update ExpWeights for bidder\\
				}
			}

	}
		Return dataset of drawn bidder and bids
	%\end{algorithmic}
\end{algorithm}

\subsection{Experiment Details}\label{s:app-expdetails}

Table \ref{table:exp} provides more details on the experimental parameters we chose. Since each sample runs many auctions, each of which is expensive to run, the parameter choices must balance the experiment complexity (i.e. how many bidders, slots, how many bid choices) with the the computational constraints. Before turning to random value setting, we briefly describe each parameter:
\begin{itemize}
\item {\textbf {Equilibrium Concept.}} Whether we are searcing for CCE or BCCE.
\item {\textbf {Bid Data.}} Whether we draw data form the Random Advertisers vs. Random Auction datasets, or simply use the discretized uniform distribution (Synthetic). 
\item {\textbf {Bid Discretization.}} The number (excluding $0$) of evenly-spaced bids into which we divide the bidspace. 
\item{\textbf {Value-dependent Discretization.}} Whether bid options are evenly spaced between $0$ and a bidder's valuation rather than just evenly spaced between $0$ and $1$. 
\item{\textbf {Number of Bidders.}} How many bidders in the experiment. For the case of the Bayesian Setting (Experiment 1), we count each valuation separately. 
\item{\textbf {Number of Slots.}} How many slots are available.
\item{\textbf {Number of Learning Steps.}} How many rounds are used for learning. (In the Experiment 1, this is total, not per bidder-valuation pair.)
\item{\textbf {Number of Exploration Steps.}} How many rounds pre-EW are used for pure random exploration. 
\item{\textbf {Number of Test Steps.}} After the learning phase, how many observations from the average empirical distribution we use to estimate relevant quantities.
\item{\textbf {Allow Overbidding.}} Whether we allow bidders to bid more than their value. (If we do, bid choices are evenly spaced between $0$ and $2$ if bids are not value-dependent and $0$ and twice the valuation if they are.) 
\item{\textbf {Geometric Discount Constant.}} The discount constant, i.e. discount for the first slot for each bidder.
\item{\textbf {Geometric Discount Constants.}} Constant factors in the bidders' discount curves.
\end{itemize}
\footnotesize
\begin{table}[h!]
	\centering
	\begin{tabular}{||c c c c c||} 
		\hline
		Variable & Description & Experiment 1& Experiment 2 & Experiment 3 \\ [0.5ex] 
		\hline\hline
		Eq? &Equilibrium Concept &  BCCE & CCE & CCE \\
		$\mathcal{D}$ & Bid Datasets & Synthetic & Random Advertisers & Random Auction \\	
		$d$ &Bid Discretization & 20 & 20 & 20 \\ 
		$\mathscr{D}(v)$? & Discretization depends on value? & \greencheck & \greencheck & \greencheck \\
		$M$& Number of Bidders &2 &9 &9 \\
		$|\{v_b\}|$ & Number of valuations per bidder & 11 & 1 &1 \\
		$S$ &Number of Slots & 2& 4&4 \\
		$N_s$ &Number of Sampled valuations & NA& 200 &200\\
		$N_l$ &Number of Learning Steps & 500000 &  100&100  \\
		$N_t$& Number of Testing Steps& NA &  200& 200  \\
		$N_e$&Number Exploration Steps &10367    &0  &0  \\
		
		OB? &Allow Overbidding? &  \greencheck & \redx & \redx  \\
		$\mathbf{\delta}_0$ & Geometric Discount Constant & 1 & 1 & 1 \\ 
		$\mathbf{\delta}$ & Geometric Discount Factors  & \tiny$[.37,.85]$ & \tiny $[0.9, 0.9,0.8,0.8,0.7,0.7,0.6,0.6,0.5]$ &  \tiny  $\leftarrow \leftarrow \leftarrow $\\ 		[1ex] 
		\hline
	\end{tabular}
	\caption{Experimental Parameters}
	\label{table:exp}
\end{table}
\normalsize
\section{Detailed Proofs}

\subsection{Proofs from Section \ref{s:poa}}\label{s:app-smoothproofs}

\begin{proof}[Proof of \Cref{lem:partialmon}]
	First consider player $i$. Since $i$ increased his bid between $b$ to $b'$,  he achieves some slot $\sigma'$ at least as high as $\sigma$. 
	
	Now, consider slots above $\sigma'$. By definition, $i$ has not placed an effective bid higher than bidders occupying those slots (or else he would have been placed in that slot or above). So $i$'s deviation leaves  unchanged the bidder allocation and so valuations for those slots. Now, at $\sigma'$, by construction, we must have that
	\begin{align*}
		\delta_{\tau(i)}^{\sigma'} b_{i}' \geq \delta_{\tau(\pi(\sigma', b))}^{\sigma'} b_{\pi(\sigma', b)}
	\end{align*}
	or else $i$ would not have been assigned to $\sigma'$. So the desired inequality holds for this slot. 
	
	Finally, consider each slot $s'$ between $\sigma'$ and $\sigma$. Notice that the set of bidders unallocated when $s'$ is considered under $b'$ has only changed by \emph{losing} $i$ and possibly \emph{gaining} either $\pi(\sigma', b)$ or a displaced previous winners from slots between $\sigma'$ and $\sigma$ due to $\pi(\sigma', b)$ being displaced by $i$ and any cascading effects. But this means that in particular $\pi(s',b)$ remains unallocated when $s'$ is considered. Hence, if $\pi(s',b') \neq \pi(s',b)$, it can only be because the assigned bidder under $b'$ had higher discounted value than the bidder assigned there under $b$. Since this holds for any $s'$ in the range, the claim holds. 
	
\end{proof}

%proof thm: poa greedy vcg
\begin{proof}[Proof of \Cref{thm:lbgreedyvcg}]
	
	Let $v_A=1+\epsilon$, $v_B = 1$, $v_C = 1-\epsilon$. Let ${\boldsymbol {\delta}_A} = (1, 1,1-2\epsilon)$, ${\boldsymbol{\delta}}_B=(1,1,0)$, $\boldsymbol{\Delta}_C=1, \epsilon, \epsilon^2$. 	
	The welfare of $(C,B,A)$ is $3-2\epsilon -2\epsilon^2$, while the welfare of $(A,B,C)$ is $2+\epsilon+\epsilon^2 -\epsilon^3$.
	
	Suppose that each player bids their value, ie:
	\begin{align*}
		b^* = (b_A,b_B,b_C) = (v_A,v_B,v_C) = (1+\epsilon, 1, 1-\epsilon).
	\end{align*}
	We claim this is an equilibrium and results in $(A,B,C)$. The allocation follows since the allocation algorithm is greedy in bids. To see that this is an equilibrium, first consider what values each player is getting: A gets $1+\epsilon $, B gets $1$, C gets $(1-\epsilon) \epsilon^2  = \epsilon^2 - \epsilon^3$. With these, we can calculate what prices each player is paying: Player C pays nothing, since he is imposing no externality on A or B. B is imposing an externality on C - without B, C would get the second slot for a valuation of $\epsilon-\epsilon^2$  and B imposes no externality on A. So B will be charged $\epsilon-\epsilon^2$. Finally, A imposes the same externality on C (because without A, B would get the first slot, so C would get the second slot) and imposes no externality on $B$. 
	
	So the payoffs are:
	\begin{align*}
		\pi_A(b^*) &= 1+\epsilon - \epsilon +\epsilon^2 = 1+\epsilon^2 \\
		\pi_B(b^*) &= 1 -\epsilon+\epsilon^2\\
		\pi_C(b^*) &= \epsilon^2
	\end{align*}
	Notice that these are always positive. (The only one that could possibly be negative would be $\pi_B$, but if $\epsilon <1$, then $1-\epsilon >0 \implies \pi_B>0$; if $\epsilon>1$, then $\epsilon^2-\epsilon >0 \implies \pi_B>0$.)
	
	Now we consider possible deviations. Start with A. While there are an uncountable number of deviations in bid space, they are all equivalent but for their effects on A's position and price. So notice that if $A$ were to move to second position by bidding $b_A'$ less than $b_B$ but more than $b_C$,  it would receive the same payoff, because its discount rate is $1$ and it imposes the same externality as before, so no such bid could improve A's payoff. If A were to bid $b_A'$ less than $b_C$, it could get the third slot at a price of 0, but it would only get $1-2\epsilon < 1+\epsilon^2 = \pi_A(b^*)$. So A has no profitable deviations. For B, improving his position cannot improve his payoff or change his externality, and moving to slot 3 would result in $0$ payoff, while he currently makes positive profit. For Player C, notice that first of all, if we rule out overbidding, Player C cannot improve his position; but suppose we do not rule this out. By moving to Slot 2 (by bidding, say, $b_C= 1+\epsilon/2$) C would exert an externality of $1$ on Player $B$ and so get negative payoff ($1-\epsilon -1 = -\epsilon$). By moving to Slot 1 (by bidding $b_C \geq 1+\epsilon$) C would exert the same externality on $B$ and so again receive negative payoff.
	
	Hence, $b^*$ is an equilbrium. But then we have that:
	\begin{align*}
		\frac{EQ}{OPT} = \frac{2+\epsilon+\epsilon^2 -\epsilon^3}{3-2\epsilon - 2\epsilon^2}
	\end{align*}
	which comes arbitrarily close to $2/3$ for small enough $\epsilon$.
\end{proof}

%%%

We now turn to proving \Cref{thm:optgspexample}. We will proceed in several steps:first we characterize what must hold in equilibrium. Then we provide examples that meet this. Finally, we optimize this bound. We will break this up into several propositions before the main proof.

\begin{proposition} \label{prop:optgsp-poa-eq}
	Let A have discount curve $(1,\delta_A)$, and B have discount curve $(1,\delta_B)$, with $\delta_A< \delta_B$ (so $\Delta : = \frac{1-\delta_B}{1-\delta_A} <1$). Now suppose that $\Delta^2 v_B \leq v_A \leq \Delta v_B \leq \frac{v_A}{\Delta} \leq v_B$\footnote{As is always nice to check, we are not reasoning about an empty set. Consider $v_B=1$, $v_A=\frac{1}{2}$, $\delta_A = \frac{1}{2}$, $\delta_B= \frac{2}{3}$.}.  Then the following strategy profile is an equilibrium:
	\begin{align*}
		\mathbf{b}^* = \left(\Delta(1-\delta_B)v_B + \epsilon, \Delta(1-\delta_B) v_B \right)
	\end{align*}
	for any $\epsilon>0$, and for small enough $\epsilon$ neither bidder is overbidding. The auctioneer then selects $(A,B)$, but $(B,A)$ would be optimal. 
\end{proposition}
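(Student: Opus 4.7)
The plan is to verify, in turn, each of the equilibrium conditions and then the suboptimality and conservatism claims. All of the computations are mechanical once the pricing rule is applied, so the bulk of the work is bookkeeping.

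First, I would compute the allocation and prices under $\mathbf{b}^*$. From the decision rule stated just before Theorem \ref{thm:eqoptgsp}, $(A,B)$ is selected iff $b_A \geq \Delta b_B$; substituting the proposed bids gives $b_A - \Delta b_B = \Delta(1-\Delta)(1-\delta_B) v_B + \epsilon > 0$, so $(A,B)$ is chosen. Applying GSP, $A$'s price is $\Delta b_B = \Delta^{2}(1-\delta_B) v_B$ (the minimum bid at which $A$ still wins the top slot), and $B$'s price is $0$ (any nonnegative bid below $b_A/\Delta$ retains the bottom slot). Thus $u_A = v_A - \Delta^{2}(1-\delta_B) v_B$ and $u_B = \delta_B v_B$.

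Second, I would verify that neither player benefits from a unilateral deviation. For $A$, raising the bid changes nothing. Lowering it past $\Delta b_B$ flips the allocation to $(B,A)$, so $A$ receives slot $2$ at price $0$, obtaining $\delta_A v_A$. Non-deviation reduces to $(1-\delta_A) v_A \geq \Delta^{2}(1-\delta_B) v_B$, equivalently $v_A \geq \Delta^{3} v_B$, which follows from the hypothesis $v_A \geq \Delta^{2} v_B$ and $\Delta \leq 1$. For $B$, lowering the bid preserves the outcome. Raising it above $b_A/\Delta$ makes $B$ win slot $1$ at price $b_A/\Delta = (1-\delta_B) v_B + \epsilon/\Delta$, yielding utility $\delta_B v_B - \epsilon/\Delta < \delta_B v_B$; hence $B$ cannot profitably deviate either, for any $\epsilon > 0$.

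Third, I would confirm the selected allocation is not welfare-optimal: $(B,A)$ beats $(A,B)$ iff $(1-\delta_B) v_B \geq (1-\delta_A) v_A$, i.e., $v_A \leq \Delta v_B$, which is exactly the hypothesis. Finally, for conservatism, $b_B = \Delta(1-\delta_B) v_B \leq v_B$ is immediate since $\Delta \leq 1$. For $A$, $b_A \leq v_A$ becomes $\epsilon \leq v_A - \Delta(1-\delta_B) v_B$; since $\delta_A \geq 0$ implies $\Delta \geq 1-\delta_B$, we have $v_A \geq \Delta^{2} v_B \geq \Delta(1-\delta_B) v_B$, so the right-hand side is nonnegative and the inequality holds for all sufficiently small $\epsilon > 0$.

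The main obstacle is getting the GSP semantics right, particularly pinning down $B$'s winning threshold $b_A/\Delta$ and exploiting the $+\epsilon$ offset to break the tie at $b_A = \Delta b_B$ cleanly in favor of $(A,B)$. Everything else reduces to checking that the hypothesis $\Delta^{2} v_B \leq v_A \leq \Delta v_B$ is tight on both sides: the lower bound $v_A \geq \Delta^{2} v_B$ rules out $A$'s downward deviation, while the upper bound $v_A \leq \Delta v_B$ is precisely what makes $(B,A)$ the welfare-optimal allocation, so both directions of the inequality are needed for the example to do its job.
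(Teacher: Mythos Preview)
Your proof is correct and follows essentially the same approach as the paper: compute the allocation and GSP prices under $\mathbf{b}^*$, check that $A$'s only relevant deviation (dropping below $\Delta b_B$) is unprofitable via $v_A \geq \Delta^2 v_B$, that $B$'s only relevant deviation (rising above $b_A/\Delta$) loses $\epsilon/\Delta$, and then verify suboptimality via $v_A \leq \Delta v_B$ and conservatism via $\Delta \geq 1-\delta_B$. Your reduction of $A$'s no-deviation condition directly to $v_A \geq \Delta^3 v_B$ is in fact a bit cleaner than the paper's intermediate bound through $\delta_B v_A$, but the structure and ingredients are the same.
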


\longversion{
	Before we prove that this claim, we first show that we are not reasoning about an empty set. Consider $v_B=1$, $v_A=\frac{1}{2}$, $\Delta = \frac{2}{3}$ (which, for example, can be obtained by $\delta_A = \frac{1}{2} < \frac{2}{3} = \delta_B$). Then $\Delta^2 v_B$ = $\frac{4}{9} < \frac{1}{2}=v_A$, so the first inequality holds. $v_A = \frac{1}{2} \leq \frac{2}{3} = \Delta v_B$, so the second inequality holds. $\Delta v_B = \frac{2}{3} \leq \frac{3}{4}=\frac{1/2}{2/3}= \frac{v_A}{\Delta}$, so the third inequality holds, and $\frac{v_A}{\Delta} =\frac{3}{4} < 1=v_B$ so the final inequality holds. 
	
	Notice that under this particular example, if the auctioneer selects $(A,B)$ as claimed (and the bids truly form an equilibrium), we get a competitive ratio of:
	\begin{align*}
		\frac{EQ}{OPT} = \frac{v_A + \delta_B v_B }{v_B + \delta_A v_A} = \frac{1/2 + 2/3}{1+1/2*1/2} = \frac{7/6}{5/4} = \frac{28}{30}.
	\end{align*}
	So we will proceed to prove the claim, and then optimize the ratio. 
}

\begin{proof}[Proof of  \Cref{prop:optgsp-poa-eq}]
	The auctioneer selects $(A,B)$ whenever 
	\begin{align*}
		b_A + \delta_B b_B \geq b_B + \delta_A b_A \iff b_A \geq \Delta b_B.
	\end{align*} But
	\begin{align*}
		b_a =  \Delta (1-\delta_B) v_B +  \epsilon \geq  \Delta^2 (1-\delta_B) v_B =\Delta b_B
	\end{align*}
	where the inequality follows from the fact that $\delta_A < \delta_B \implies \Delta <1$. So the outcome is that A gets the top slot; since A will win as long as $b_A \geq \Delta b_B$, A will be charged $\Delta b_B$. B will receive the second slot, and be charged nothing. On the other hand,  we note that (B,A) is optimal iff:
	\begin{align*}
		v_A + \delta_B v_B \leq v_B + \delta_A v_A  \iff v_A \geq \Delta v_B.
	\end{align*}
	This holds by assumption, so (B,A) is in fact the optimal allocation. 
	
	Now we consider possible deviations from the bid profile. For $A$, bidding higher does not change the allocation nor the payment, and bidding lower than its bid but more than $b_B$ also does not affect the allocation or the payment, so the only deviation to consider is bidding less than $b_B$. If it does this, it will change the allocation to $(B,A)$ and get $\delta_A v_A$ while paying nothing, but:
	\begin{align*}
		v_A - \Delta b_B& = v_A - \Delta^2(1-\delta_B) v_B \geq v_A - v_A (1-\delta_B) \\&= v_A (1- (1-\delta_B)) = \delta_B v_A > \delta_A v_A
	\end{align*} 
	where the first equality follows by the pricing rule and strategy profile, the first inequality follows from the fact that $v_A \geq \delta^2 v_B \implies -\Delta^2 v_B \geq -v_A$), and the final inequality by assumption. So deviating to be assigned  the second slot would not be profitable for A. 
	
	Now consider B. Again, the only deviations that we must consider are those which change the allocation to $(B,A)$. But if B were to deviate to such a bid, he would be charged $b_A/ \Delta$. But we have that:
	
	\begin{align*}
		\frac{b_A}{\Delta} &= \frac{\Delta(1-\delta_B) v_B +\epsilon}{\Delta} = (1-\delta_B) v_B + \frac{\epsilon}{\Delta} \\&\implies v_B - \frac{b_A}{\Delta} = v_B - (1-\delta_B) v_B - \frac{\epsilon}{\Delta} = \delta_B v_B -\frac{\epsilon}{\Delta} < \delta_B v_B
	\end{align*}
	so this deviation would not be profitable for B. 
	
	Now, note that B is trivially not overbidding since $\Delta, 1-\delta_B <1$. To show that there exists a small enough $\epsilon$ so that $A$ is not overbidding, note that we need:
	\begin{align*}
		v_A - \Delta(1-\delta_B) v_B - \epsilon \geq 0
	\end{align*}
	so it is enough that $v_A -\Delta(1-\delta_B) >0$. But:
	\begin{align*}
		\Delta = \frac{1-\delta_B}{1-\delta_A} > 1- \delta_B &\implies -\Delta \leq -(1-\delta_B) \\&\implies - \Delta^2 < - \Delta(1-\delta_B)\\&\implies -v_B \Delta^2 < -v_B \Delta(1-\delta_B)
	\end{align*}
	But then 
	\begin{align*}
		v_A - v_B \Delta (1-\delta_B) > v_A - \Delta^2 v_B \geq 0
	\end{align*}
	as desired, where the last inequality follows by assumption. 
	Thus, we have shown that this bid profile is an equilibrium that achieves suboptimal welfare.
\end{proof}

Now we turn to optimizing this bound. 

\begin{proposition}\label{prop:optimizing}
	There exists a choice $v_A, v_B$, $\delta_A < \delta_B$, such that the bid profile above is an equilibrium and obtains welfare arbitrarily close\footnote{We leave it here to avoid tie-breaking issues.} to 3/4 of the optimal welfare. This implies that the Price of Anarchy is at least $4/3$. 
\end{proposition}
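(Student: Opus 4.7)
The plan is to reduce the problem to an explicit two-variable optimization, then minimize the ratio of equilibrium to optimal welfare analytically. By \Cref{prop:optgsp-poa-eq}, for any parameters in the feasible region the chosen bid profile is an equilibrium in which the auctioneer selects $(A,B)$, giving apparent welfare $v_A + \delta_B v_B$, while the optimal allocation $(B,A)$ has welfare $v_B + \delta_A v_A$. So the competitive ratio (equilibrium divided by optimum) is
\begin{align*}
\rho(v_A, v_B, \delta_A, \delta_B) \;=\; \frac{v_A + \delta_B v_B}{v_B + \delta_A v_A},
\end{align*}
and I would minimize this subject to $0 \leq \delta_A < \delta_B < 1$ and $\Delta^2 v_B \leq v_A \leq \Delta v_B$, where $\Delta = (1-\delta_B)/(1-\delta_A)$. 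Using the homogeneity of $\rho$ in $(v_A,v_B)$, I would normalize $v_B = 1$, so the feasible region becomes $\Delta^2 \leq v_A \leq \Delta$.

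The next step would be to minimize $\rho$ over $v_A$ with the discount factors held fixed. Differentiating gives $\partial \rho / \partial v_A = (1 - \delta_A \delta_B)/(1 + \delta_A v_A)^2 > 0$, so $\rho$ is strictly increasing in $v_A$ and the minimum is attained at the lower endpoint $v_A = \Delta^2$. Substituting, the remaining problem is to minimize
\begin{align*}
g(\delta_A, \delta_B) \;=\; \frac{\Delta^2 + \delta_B}{1 + \delta_A \Delta^2}
\end{align*}
over $0 \leq \delta_A < \delta_B < 1$. Letting $\delta_A \to 0^+$ sends the denominator to $1$ and $\Delta \to 1 - \delta_B$, so $g$ approaches $h(\delta_B) := (1-\delta_B)^2 + \delta_B = \delta_B^2 - \delta_B + 1$. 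Elementary calculus gives $h'(\delta_B) = 2\delta_B - 1$, hence a unique minimum at $\delta_B = 1/2$ with value $h(1/2) = 3/4$. Thus, for any $\epsilon > 0$, taking $\delta_B = 1/2$, $\delta_A$ sufficiently small, $v_B = 1$, and $v_A = \Delta^2 v_B$ yields $\rho < 3/4 + \epsilon$, which gives a PoA lower bound approaching $4/3$.

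The main obstacle I anticipate is that at the limiting optimizer $(\delta_A, \delta_B, v_A, v_B) = (0, 1/2, 1/4, 1)$, the inequality chain $\Delta^2 v_B \leq v_A \leq \Delta v_B \leq v_A/\Delta \leq v_B$ required by \Cref{prop:optgsp-poa-eq} holds only with equalities in the middle, so strictly speaking the equilibrium construction there might not apply without perturbation. I would handle this by parameterizing with small $\eta_1, \eta_2 > 0$, setting $\delta_A = \eta_1$ and $v_A = \Delta^2 v_B + \eta_2$, and verifying by continuity that: (i) all four inequalities in the chain become strict for small positive $\eta_1, \eta_2$; (ii) the conservativity condition on $A$ in \Cref{prop:optgsp-poa-eq} still holds; and (iii) the ratio $\rho$ is within $\epsilon$ of $3/4$. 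These are routine continuity checks. Combining them with the equilibrium guarantee yields a valid conservative equilibrium whose welfare ratio approaches $3/4$, establishing $\mathrm{PoA} \geq 4/3$.
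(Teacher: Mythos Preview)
Your argument is correct and lands on the same limiting parameter choice as the paper: $v_B=1$, $\delta_B=\tfrac{1}{2}$, $\delta_A\to 0$, $v_A=\Delta^2$. The route differs slightly in style. The paper simply posits $v_B=1$, $\delta_B=\tfrac{1}{2}$, $v_A=\tfrac{1}{4(1-\delta_A)^2}=\Delta^2$ with $\delta_A$ free, checks the hypothesis chain of \Cref{prop:optgsp-poa-eq}, writes the ratio $\tfrac{2(1-\delta_A)^2+1}{\delta_A+4(1-\delta_A)^2}$, and invokes continuity as $\delta_A\to 0$ to conclude it passes through values arbitrarily close to $3/4$. You instead \emph{derive} these parameters by a two-stage optimization (monotonicity in $v_A$ forces $v_A=\Delta^2$; then sending $\delta_A\to 0$ and minimizing the resulting quadratic in $\delta_B$ yields $\delta_B=\tfrac{1}{2}$). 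Your approach has the advantage of explaining why these parameters are the right ones and simultaneously showing $3/4$ is the best ratio attainable from \Cref{prop:optgsp-poa-eq}; the paper's approach is shorter once the parameters are known. Your perturbation to handle the boundary case where the first and third inequalities in the chain hold with equality is appropriate and matches the spirit of the footnote in the statement; the paper sidesteps this by keeping $\delta_A>0$ strictly and noting the inequalities in \Cref{prop:optgsp-poa-eq} are non-strict.
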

{
	\begin{proof}
		Let $v_B = 1$, and let $\delta_B=\frac{1}{2}$. We won't fix $\delta_A$, but rather we will assume that $\delta_A< \delta_B =\frac{1}{2}$ let it approach $0$. We also set $v_A$ as a function of $\delta_A$: $v_A = \frac{1}{4(1-\delta_A)^2}$.
		
		Now notice that $r = \frac{v_A}{v_B} = \frac{1}{4(1-\delta_A)^2} = \frac{(1/2)^2}{(1-\delta_A)^2} = \frac{(1-\delta_B)^2}{(1-\delta_A)^2}=\Delta^2$. Then
		\begin{align*}
			\Delta^2 \leq r \leq \Delta \leq r/\Delta \leq 1\text{ and } \Delta^2 v_B \leq v_A \leq \Delta v_B \leq \frac{v_A}{\Delta} \leq v_B.
		\end{align*}
		Hence, the hypotheses of  \Cref{prop:optgsp-poa-eq} are satisfied, so the equilibrium described is an equilibrium. Then the competitive ratio is given by:
		\begin{align*}
			\frac{EQ}{OPT} = \frac{\frac{1}{2} + \frac{1}{4(1-\delta_A)^2}}{1 + \frac{\delta_A}{4(1-\delta_A)^2}} = \frac {2(1-\delta_A)^2 + 1}{\delta_A + 4(1-\delta_A)^2}
		\end{align*}
		Notice that at $\delta_A =0$, this quantity is $\frac{3}{4}$, and is $1$ at $\frac{1}{2}$. But notice also that the denominator, viewed independently, is a quadratic function with only complex roots. Thus, the fraction is continuous. Since it varies continuously from $1$ to $3/4$, it must pass through every point arbitrarily close to $3/4$ from the right. 
		Hence, we can achieve competitive ratio arbitrarily close 3/4, so the Price of Anarchy is at least 4/3. 
		
	\end{proof}

\begin{proof}[Proof of \Cref{thm:optgspexample}]
	Combining \Cref{prop:optgsp-poa-eq} and \Cref{prop:optimizing} yields the claim.
\end{proof}

\subsection{Proofs from Section \ref{s:eq}}

%%% opt gsp equilibrium
\begin{proof} [Proof of \Cref{thm:eqoptgsp}]
	We follow the same structure as the proof of Theorem \ref{thm:eqggsp}. Consider the problem from A's perspective, and suppose that B is using a linear strategy $\beta v_B$. (In the theorem statement, $\beta = 1-\delta_B$, but as in Theorem \ref{thm:eqggsp}, A's best-response will not depend on $\beta$ being $(1-\delta_B)$, so we leave it free.) Now, note that the winning condition is that:
	\begin{align*}
		(A,B) \succeq (B,A) \iff b_A \geq \Delta b_B = \Delta \beta v_B
	\end{align*}

	Suppose that $A$ wins the top slot with a bid $b_A$. For now, suppose that $b_A$ is less than $\beta \Delta$. Then the expected payment is:
	\begin{align*}
		\E[\Delta b_B| \Delta b_B \leq b_A] &= \E[ \Delta \beta v_B| v_B \leq \frac{b_A}{\Delta \beta}] \\
		&= \beta \Delta \E[v_B|v_B \leq \frac{b_A}{\Delta \beta}] = \beta \Delta \frac{b_A}{2 \Delta \beta} = \frac{b_A}{2}
	\end{align*}
	where the second inequality follows from the properties of the uniform distribution. 
	Then by choosing any $b_A$, A gets the expected payoff:
	\begin{align*}
		\E[u_A(b_A|v_A)] &=\left (v_A- \frac{b_A}{2}\right)\Pr[b_B \leq b_A] + \delta_A v_A (1-\Pr[b_B \leq b_A])
		\\ &= \left(v_A- \frac{b_A}{2}\right) \frac{b_A}{\beta \Delta } + \delta_A v_A\left(1- \frac{b_A}{\beta\Delta}\right) \\&= \frac{v_A b_A}{\beta \Delta} - \frac{b_A^2}{2 \beta \Delta} + \delta_A v_A - \frac{b_A\delta_A v_A}{\beta \Delta}
	\end{align*}
	
	Taking the derivative, the first order conditions requires that
	\begin{align*}
		\frac{v_A}{\beta\Delta} - \frac{b_A}{\beta \Delta} - \frac{b_A \delta_A v_A}{\beta \Delta} = 0 \iff b_A = (1-\delta_A)v_A.
	\end{align*}
 As the second derivative is negative, this is a maximum. 
	
	On the other hand, if $b_A > \beta \Delta$, $A$ wins with probability $1$ and pays $\E[\Delta \beta v_B] = \frac{\beta \Delta }{2}$, getting total payoff $v_A - \frac{\beta \Delta}{2}$. Again, notice that this is the same as the value taken on by the other expression above if $b_A = \beta \Delta$, and bidding any $b_A > \beta \Delta$ results in the same payoff as bidding $\beta \Delta$. So as before, $A$ need only consider maximizing his utility over $b_A \in [0, \beta \Delta]$; the maximum can be either at $0$, $\beta \Delta$, or the critical point (which is a maximum), or the endpoints. But the payoff is increasing for all point left of the critical point and decreasing for all points right of it; hence, as before, if the critical point is left of $\beta \Delta$, it is an interior maximum, and if it after $\beta \Delta$, then it is just as good as bidding $\beta \Delta$. 
	
	Hence, $b_A=(1-\delta_A) v_A$ is a best response, and the linear portion is unique whenever $b_A = (1-\delta_A) v_A \leq \beta \Delta \implies v_A \leq \frac{\beta \Delta}{1-\delta_A}$.  That is, the linear coefficient of $(1-\delta_A)$ is unique for all $v_A \in [0,  \frac{\beta \Delta}{1-\delta_A}]$, but any bid of at least $\beta \Delta$ is a best-response for $v_A > \frac{\beta \Delta }{1-\delta_A}$. And notice that nothing about this depended on $\beta$; but if $\beta = (1-\delta_B)$, then the kink in bidding occurs at $v_A=\Delta^2$. 

	Again, viewing this from B's perspective will give the same set of computations, mutatis mutandum, so we conclude that $\left((1-\delta_A)v_A, (1-\delta_B)v_B\right)$ is an equilibrium. A similar uniqueness argument holds for B's strategy as well. Thus, this equilibrium is the unique linear equilibrium.  
\end{proof}

\begin{proof}[Proof of \Cref{prop:optgsprev}]
	Note that A wins iff $b_A > \Delta b_B $. Since $b_B = (1-\delta_B) v_B$ and $b_A = (1-\delta_A) v_A$, A wins whenever
	\begin{align*}
		v_A \geq v_B\Delta^2 .
	\end{align*}
	Now, if A wins, she pays the minimum price $p$ such that $p \geq \Delta b_B$, which is just $\Delta b_B = v_B \frac{(1-\delta_B)^2}{1-\delta_A}$. Similarly, if B wins, he pays the minimum price $p$ such that $p \geq \frac{b_A}{\Delta}= \frac{(1-\delta_A)^2}{1-\delta_B} v_A$. So, we can write $R(v_A,v_B)$ as:

	%	Then:
	\begin{align*}
		R(v_A,v_B) 
		= \begin{cases} (1-\delta_B)\Delta v_B & v_A \geq  \Delta^2 v_B \\ \frac{(1-\delta_A)}{ \Delta}  v_A & v_A \leq  \Delta^2 v_B 
		\end{cases}
	\end{align*}
	Now we can calculate the expected revenue by again writing it as a piecewise integral:
	\begin{align}
		\begin{split}\label{eq:revoptgs}
		\E[R(v_A,v_B)] &= 
		\int_0^1 \int_0^{\Delta^2 v_B} \frac{1-\delta_A}{\Delta} v_A dv_A dv_B + \int_0^1 \int_{\Delta^2 v_B} ^1 (1-\delta_B) \Delta v_B dv_A dv_B
		\end{split}\\ \nonumber
		&= \frac{1-\delta_A}{\Delta} \int_0^1 \frac{v_A^2}{2} \biggr|_{0}^{\Delta^2v_B} dv_B + (1-\delta_B)\Delta \int_0^1  v_B v_A \biggr|_{\Delta^2 v_B}^1 dv_B \\ \nonumber
		& = \frac{1-\delta_A}{\Delta} \int_0^1 \frac{\Delta^4 v_B^2}{2} dv_B + (1-\delta_B) \Delta \int_0^1 v_B - \Delta^2 v_B^2 dv_B\\ \nonumber
		& = \frac{1-\delta_A}{\Delta} \frac{\Delta^4}{2} \frac{v_B^3}{3} \biggr|_{0}^{1} + (1-\delta_B) \Delta \left[\frac{v_B^2}{2} -\frac{\Delta^2 v_B^3}{3}\right]\biggr|_0^1\\ \nonumber
		& = \frac{\Delta^3(1-\delta_A)}{6} + (1-\delta_B) \Delta \left[\frac{1}{2} - \frac{\Delta^2}{3}\right] \\ \nonumber
		&= \frac{\Delta^3(1-\delta_A)}{6} + \frac{\Delta(1-\delta_B) }{6}\left(3 - 2 \Delta^2\right)
	\end{align}
	as claimed.
\end{proof}

%%% proof: greedy gsp eq
\begin{proof}[Proof of Theorem \ref{thm:eqgreedygsp}]
	Suppose B bids with $b_B = \Delta v_B$.  Then since $b_B \leq \Delta$, any bid A makes above $\Delta$ will be equivalent in that she will certainly win and pay the same price. Thus we can write A's win probability and expected payment given winning as:
	\begin{align*}
		\Pr[b_B \leq b_A] = \begin{cases}
			\frac{b_A}{\Delta} & b_A \leq \Delta \\
			1 & b_A > \Delta
		\end{cases} \text{ and } \E[v_B| b_B \leq b_A] = \begin{cases}
			\frac{b_A}{2\Delta} & b_A \leq \Delta \\ 1 & b_A > \Delta
		\end{cases}
	\end{align*}
	Hence A's payoff is:
	\begin{align*}
		u_i(b_A) &= \begin{cases}
			\left(v_A - (1-\delta_B)  \Delta\frac{b_A}{2\Delta}\right) \frac{b_A}{\Delta} + \delta_A v_A(1-\frac{b_A}{\Delta}) & b_A \leq \Delta \\
			v_A - \frac{(1-\delta_B) \Delta}{2}\end{cases}\\
		&= \begin{cases}
			\left(v_A - \frac{1-\delta_B}{2} b_A\right) \frac{b_A}\Delta + \delta_A v_A (1-\frac{b_A}{\Delta}) & b_A \leq \Delta \\
			v_A - \frac{\Delta(1-\delta_B)}{2} & b_A > \Delta 
		\end{cases}
	\end{align*}
	Notice that at $b_A = \Delta$, these values coincide; beyond $\Delta$, any value that A bids results in the same payoff. So, this payoff function is a sort of capped quadratic in $b_A$ with the kink at $\Delta$. Thus, to find the optimal bid, A need only compare any inner critical point with the end point (which it would even in the absence of such a kink given it were maximizing over a closed set). 
	
	On the interior section, A's first order condition is:
	\begin{align*}
		\frac{v_A}{\Delta} - \frac{(1-\delta_B)b_A}{\Delta} -\frac{\delta_A v_A}{ \Delta} = 0& \implies
		(1-\delta_B) b_A = v_A(1-\delta_A) \\ &\implies b_A = v_A  \frac{1-\delta_A}{1-\delta_B} = \frac{v_A}{\Delta}.
	\end{align*}
	As usual, concavity gives that this is a local maximum. 
	
	But now notice that $u_i(b_A)$ is continuous up until $b_A = \Delta$, where it coincides with the next piece. Moreover, it is concave (strictly, on $[0, \Delta]$); hence, if a local maximum is reached, it must be a maximum over the interval $[0,\Delta]$, \emph{including} the point at $\Delta$. 
	
	So, whenever $\frac{v_A}{\Delta} \leq \Delta \iff v_A \leq \Delta^2$, it is immediate that A can do no better than bidding $b_A = v_A/\Delta$. On the other hand, if $v_A \geq \Delta^2$, then $\frac{v_A}{\Delta} \geq \Delta$. But above $\Delta$, increasing the bid does not improve A's payoff, and so the choice of $v_A/\Delta$ prescribes a bid higher than necessary - bidding $\Delta$ would suffice. However, it also does not hurt A's payoff.
	
	Thus, bidding $b_A = \frac{v_A}{\Delta}$ is always a best-response to B bidding $b_B = \Delta v_B$ (though it is not a unique best-response). 
	
	Now we do a similar calculation from B's perspective, supposing that $b_A = \frac{v_A}{\Delta}$. B wins if $b_A \leq b_B$ and pays $(1-\delta_A) b_A$. Again, we shall consider for the possibility of overbidding, and write the win probability and expected payoff that B will receive for any bid as:
	\begin{align*}
		\Pr[b_A \leq b_B]=\begin{cases} \Delta b_B & b_B \leq \frac{1}{\Delta} \\1 & b_B \geq \frac{1}{\Delta} \end{cases} \text{ and } \E[v_A |b_A \leq b_B] = \begin{cases} \frac{b_B \Delta}{2} & b_B \leq \frac{1}{\Delta} \\ \frac{1}{2} & b_B \geq \frac{1}{\Delta}\end{cases}
	\end{align*}
	
	Then we have that
	\begin{align*}
		u_B(b_B) &= \begin{cases}
			\left(v_B - (1-\delta_A) \frac{1}{\Delta} \frac{b_B \Delta}{2}\right) \Delta b_B + \delta_B v_B(1-\Delta b_B) & b_B \leq \frac{1}{\Delta}\\
			v_B - \frac{1-\delta_A}{2\Delta} 
		\end{cases}\\&= \begin{cases}
			\left(v_B - (1-\delta_A)\frac{b_B}{2}\right) \Delta b_B + \delta_B v_B - \Delta \delta_B v_B b_B& b_B \leq \frac{1}{\Delta}\\
			v_B - \frac{1-\delta_A}{2\Delta} & b_B \geq \frac{1}{\Delta}
		\end{cases}
	\end{align*}

	Again, notice that they coincide at $b_B = \frac{1}{\Delta}$, and increasing $b_B$ beyond $\frac{1}{\Delta}$ does not improve B's payoff. The first order condition on the interior part of the curve is:
	\begin{align*}
		\Delta v_B - (1-\delta_A \Delta) b_B - \Delta\delta_B v_B  = 0 &\implies \Delta b_B (1-\delta_A)= \Delta v_B - \Delta \delta_B v_B\\& \implies b_B = \Delta v_B.
	\end{align*}
	Again, $u_B(b_B)$ is strictly concave over $[0, \frac{1}{\Delta}]$, so this is a maximizer, and like $u_A$, $u_B$ is continuous with two pieces, and the strict concavity and cap guarantees that bidding $\Delta v_B$ gives at least as high payoff of bidding $\frac{1}{\Delta}$ or more. (Notice also that since $\Delta \leq 1$, the bidding strategy $b_B = \Delta v_B$ will never prescribe overbidding because $\Delta v_B \leq \frac{1}{\Delta}$.)
	
	Thus, $b_B = \Delta v_B$ is a best response to $b_A = v_A/\Delta$, and hence the pair is a Bayes-Nash equilibrium. 
\end{proof}

%%% proof rev equiv

\begin{proof}[Proof of \Cref{thm:revgreedyvcg}]
	In the equilibrium described, we have that
	\begin{align*}
		\text{A wins} \iff  b_A \geq b_B \iff v_A \frac{1-\delta_A}{1-\delta_B} \geq v_B\frac{1-\delta_B}{1-\delta_A} \iff v_A \geq v_B \Delta^2.
	\end{align*}
	If $A$ wins, she pays $(1-\delta_B) b_B=(1-\delta_B) \Delta v_B$. If $B$ wins, he pays $(1-\delta_A)b_A={(1-\delta_A) \over \Delta} v_A$. 
	So revenue is given by:
		\begin{align*}
		\E[R(v_A,v_B)] = &\int_0^1 \int_0^{v_B\Delta^2} \frac{(1-\delta_A)}{\Delta} v_A dv_A dv_B + \int_0^1 \int_{\Delta^2 v_B}^1 (1-\delta_B) \Delta v_B dv_A dv_B \\
	\end{align*}

But notice that this is exactly the same equilibrium described in \Cref{eq:revoptgs} in  \Cref{prop:optgsprev}, and thus the calculation follows exactly the same way. 
\end{proof}

\begin{proof}[Proof of \Cref{thm:optvcgrev}] In this equilibrium and mechanism, A wins iff $v_A \geq \Delta v_B$, and pays $(1-\delta_B) v_B$. Otherwise, B wins and pays $(1-\delta_A) v_A$. Hence revenue is:
	\begin{align*}
		\E[R(v_A,v_B)]&= \int_{0}^{1} \int_{0}^{\Delta v_B} (1-\delta_A) v_A dv_A dv_B + \int_0^1 \int_{\Delta v_B}^{1} (1-\delta_B) v_B dv_A dv_B
	\end{align*}
But again, we notice that this is exactly the same equation as \Cref{eq:revoptgs} in \Cref{prop:gspgreedyrev}, so again, the calculation follows in exactly the same way.
\end{proof}

\begin{proof}[Proof of Theorem \ref{thm:eqrev}]
	The two equalities follow by inspection, so we only need to prove the inequality. 
	\begin{align*}
		R_{\greedy}^{*\gsp} - R_{\greedy}^{*\vcg} =& \frac{1-\delta_A}{6}\Delta^2 + \frac{1-\delta_B}{6} (3-2\Delta) \\&-\frac{1-\delta_A}{6} \Delta^3 - \frac{1-\delta_B}{6} \Delta(3-2\Delta^2)\\=&
		\frac{1-\delta_A}{6} (\Delta^2-\Delta^3) + \frac{1-\delta_B}{6} \left(3-2\Delta - 3\Delta + 2\Delta^3\right)
	\end{align*}
	Expanding :
	\begin{align*}
		&=\frac{1-\delta_A}{6} \Delta^2 - \Delta^3\frac{1-\delta_A}{6} + 2 \Delta^3\frac{1-\delta_B}{6}  -5\Delta \frac{1-\delta_B}{6}+\frac{3(1-\delta_B)}{6} \\&= \frac{1-\delta_A}{6} \Delta^2  + \Delta^3\left(\frac{1-\delta_A}{6} + \frac{2(1-\delta_B)}{6}\right)  -\frac{5\Delta(1-\delta_B)}{6} + \frac{3(1-\delta_B)}{6}\\
		&= \frac{ \Delta^2(1-\delta_A)}{6} +\Delta^3\left(\frac{3 - \delta_A - 2\delta_B}{6}\right) - \frac{5\Delta(1-\delta_B)}{6} + \frac{3(1-\delta_B)}{6}\\
		&= \frac{\Delta^2(1-\delta_A) + \Delta^3(3-\delta_A-2\delta_B) - 5\Delta(1-\delta_B)+3(1-\delta_B)}{6}
	\end{align*}
	Using $\delta_A \leq \delta_B\implies -\delta_A \geq -\delta_B$, we have:
	\begin{align*}
		&\frac{\Delta^2(1-\delta_A) + \Delta^3(3-\delta_A-2\delta_B) - 5\Delta(1-\delta_B) +3(1-\delta_B)}{6} \\&\geq \frac{\Delta^2(1-\delta_B) + \Delta^3(3-3\delta_B) -5\Delta(1-\delta_B)+3(1-\delta_B)}{6}\\
		&= \frac{1-\delta_B}{6} \left[\Delta^2 + 3\Delta^3 -5\Delta +3 \right]
	\end{align*}
	On the range $\Delta \in [0,1]$, the inner function is positive. To see this, one can either graph the function using a computer algebra system, or prove this analytically. For completeness: Note that $3+ \Delta^2 + 3\Delta^3 -5\Delta$ is bounded below by $3+ \Delta^2 + \Delta^3-5\Delta$. So it suffices to show that the latter is positive on $\Delta \in [0,1]$. So let $f(\Delta) = \Delta^2 + \Delta^3 - 5\Delta$. Then notice that $f(0) = 0$, $f(1) = -3$, and $f'(\Delta)$ is given by $3\Delta^2 + 2\Delta -5$. Since $\Delta <1$, $f'$ is always negative on $[0,1]$. But that means that, given that $f(0)=0$ and $f(1)=-3$, $f$ cannot go below $-3$ on the interval (otherwise it would have to have a positive derivative at some point to come back up to $-3$). 
	
	Hence, we conclude that $f(\Delta) \geq - 3 \ \forall \Delta \in [0,1]$, and so $3+ f(\Delta) \geq 0$. Tracing back through the inequalities, this gives $R_{\greedy}^{*\gsp} \geq R_{\greedy}^{*\vcg}$, and the claim follows. 
\end{proof}

\end{document}